\newif\if@golden  \@goldentrue
\newcommand{\f@ctor}{1}
\newlength{\aiv@width}  \setlength{\aiv@width}{210mm}
\newlength{\aiv@height} \setlength{\aiv@height}{297mm}
\newlength{\tmp@width}  \setlength{\tmp@width}{\aiv@width}
\newlength{\tmp@height} \setlength{\tmp@height}{\aiv@height}
\if@golden\setlength{\textwidth}{33pc}
  \else\setlength{\textwidth}{36pc}\fi
\relax\setlength{\textwidth}{29pc}\or
\or\setlength{\textwidth}{33pc}\fi
\relax\setlength{\textwidth}{31pc}\or
\or\setlength{\textwidth}{35pc}\fi\fi
\relax\renewcommand{\f@ctor}{53}
  \or\renewcommand{\f@ctor}{46}\or\renewcommand{\f@ctor}{43}\fi
\relax\renewcommand{\f@ctor}{51}\or
  \renewcommand{\f@ctor}{45}\or\renewcommand{\f@ctor}{42}\fi\fi
\relax \renewcommand{\f@ctor}{46}
  \or\renewcommand{\f@ctor}{43}\or\renewcommand{\f@ctor}{43}\fi
\relax\renewcommand{\f@ctor}{43}
  \or\renewcommand{\f@ctor}{40}\or\renewcommand{\f@ctor}{40}\fi\fi\fi
\numberwithin{equation}{section}
\DeclareMathAlphabet{\mathsf}{OT1}{phv}{m}{n}
\DeclareMathAlphabet{\mathrm}{OT1}{ptm}{m}{n}
\DeclareSymbolFont{ER}{U}{eur}{m}{n}
\DeclareSymbolFont{SY}{U}{psy}{m}{n}
\DeclareMathSymbol{,}{\mathpunct}{SY}{'054}
\DeclareMathSymbol{.}{\mathpunct}{SY}{'056}
\DeclareMathSymbol{:}{\mathpunct}{SY}{'072}
\DeclareMathSymbol{(}{\mathopen}{SY}{'050}
\DeclareMathSymbol{)}{\mathclose}{SY}{'051}
\DeclareMathSymbol{+}{\mathbin}{SY}{'053}
\DeclareMathSymbol{-}{\mathbin}{SY}{'055}
\DeclareMathSymbol{=}{\mathbin}{SY}{'075}
\DeclareMathSymbol{<}{\mathbin}{SY}{'074}
\DeclareMathSymbol{>}{\mathbin}{SY}{'076}
\DeclareMathSymbol{\leq}{\mathbin}{SY}{'243}
\DeclareMathSymbol{\geq}{\mathbin}{SY}{'263}
\DeclareMathSymbol{\nneq}{\mathbin}{SY}{'271}
\DeclareMathSymbol{\in}{\mathbin}{SY}{'316}
\DeclareMathSymbol{\nnotin}{\mathbin}{SY}{'317}
\DeclareMathSymbol{\times}{\mathbin}{SY}{'264}
\DeclareMathSymbol{\pm}{\mathbin}{SY}{'261}
\DeclareMathSymbol{\subset}{\mathbin}{SY}{'314}
\DeclareMathSymbol{\supset}{\mathbin}{SY}{'311}
\DeclareMathSymbol{\subseteq}{\mathbin}{SY}{'315}
\DeclareMathSymbol{\supseteq}{\mathbin}{SY}{'312}
\DeclareMathSymbol{/}{\mathord}{SY}{'057}
\DeclareMathSymbol{\ast}{\mathord}{SY}{'052}
\DeclareMathSymbol{\perp}{\mathord}{SY}{'136}
\renewcommand{\neq}{\nneq}
\renewcommand{\notin}{\nnotin}
\renewcommand{\theequation}{\arabic{section}.\arabic{equation}}
\newcommand{\Z}{\mathbb{Z}}
\newcommand{\R}{\mathbb{R}}
\newcommand{\C}{\mathbb{C}}
\newcommand{\N}{\mathbb{N}}
\newcommand{\cC}{\mathcal{C}}
\newcommand{\cD}{\mathcal{D}}
\newcommand{\cE}{\mathcal{E}}
\newcommand{\cF}{\mathcal{F}}
\newcommand{\cG}{\mathcal{G}}
\newcommand{\cH}{\mathcal{H}}
\newcommand{\cI}{\mathcal{I}}
\newcommand{\cK}{\mathcal{K}}
\newcommand{\cL}{\mathcal{L}}
\newcommand{\cM}{\mathcal{M}}
\newcommand{\cP}{\mathcal{P}}
\newcommand{\cT}{\mathcal{T}}
\newcommand{\cV}{\mathcal{V}}
\newcommand{\dist}{{\ensuremath{\mathrm{dist}}}}
\DeclareMathOperator{\Ker}{\mathrm{Ker}}
\newcommand{\1}{\mathbb{I}}
\newcommand{\spec}{{\ensuremath{\rm spec}}}
\DeclareMathSymbol{\emptyset}{\mathord}{SY}{'306}
\DeclareMathSymbol{\oplus}{\mathord}{SY}{'305}
\newtheorem{theorem}{Theorem}{\bf}{\it}
\newtheorem{proposition}[theorem]{Proposition}{\bf}{\it}
\newtheorem{corollary}[theorem]{Corollary}{\bf}{\it}
{\bf}{\it}
{\it}{\rm}
\newtheorem{lemma}[theorem]{Lemma}{\bf}{\it}
{\it}{\rm}
\newtheorem{definition}[theorem]{Definition}{\bf}{\it}
{\bf}{\it}
{\bf}{\it}
\newcommand{\ga}{\ensuremath \alpha}
\newcommand{\aitem}{\renewcommand{\labelenumi}{(\alph{enumi})}}
\title[Finite propagation speed for solutions of the wave equation on metric graphs]
{Finite propagation speed for solutions of the wave equation on metric graphs}
\author[V. Kostrykin]{Vadim Kostrykin}
\address{Vadim Kostrykin\\Institut f\"{u}r Mathematik, Johannes Gutenberg- Universit\"at,
D-55099 Mainz, Germany}
\email{kostrykin@mathematik.uni-mainz.de}
\author[J. Potthoff]{J\"urgen Potthoff}
\address{J\"urgen Potthoff\\ Institut f\"ur Mathematik, Universit\"at Mannheim,
D-68131 Mannheim, Germany}
\email{potthoff@math.uni-mannheim.de}
\author[R. Schrader]{Robert Schrader}
\address{Robert Schrader\\ Institut f\"{u}r
Theoretische Physik\\ Freie Universit\"{a}t Berlin, Arnimallee 14
\\ D-14195 Berlin,
Germany}
\email{robert.schrader@fu-berlin.de}
\keywords{Metric graphs, Laplace operators, wave equation, finite propagation speed}
\subjclass[2010]{34B45,35L05,35L20}
\date{June~4, 2011}
\begin{document}

\begin{abstract}
We provide a class of self-adjoint Laplace operators $-\Delta$ on metric graphs with
the property that the solutions of the associated wave equation satisfy the finite
propagation speed property. The proof uses energy methods, which are adaptions of
corresponding methods for smooth manifolds.
\end{abstract}

\maketitle

\section{Introduction}
Nature tells us that energy and information can only be transmitted with finite
speed, smaller or equal to the speed of light. The mathematical framework, which
allows an analysis and proof of this phenomenon, is the theory of hyperbolic
differential equations and in particular of the wave equation
\begin{equation*}\label{waveeq}
    \Box \psi=0
\end{equation*}
where  $\Box\; =\;\partial_t^2-\Delta$ is the d'Alembert operator with $-\Delta$ as
the Laplace operator, and $t\in\R$ is a time parameter. The result, which may be
obtained, runs under the name \emph{finite propagation speed}. The configuration space
and hence the context, within which the wave operator and finite propagation speed can
be discussed, may be an arbitrary manifold in which the notions both of a distance
between two points and of a Laplace operator makes sense. In more detail, given the
Laplacian $-\Delta$ and hence the associated d'Alembert operator, the central
quantity entering the construction and discussion of solutions of the wave equation
for given Cauchy data (initial conditions) is the wave kernel
\begin{equation*}
    W(t)=\frac{\sin\big(\sqrt{-\Delta}t\bigr)}{\sqrt{-\Delta}},\qquad t\in\R.
\end{equation*}
Let $W(t)(p,q)$ denote the associated integral kernel. Then finite propagation speed
is a general result on hyperbolic equations and the statement that $W(t)(p,q)$
vanishes whenever $|t|<{\rm distance}(p,q)$. For an extensive text book discussion,
see e.g.~\cite{Evans, Taylor, TaylorI}.

The d'Alembert operator and the associated Klein-Gordon operator $\Box+m^2$ play an
important r\^ole in relativistic quantum theories, see e.g.\ standard text books on
relativistic quantum field theory like \cite{Itzykson_Zuber, Schweber, Weinberg}.
Free quantum fields of mass $m>0$ satisfy the Klein-Gordon equation. Thus a quantum
version of finite propagation speed is the condition that space-like separated
observables commute. Since the fundamental article of Wightman \cite{Wightman},
this condition is considered as indispensable for any local relativistic quantum
theory \cite{Haag, Jost, Streater_Wightman}. Thus, the commutator of a
hermitean, free, massive,
scalar boson field $\Phi(x,t)$ on Minkowski space $\cM=\R^4$ is
given by the integral kernel associated to the wave kernel
\begin{equation*}
    W_m(t)=\frac{\sin\bigl(\sqrt{-\Delta+m^2}t\bigr)}{\sqrt{-\Delta+m^2}}
\end{equation*}
of the Klein-Gordon operator, that is
\begin{equation*}\label{qfinprop}
    \left[\Phi(\vec{x},t),\Phi(\vec{y},s)\right]= i\, W_m(t-s)(\vec{x},\vec{y})
\;\1 \qquad
    (\vec{x},t),\;(\vec{y},s)\in\cM.
\end{equation*}
Two events $(\vec{x},t)$ and $(\vec{y},s)$ are space-like separated if
$|\vec{x}-\vec{y}|^2>(t-s)^2$, in units, where the speed of light equals $1$. Thus local
commutativity in this context is the property
\begin{equation*}\label{qfinprop-i}
    W_m(t-s)(\vec{x},\vec{y})=0, \quad
        \mbox{\it if the events $(\vec{x},t)$ and $(\vec{y},s)$ are space-like
             separated,}
\end{equation*}
which precisely is finite propagation speed.

In this article we prove finite propagation speed when $\Delta$ is a self-adjoint
(s.a.) Laplace operator on a class of singular spaces, namely metric graphs. There
exists a whole family of such Laplace operators, for an extensive discussion see
\cite{KS1, KS8}. Previously and to the best of our knowledge finite propagation
speed on spaces with singularities has only been proved when the configuration space
has conical singularities \cite{Cheeger_Taylor}. As for other  applications we
mention that in the context of neuronal networks finite propagation speed on axons
has been discussed in \cite{Atay_Hutt}.

Recently one of the authors (R.S.) proved finite propagation speed for an arbitrary
s.a.\ Laplacian on star graphs (possibly having discrete eigenvalues) and on
arbitrary metric graphs under two restrictions : (i) $-\Delta\ge 0$, and (ii) at
least one of the points $p$ or $q$ is on one of the exterior edges \cite{Schrader}.
The proof used methods entirely different from the energy estimates usually employed
for the proof of finite propagation speed. It is based on properties of the (improper)
eigenfunctions of the Laplacians and their analytic properties as functions of the
spectral parameter. The proof we will give here, though only for a subclass of
Laplacians for which $-\Delta\ge 0$, is closer to the standard proof, which uses
energy estimates. The crucial new ingredient is an additional term in the standard
local energy functional and involves the boundary values at the vertices of the
graph for a given solution of the wave equation. Relevant for the proof of finite
propagation speed here as well as in \cite{Schrader} is that the self-adjoint
Laplacians are defined by \emph{local} boundary conditions, for details see
\cite{KS1,KS8}. In the usual contexts the self-adjointness of the Laplacian makes
the discussion of the existence and the uniqueness of solutions of the wave equation
for given $L^2$ Cauchy data relatively easy. The reason is that this self-adjointness
implies nice operator properties of the wave kernel
$W(t)$, which are easily obtained with help of the spectral
theorem. This is nicely worked out in \cite{Cheeger_Gromov_Taylor} and our presentation
has in a large part been motivated by the discussion given there. Then Sobolev inequalities
combined with the ellipticity of the Laplacian form
the tools for transforming $L^2$ properties of the solutions to analytic properties
like continuity and differentiability. Our discussion also uses (and needs) Sobolev
inequalities in order to control the boundary values since they enter the energy
functional. And the Laplacians we discuss have just this property that Sobolev
inequalities can be invoked. As a matter of fact, at the moment we do not know how
to deal with the other Laplacians as given and described in \cite{KS1,KS8}.

The article is organized as follows. In section \ref{sec:basics} we first recall
some basic facts about Laplacians on metric graphs and then we single out those we
shall mainly work with.
In section \ref{sec:uniqueness} we establish existence and
uniqueness of solutions of the wave equation for given Cauchy data. In section
\ref{sec:finprop} we introduce the local energy functional, which allows us to mimic
(and modify) standard proofs on finite propagation speed. The appendix provides the
Sobolev type estimates we need.

%%%%%%%%%%%%%%%%%%%%%%%%%%%%%%%%%%%%%%%%%%%%%%%%%%%%%%%%%%%%%%%%%%%%%%%%%%%%%%%%%%%%%%%%%%%%%%%%%%%%%%%%
\section{Basic Structures}\label{sec:basics}
%%%%%%%%%%%%%%%%%%%%%%%%%%%%%%%%%%%%%%%%%%%%%%%%%%%%%%%%%%%%%%%%%%%%%%%%%%%%%%%%%%%%%%%%%%%%%%%%%%%%%%%%

In this section we revisit the theory of Laplace operators on a metric graph $\cG$.
The material presented here is borrowed from the articles \cite{KS1}, \cite{KS8} and
\cite{KS9}.

A finite graph is a 4-tuple $\cG=(\cV,\cI,\cE,\partial)$, where $\cV$ is a finite set
of \emph{vertices}, $\cI$ is a finite set of \emph{internal edges}, $\cE$ is a
finite set of \emph{external edges}. For simplicity, from now on when we speak of a graph we will
mean a finite graph.

Elements in $\cI\cup\cE$ are called
\emph{edges}. The map $\partial$ assigns to each internal edge $i\in\cI$ an ordered
pair of (possibly equal) vertices $\partial(i):=(v_1,v_2)$ and to each external
edge $e\in\cE$ a single vertex $v$. The vertices $v_1=:\partial^-(i)$ and
$v_2=:\partial^+(i)$ are called the \emph{initial} and \emph{final} vertex of the
internal edge $i$, respectively. The vertex $v=\partial(e)$ is the initial vertex of
the external edge $e$. If $\partial(i)=(v,v)$, that is, $\partial^-(i)=\partial^+(i)$
then $i$ is called a \emph{tadpole}. To simplify the discussion, we will exclude tadpoles.
Two vertices $v$ and $v^\prime$ are called \emph{adjacent} if there is an internal
edge $i\in\cI$ such that $v\in\partial(i)$ and $v^\prime\in\partial(i)$. By definition
$\text{star}(v)\subseteq \cV$ of $v\in\cV$ is the set of vertices adjacent to $v$. A
vertex $v$ and the (internal or external) edge $j\in\cI\cup\cE$ are \emph{incident} if
$v\in\partial(j)$.

We do not require the map $\partial$ to be injective. In particular, any two
vertices are allowed to be adjacent to more than one internal edge and two different
external edges may be incident with the same vertex. If $\partial$ is injective and
$\partial^-(i)\neq\partial^+(i)$ for all $i\in\cI$, the graph $\cG$ is called
\emph{simple}. The \emph{degree} $\deg(v)$ of the vertex $v$ is defined as
\begin{equation*}
    \deg(v)=|\{e\in\cE\mid\partial(e)=v\}|+|\{i\in\cI\mid\partial^-(i)
        =v\}|+|\{i\in\cI\mid\partial^+(i)=v\}|,
\end{equation*}
that is, it is the number of (internal or external) edges incident with the given
vertex $v$ Throughout the whole work we will assume that the graph $\cG$ is
connected. In particular, this implies that any vertex of the graph $\cG$ has
nonzero degree, i.e., for any vertex there is at least one edge with which it is
incident.

The graph $\cG_{\mathrm{int}}=(\cV,\cI,\emptyset,\partial|_{\cI})$ will be called
the \emph{interior} of the graph $\cG=(\cV,\cI,$ $ \cE,\partial)$. It is obtained
from $\cG$ by eliminating all external edges $e$. Correspondingly, if $\cE\neq
\emptyset$, the graph
$\cG_{\mathrm{ext}}=(\partial\cV,\emptyset,\cE,\partial|_{\cE})$ is called the
\emph{exterior} of $\cG$. Here $\partial\cV\subseteq\cV$ is defined to be the set
consisting of those vertices $v$ which are of the form $v=\partial(e)$ for some
$e\in\cE$. We will view both $\cG_{\mathrm{int}}$ and $\cG_{\mathrm{ext}}$ as
subgraphs of $\cG$.

We will endow the graph with the following metric structure. Any internal edge
$i\in\cI$ will be associated with an interval $I_i=[0,a_i]$ with $a_i>0$ such that
the initial vertex of $i$ corresponds to $x=0$ and the final one to $x=a_i$. Any
external edge $e\in\cE$ will be associated with a half line $I_e=[0,+\infty)$. We
call the number $a_i$ the length of the internal edge $i$. We make the notational
convention that $a_e=\infty$ if $e\in\cE$. We will consider the set
$I_j,j\in\cE\cup\cI$ as a subset of $\cG$ and write $p\cong(j,x)$ for any point $p$
on $I_j$ with coordinate  $x$. The set of lengths $\{a_i\}_{i\in\cI}$, which will
also be treated as an element of $\R^{|\cI|}$, will be denoted by $\underline{a}$.
There is a canonical distance function $d(p,q),\;(p,q\in\cG)$ making the graph a metric
space. In particular $d(p,q)$ is continuous in both variables. So a graph $\cG$ endowed
with a metric structure $\underline{a}$ is called a \emph{metric graph}, denoted by
$(\cG,\underline{a})$. From now on the set $\underline{a}$ of lengths will be fixed
and we will simply speak of the metric space $\cG$. For given $p\in\cG$ and $t> 0$
let $B(p,t)$ denote the closed set of points in $\cG$ with distance from $p$ less or
equal to $t$. By definition its boundary $\partial B(p,t)$ is the set of points with
distance $t$ from $p$. Trivially $B(p,t)\subseteq B(p,t^\prime)$ for all
$t<t^\prime$ (with $B(p,t)\subset B(p,t^\prime)$ for all $t<t^\prime$ when $\cE\neq
\emptyset$) and
\begin{equation*}
    \lim_{t\uparrow \infty}B(p,t)=\bigcup_{0<t<\infty}B(p,t)=\cG.
\end{equation*}
The boundary set $\partial B(p,t)$ deserves special attention. As a function of $t$
the number of elements in $\partial B(p,t)$ is obviously piecewise constant. Here is
a partial list of properties. The number of elements in $\partial B(p,t)$ satisfies
\begin{align*}
    |\partial B(p,t)|
        =\begin{cases}
            2\qquad      &\mbox{for $p\in I_j\setminus \partial I_j$,
                                                $0<t<\dist(p,\partial I_j)$}\\
            \deg(p)\quad &\mbox{if $p$ is a vertex and $t<\dist(p,\text{star}(p))$}\\
            |\cE|\quad   &\max_{q\in\cG_{int}}d(p,q)<t.
         \end{cases}
\end{align*}
Boundaries at different times have vanishing intersection,
    \begin{equation*}\label{intersec}
\partial B(p,t)\cap\partial B(p,t^\prime)=\emptyset,\qquad t\neq t^\prime.
\end{equation*}
Figure \ref{fig:1} provides an example, which serves as a motivation for the
following definition.
\begin{definition}\label{def:coinc}
Given $p$ and $t$, a point $q\in\partial B(p,t)$ is a \emph{point of coincidence},
if for all $s<t$ sufficiently close to $t$ there are two
different points $q_l(s),q_r(s)\in \partial B(p,s)$ such that
\begin{equation*}
    \lim_{s\uparrow t}q_l(s)=\lim_{s\uparrow t}q_r(s)=q
\end{equation*}
holds. Let ${\bf Coin}(p,t)\subseteq \partial B(p,t)$ denote the subset of points of
coincidence. Given $p$, $t$ is \emph{critical} if the set ${\bf
Coin}(p,t)\cup(\partial B(p,t)\cap\cV)$ is non-empty. Given $p$, the set of
critical times $t>0$ is denoted by $\cT(p)$.
\end{definition}

Note that the set $\cT(p)$ contains the set of $t\ge 0$ at which $|\partial
B(p,t)|$ is discontinuous. $\cT(p)$ may be strictly larger.
As an example consider the case where $\cG$ is a star graph with two external edges and vertex $v$.
If $t$ is such that $v\in \partial B(p,t)$, that is $d(v,p)=t$,
then $|\partial B(p,t)|$ is continuous at $t$. More involved examples may easily
be constructed. ${\bf Coin}(p,t)\cap(\partial B(p,t)\cap\cV)$ may be non-empty. Also ${\bf
Coin}(p,t)\subset \cG_{int}$ and if ${\bf Coin}(p,t)\cap I_i\neq \emptyset$ for some
$t$ and $i\in\cI$, then ${\bf Coin}(p,t^\prime)\cap I_i= \emptyset$ for all
$t^\prime\neq t$. Similarly if $v\in\partial B(p,t)$, then $v\notin\partial
B(p,t^\prime)$ for all $t^\prime\neq t$. From these two observations one easily
deduces that $\cT(p)$ is a finite set with $|\cT(p)|\le |\cI|+|\cV|$.

Figure~\ref{fig:1} shows the example of a graph with two external edges $e_1$,
$e_2$, and two internal edges $i_1$, $i_2$ of equal length $a=a_{i_1}=a_{i_2}$.
There are two vertices $v_1$ and $v_2$. Consider a point $p$ on the edge $i_1$ with coordinate
$a/2$. The set $\partial B(p,t)$ consists of 2 points as long as $0<t\le a/2$, of
four points when $a/2<t<a$, of three points when $t= a$, and of two points, when
$t>a$.
Thus $\partial B(p,t)=\{q_l(t),q_r(t),q_{e_1}(t),q_{e_2}(t)\}$ when $a/2<t<a$ and
$\partial B(p,t)=\{q_{e_1}(t),q_{e_2}(t)\}$ when $t>a$. The two points $q_l(t)$ and
$q_r(t)$ at a distance $a/2<t<a$ from $p$, lie on the edge $i_2$, and collapse to an
\emph{antipodal point} $q$ (with coordinate $a/2$) of $p$, when $t$ increases to
$a$. So ${\bf Coin}(p,a)=\{q\}$ holds, while ${\bf Coin}(p,t)=\emptyset$ for all
$t\neq a$.
\begin{figure}[htb]
\begin{center}
\includegraphics{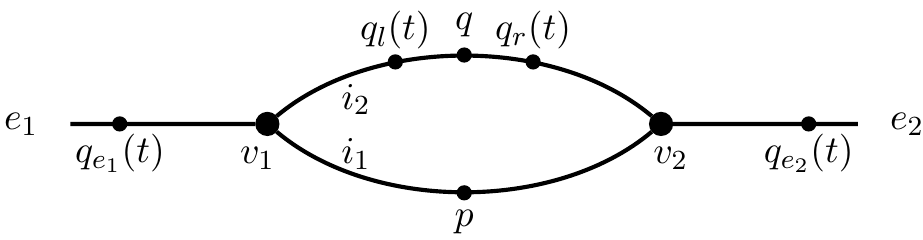}
\caption{A point $q$ of coincidence in $\partial B(p,t)$}\label{fig:1}
\end{center}
\end{figure}

In Riemannian geometry there is an analogue to the notion of a point of coincidence.
It arises in the context of geodesics and is given by the notion of a conjugate
point. Thus a time $t$, for which ${\bf Coin}(p,t)\neq \emptyset$ while
${\bf Coin}(p,t^\prime)= \emptyset$ for all $t^\prime<t$,
is the analogue of the injectivity radius, that is the radius at which
the exponential map ceases to be injective.

There is a canonical Lebesgue measure $\cG$, so that the notion of $L^p(\cG)$ spaces
of measurable functions on $\cG$ makes sense. More generally, we will consider the
spaces $L^p(\cF)$ where $\cF$ is any measurable subset of $\cG$ and use the notation
$$
    \int_{\cF}\psi(p)\,dp
$$
to describe the integral of an element $\psi\in L^1(\cF)$ and the notation
$$
    \langle \varphi,\psi\rangle_\cF=\int_{\cF}\overline{\varphi(p)}\,\psi(p)\,dp.
$$
to describe the scalar product of two elements $\varphi,\psi$ in the Hilbert space
$L^2(\cF)$. Also we write $\|\psi\|^2_\cF=\langle \psi,\psi\rangle_\cF$. Whenever
the context is clear we will simply write $\|\psi\|^2$ and $\langle
\varphi,\psi\rangle$ for $\|\psi\|^2_\cG$ and $\langle \varphi,\psi\rangle_\cG$
respectively. There is an alternative way to obtain $L^2(\cG)$, which is useful for
the discussion of Laplace operators. The central idea is to consider for any measurable
function $\psi$ on $\cG$ its restriction $\psi_i$ to the edge
$I_i,\;i\in\cE\cup\cI$.

So consider the Hilbert space
\begin{equation*}\label{hilbert}
    \cH\equiv\cH(\cE,\cI,\underline{a})=\cH_{\cE}\oplus\cH_{\cI},\qquad
    \cH_{\cE}=\bigoplus_{e\in\cE}\cH_{e},\qquad
    \cH_{\cI}=\bigoplus_{i\in\cI}\cH_{i},
\end{equation*}
where $\cH_{e}=L^2([0,\infty))$ for all $e\in\cE$ and $\cH_i=L^2([0,a_i])$ for all
$i\in\cI$. Then $L^2(\cG)\cong \cH$ holds and from now on we shall interchangeably
work with both notations. Moreover, to keep our notation simple, we shall identify
$I_i$ with the interval $[0,+\infty)$ if $i\in\cE$ and with $[0,a_i]$ if $i\in\cI$,
unless there is danger of confusion. Of course the spaces $L^p(\cG)$ have a
similar alternative description.

By $\cD_i$ with $i\in\cE\cup\cI$ we denote the set of all $\psi_i\in\cH_i$ such that
$\psi_i$ and its derivative $\psi^\prime_i$ are absolutely continuous, and its
second derivative $\psi^{\prime\prime}_i$ is square integrable. Let $\cD_i^0$
denote the set of those elements $\psi_i\in\cD_i$ which satisfy
\begin{equation*}
\begin{matrix}
\psi_i(0)=0\\[1ex] \psi^\prime_i(0)=0
\end{matrix} \quad \text{for $i\in\cE$, and}\quad
\begin{matrix}
\psi_i(0)=\psi_i(a_i)=0\\[1ex]
\psi^\prime_i(0)=\psi^\prime_i(a_i)=0
\end{matrix}
\quad\text{for $i\in\cI$}.
\end{equation*}
Let $\Delta^0$ be the differential operator
\begin{equation}\label{Delta:0}
\left(\Delta^0\psi\right)_i (x) = \psi^{\prime\prime}_i(x),\qquad
	x\in I_i,\,i\in\cI\cup\cE,
\end{equation}
with domain
\begin{equation*}
\cD^0=\bigoplus_{i\in\cE\cup\cI} \cD_i^0 \subset\cH.
\end{equation*}
It is straightforward to verify that $\Delta^0$ is a closed symmetric operator with
deficiency indices equal to $|\cE|+2|\cI|$.

We introduce an auxiliary finite-dimensional Hilbert space
\begin{equation}\label{K:def}
\cK\equiv\cK(\cE,\cI)=\cK_{\cE}\oplus\cK_{\cI}^{(-)}\oplus\cK_{\cI}^{(+)}
\end{equation}
with $\cK_{\cE}\cong\C^{|\cE|}$ and $\cK_{\cI}^{(\pm)}\cong\C^{|\cI|}$. Let
${}^d\cK$ denote the ``double'' of $\cK$, that is, ${}^d\cK=\cK\oplus\cK$.

For any
\begin{equation*}\label{domain}
\displaystyle\psi\in\cD:=\bigoplus_{i\in\cE\cup\cI} \cD_i
\end{equation*}
we set
\begin{equation*}\label{lin1}
[\psi]:=\underline{\psi}\oplus \underline{\psi}^\prime\in{}^d\cK,
\end{equation*}
with the boundary values $\underline{\psi}$ and $\underline{\psi}^\prime$ defined by
\begin{equation*}\label{lin1:add}
\begin{split}
	\underline{\psi}
&= \bigl( (\psi_e,\,e\in\cE),(\psi_i(0),\,i\in\cI),(\psi_i(a_i),\,i\in\cI)\bigr)^t,\\
	\underline{\psi}'
&= \bigl( (\psi_e',\,e\in\cE),(\psi_i'(0),\,i\in\cI),(-\psi_i'(a_i),\,i\in\cI)\bigr)^t.\\
\end{split}
\end{equation*}
Here the superscript $t$ denotes transposition. Let $J$ be the canonical symplectic
matrix on ${}^d\cK$,
\begin{equation*}\label{J:canon}
J=\begin{pmatrix} 0& \1 \\ -\1 & 0
\end{pmatrix}
\end{equation*}
with $\1$ being the identity operator on $\cK$. Consider the non-degenerate
Hermitian symplectic form
\begin{equation*}\label{omega:canon}
\omega([\varphi],[\psi]) := \langle[\varphi], J[\psi]\rangle,
\end{equation*}
where $\langle\cdot,\cdot\rangle$ denotes the inner product in ${}^d
\cK\cong\C^{2(|\cE|+2|\cI|)}$.

A linear subspace $\cM$ of ${}^d\cK$ is called \emph{isotropic} if the form $\omega$
vanishes on $\cM$ identically. An isotropic subspace is called \emph{maximal} if it
is not a proper subspace of a larger isotropic subspace. Every maximal isotropic
subspace has complex dimension equal to $|\cE|+2|\cI|$.

Let $A$ and $B$ be linear maps of $\cK$ onto itself. By $(A,B)$ we denote the linear
map from ${}^d\cK=\cK\oplus\cK$ to $\cK$ defined by the relation
\begin{equation*}
(A,B)\; (\chi_1\oplus \chi_2) := A\, \chi_1 + B\, \chi_2,
\end{equation*}
where $\chi_1,\chi_2\in\cK$. Set
\begin{equation*}\label{M:def}
\cM(A,B) := \Ker\, (A,B).
\end{equation*}
\begin{theorem}\label{theo:sa}
A subspace $\cM\subset{}^d\cK$ is maximal isotropic if and only if there exist
linear maps $A,\,B:\; \cK\rightarrow\cK$ such that $\cM=\cM(A,B)$ and
\begin{equation}\label{abcond}
\begin{split}
\mathrm{(i)}\; & \;\text{the map $(A,B)\;:\;{}^d\cK\rightarrow\cK$ has maximal
rank equal to
$|\cE|+2|\cI|$,}\qquad\\
\mathrm{(ii)}\; &\;\text{$AB^{\dagger}$ is self-adjoint,
    $AB^{\dagger}=BA^{\dagger}$.}
\end{split}
\end{equation}
\end{theorem}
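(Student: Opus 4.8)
The plan is to compute, for linear maps $A,B:\cK\to\cK$, the $\omega$-orthogonal complement of $\cM(A,B)=\Ker(A,B)$ explicitly, and then read off both implications by a dimension count. Throughout put $n:=|\cE|+2|\cI|=\dim\cK$, so that $\dim{}^d\cK=2n$; recall from the discussion above that every maximal isotropic subspace has dimension $n$. I will use two standard facts about the non-degenerate form $\omega$: for any subspace $\cN\subseteq{}^d\cK$, writing $\cN^{\perp_\omega}:=\{\eta\in{}^d\cK:\omega(\chi,\eta)=0\ \text{for all}\ \chi\in\cN\}$, one has $\dim\cN+\dim\cN^{\perp_\omega}=2n$, and $\cN$ is maximal isotropic if and only if $\cN=\cN^{\perp_\omega}$.

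The key step is the identity
\begin{equation*}
    \cM(A,B)^{\perp_\omega}=\Ran R,\qquad R\xi:=(-B^\dagger\xi)\oplus(A^\dagger\xi)\quad(\xi\in\cK).
\end{equation*}
Indeed, unwinding the definitions gives $\omega(\chi_1\oplus\chi_2,\eta_1\oplus\eta_2)=\langle\chi_1,\eta_2\rangle-\langle\chi_2,\eta_1\rangle=\langle\chi_1\oplus\chi_2,\eta_2\oplus(-\eta_1)\rangle$, so $\eta=\eta_1\oplus\eta_2$ lies in $\cM(A,B)^{\perp_\omega}$ if and only if $\eta_2\oplus(-\eta_1)$ is orthogonal, in the ordinary inner product of ${}^d\cK$, to $\Ker(A,B)$, that is, $\eta_2\oplus(-\eta_1)\in\Ran\big((A,B)^\dagger\big)$. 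Since $(A,B)^\dagger\xi=(A^\dagger\xi)\oplus(B^\dagger\xi)$, this means exactly $\eta=(-B^\dagger\xi)\oplus(A^\dagger\xi)$ for some $\xi\in\cK$. Moreover $(A,B)R\xi=(BA^\dagger-AB^\dagger)\xi$, so condition (ii) is equivalent to $(A,B)R=0$, i.e.\ to the inclusion $\cM(A,B)^{\perp_\omega}\subseteq\cM(A,B)$.

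Granting the key identity, the theorem follows quickly. If (i) and (ii) hold, then $(A,B)$ maps onto $\cK$, hence $\dim\cM(A,B)=2n-n=n$ and therefore $\dim\cM(A,B)^{\perp_\omega}=n$ as well; combined with the inclusion $\cM(A,B)^{\perp_\omega}\subseteq\cM(A,B)$ provided by (ii), this forces $\cM(A,B)=\cM(A,B)^{\perp_\omega}$, so $\cM(A,B)$ is maximal isotropic. Conversely, if $\cM$ is maximal isotropic then $\dim\cM=n$; pick a surjective linear map $\Phi:{}^d\cK\to\cK$ with $\Ker\Phi=\cM$ and set $A\chi:=\Phi(\chi\oplus 0)$, $B\chi:=\Phi(0\oplus\chi)$, so that $\Phi=(A,B)$, $\cM=\cM(A,B)$, and $(A,B)$ has rank $n$; this is (i). Maximality (together with $\cM$ being isotropic, hence $\cM\subseteq\cM^{\perp_\omega}$, and $\dim\cM^{\perp_\omega}=n$) gives $\cM=\cM^{\perp_\omega}=\Ran R$ by the key identity, hence $(A,B)R=0$, which is (ii).

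The only genuinely delicate point is the bookkeeping in the key identity: one must get the adjoint $(A,B)^\dagger$ and the sign in $\eta_2\oplus(-\eta_1)$ right, since these encode the symplectic structure $J$. After that, both implications are dimension counts, and the two auxiliary facts about non-degenerate (skew-)Hermitian forms invoked above are routine linear algebra that could be spelled out if desired.
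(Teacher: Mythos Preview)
Your argument is correct. The key identity $\cM(A,B)^{\perp_\omega}=\Ran R$ with $R\xi=(-B^\dagger\xi)\oplus(A^\dagger\xi)$ is verified correctly (the sign bookkeeping with $J$ is right), the equivalence of (ii) with $(A,B)R=0$ is a clean observation, and the two dimension counts dispose of both implications without difficulty.

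As for comparison with the paper: the paper does \emph{not} prove this theorem here; after the statement it simply writes ``A proof is given in \cite{KS8}.'' So there is no in-text argument to compare against. It is worth noting, however, that your map $R$ is exactly the column map $\begin{pmatrix}-B^\dagger\\A^\dagger\end{pmatrix}$ that the paper later uses (just after the theorem) in its explicit formula for the orthogonal projector $P_\cM$ onto $\cM$; that formula tacitly encodes the same fact $\cM=\Ran R$ that drives your proof, so your approach is in line with the structure the authors rely on downstream.
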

A proof is given in \cite{KS8}. The boundary conditions $(A,B)$ and
$(A^\prime,B^\prime)$ satisfying \eqref{abcond} are called equivalent if the
corresponding maximal isotropic subspaces coincide, that is,
$\cM(A,B)=\cM(A^\prime,B^\prime)$, and this in turn holds if and only if there is an
invertible $C$ such that $A^\prime=CA,B^\prime=CB$ is valid.

There is a one-to-one correspondence between all self-adjoint extensions of
$\Delta^0$ and maximal isotropic subspaces $\cM$ of ${}^d\cK$ (see \cite{KS1},
\cite{KS8}). In explicit terms, any self-adjoint extension of $\Delta^0$ is the
differential operator defined by \eqref{Delta:0} with domain
\begin{equation}\label{thru}
    {\rm Dom}(\Delta)=\{\;\psi\in\cD\;|\; [\psi]\in\cM\;\},
\end{equation}
where $\cM$ is a maximal isotropic subspace of ${}^d\cK$. Conversely, any maximal
isotropic subspace $\cM$ of ${}^d\cK$ defines through \eqref{thru} a self-adjoint
operator $\Delta_\cM $. In the sequel we will call the operator $\Delta_\cM$ a
Laplace operator on the metric graph $\cG$. Thus we have $\Delta_\cM
\psi=\psi^{\prime\prime}$ and in particular
\begin{equation}\label{normprimeprime}
    \|\psi^{\prime\prime}\|
        =\|\Delta_\cM \psi\| \qquad \mbox{for $\psi\in\cD(\Delta_\cM)$}.
\end{equation}

From the discussion above it follows immediately that any self-adjoint Laplace
operator on $\cH$ equals $\Delta_\cM$ for some maximal isotropic subspace $\cM$.
Moreover, $\Delta_\cM=\Delta_{\cM^\prime}$ if and only if $\cM=\cM^\prime$. For
short we will henceforth call $\cM$ a boundary condition. The role of the hermitian
symplectic form $\omega$ is clarified by the following observation. Consider the
hermitian symplectic form $\widehat{\omega}$ on $\cD$
$$
    \widehat{\omega}(\varphi,\psi)=(\Delta^0\varphi,\psi)-(\varphi,\Delta^0\psi).
$$
Then by Green's theorem
$$
    \widehat{\omega}(\varphi,\psi)=\omega([\varphi],[\psi])
$$
holds, such that $\widehat{\omega}$ vanishes on ${\rm Dom}(\Delta_\cM)$.

All operators $-\Delta_\cM$ are finite rank perturbations of each other and in
particular bounded from below. So any $-\Delta_\cM$ has absolutely continuous spectrum, equal to the
positive real axis and with multiplicity $|\cE|$. By definition the boundary condition $\cM$ is
\emph{real} if there are real matrices $A$ and $B$ such that $\cM=\cM(A,B)$. For
real $\cM$, the Laplacian $-\Delta_\cM$ is also real in the sense that for all
$\psi\in {\rm Dom}(-\Delta_\cM)$ also $\overline{\psi}\in {\rm Dom}(-\Delta_\cM)$
and $\overline{-\Delta_\cM\psi}=-\Delta_\cM\overline{\psi}$. For more details, see
\cite{KS1,KS8}.

For given $\cM=\cM(A,B)$ the orthogonal projector $P_\cM$ in $^d\cK$ onto $\cM$ is given as
\begin{equation*}\label{projection}
\begin{split}
P_{\cM} & = \begin{pmatrix} -B^\dagger \\ A^\dagger
\end{pmatrix}(A A^\dagger + B B^\dagger)^{-1} (-B, A)\\[1ex]
& = \begin{pmatrix} B^\dagger (A A^\dagger + B
B^\dagger)^{-1} B & - B^\dagger (A A^\dagger + B B^\dagger)^{-1} A \\
-A^\dagger (A A^\dagger + B B^\dagger)^{-1} B & A^\dagger (A A^\dagger + B
B^\dagger)^{-1} A
\end{pmatrix},
\end{split}
\end{equation*}
where the block matrix notation is used with respect to the orthogonal decomposition
${}^d\cK = \cK\oplus\cK$. With the same decomposition define
$$
\Omega=\begin{pmatrix} 0 & \1 \\ 0 & 0
\end{pmatrix}
$$
and set $\Omega_\cM=P_{\cM} \Omega P_{\cM} $, giving
\begin{equation}\label{OmegaM}
\Omega_{\cM} = - \begin{pmatrix} -B^\dagger \\ A^\dagger \end{pmatrix} (A
A^\dagger + B B^\dagger)^{-1} A B^\dagger (A A^\dagger + B B^\dagger)^{-1}
(-B,A),
\end{equation}
 a hermitian $2(|\cE|+2|\cI|)\times2(|\cE|+2|\cI|)$ matrix. Observe that $\Omega$ is
\emph{half} of the canonical symplectic matrix $J$ in the sense that
$J=\Omega-\Omega^\dagger$ holds. Now
$\Omega_\cM=\Omega_{\cM}^\dagger=P_{\cM}\Omega^\dagger P_{\cM}$ and hence $P_{\cM}J
P_\cM=0$, another way of stating that the space $\cM$ is isotropic.

We quote the following result from \cite{KS9}.
\begin{proposition}\label{prop:sesq}
For any maximal isotropic subspace $\cM\subset{}^d\cK$ the identity
\begin{equation}\label{sesquilinear:2}
\langle\varphi, -\Delta_\cM \psi\rangle_{\cG} =
\langle\varphi^\prime,\psi^\prime\rangle_{\cG}
+\langle[\varphi], \Omega_{\cM}[\psi]\rangle_{{}^d\cK}
\end{equation}
holds for all $\varphi,\psi\in {\rm Dom}(-\Delta_\cM )$.
\end{proposition}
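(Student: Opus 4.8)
The plan is to prove \eqref{sesquilinear:2} by an edgewise integration by parts, then to repackage the boundary contribution as $\langle[\varphi],\Omega[\psi]\rangle_{{}^d\cK}$, and finally to insert the projector $P_\cM$. First I would use that, by \eqref{Delta:0}, $(-\Delta_\cM\psi)_j=-\psi_j''$ on each edge, so
\[
  \langle\varphi,-\Delta_\cM\psi\rangle_\cG
    =\sum_{j\in\cE\cup\cI}\int_{I_j}\overline{\varphi_j(x)}\,\bigl(-\psi_j''(x)\bigr)\,dx .
\]
On an internal edge $I_i=[0,a_i]$ both $\varphi_i,\psi_i$ and their first derivatives are absolutely continuous (being elements of $\cD_i$), so integration by parts is legitimate and gives $\int_0^{a_i}\overline{\varphi_i'}\,\psi_i'\,dx+\overline{\varphi_i(0)}\,\psi_i'(0)-\overline{\varphi_i(a_i)}\,\psi_i'(a_i)$. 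On an external edge $I_e=[0,\infty)$ I would run the same computation on $[0,R]$ and let $R\to\infty$: the interior integrals converge by Cauchy--Schwarz, and the boundary term $\overline{\varphi_e(R)}\,\psi_e'(R)$ vanishes in the limit because $\varphi_e$ and $\psi_e'$ belong to $H^1([0,\infty))$ (all of $\varphi_e,\varphi_e',\varphi_e''$ and $\psi_e,\psi_e',\psi_e''$ are square integrable) and hence decay at infinity; this leaves $\int_0^\infty\overline{\varphi_e'}\,\psi_e'\,dx+\overline{\varphi_e(0)}\,\psi_e'(0)$.

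Summing over all edges yields $\langle\varphi,-\Delta_\cM\psi\rangle_\cG=\langle\varphi',\psi'\rangle_\cG+S$ with
\[
  S=\sum_{e\in\cE}\overline{\varphi_e(0)}\,\psi_e'(0)
    +\sum_{i\in\cI}\Bigl(\overline{\varphi_i(0)}\,\psi_i'(0)+\overline{\varphi_i(a_i)}\,\bigl(-\psi_i'(a_i)\bigr)\Bigr).
\]
Comparing the entries of $S$ with the definitions of $\underline\psi$ and $\underline\psi'$ — in particular the sign $-\psi_i'(a_i)$ in the final-vertex block of $\underline\psi'$ — one reads off $S=\langle\underline\varphi,\underline\psi'\rangle_\cK$. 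Since $\Omega$ acts on ${}^d\cK=\cK\oplus\cK$ by $\Omega\,(\underline\psi\oplus\underline\psi')=\underline\psi'\oplus 0$, this is exactly $\langle[\varphi],\Omega[\psi]\rangle_{{}^d\cK}$, so $\langle\varphi,-\Delta_\cM\psi\rangle_\cG=\langle\varphi',\psi'\rangle_\cG+\langle[\varphi],\Omega[\psi]\rangle_{{}^d\cK}$.

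To finish, I would replace $\Omega$ by $\Omega_\cM=P_\cM\Omega P_\cM$: by \eqref{thru} we have $[\varphi],[\psi]\in\cM$, hence $P_\cM[\varphi]=[\varphi]$ and $P_\cM[\psi]=[\psi]$, and since the orthogonal projector $P_\cM$ is self-adjoint, $\langle[\varphi],\Omega[\psi]\rangle=\langle P_\cM[\varphi],\Omega P_\cM[\psi]\rangle=\langle[\varphi],P_\cM\Omega P_\cM[\psi]\rangle=\langle[\varphi],\Omega_\cM[\psi]\rangle$, which is \eqref{sesquilinear:2}. The computation is otherwise routine; the two places that require genuine care are the vanishing of the boundary term at infinity on the external edges — which must be derived from square integrability of $\psi_e,\psi_e',\psi_e''$ rather than assumed — and the sign bookkeeping that makes the boundary sum equal $\langle[\varphi],\Omega[\psi]\rangle$ with the ``half'' matrix $\Omega$ rather than with $J$; this is just the one-sided version of the Green's-theorem identity noted before the proposition, in which the antisymmetric part $J=\Omega-\Omega^\dagger$ does not enter.
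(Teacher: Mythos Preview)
Your argument is correct. The paper does not prove this proposition but merely quotes it from \cite{KS9}; your edgewise integration by parts together with the insertion of $P_\cM$ via $[\varphi],[\psi]\in\cM$ is precisely the standard derivation, and the two points you flag (vanishing of the boundary term at infinity from $H^1$-decay, and the sign convention matching $\underline{\psi}'$ so that the boundary sum equals $\langle[\varphi],\Omega[\psi]\rangle$) are handled correctly.
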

Observe that by the identity \eqref{sesquilinear:2} $||\psi^\prime||_\cG$ is finite
for any $\psi\in {\rm Dom}(-\Delta_\cM )$. Indeed, the boundary values $[\psi]$ and
$[\psi]$ exist so that $\langle[\psi], \Omega_{\cM}[\psi]\rangle_{{}^d\cK}$ is well
defined and finite because ${\rm Dom}(-\Delta_\cM )\subset\cD$. This proposition
immediately gives the first part of
\begin{corollary}\label{cor:qpos}
If the boundary condition $\cM$ is such that $\Omega_\cM\ge 0$, then also $-\Delta_\cM\ge 0$.
If $\Omega_\cM> 0$, then $0$ is not an eigenvalue of $-\Delta_\cM $.
\begin{proof}
To prove the second part, assume there is $\psi\in {\rm Dom}(-\Delta_\cM)$ with
$-\Delta_\cM \psi=0$. \eqref{sesquilinear:2} gives $\psi^\prime=0$. So $\psi$ has to be
constant on each edge. But \eqref{sesquilinear:2} also implies $[\psi]=0$ which is only
possible, if $\psi=0$.
\end{proof}
\end{corollary}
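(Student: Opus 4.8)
The plan is to read off both assertions from the sesquilinear identity \eqref{sesquilinear:2} of Proposition~\ref{prop:sesq}, specialized to $\varphi=\psi$. First I would put $\varphi=\psi$ in \eqref{sesquilinear:2} to get
\[
\langle\psi,-\Delta_\cM\psi\rangle_\cG=\|\psi'\|_\cG^2+\langle[\psi],\Omega_\cM[\psi]\rangle_{{}^d\cK}
\qquad\text{for all }\psi\in{\rm Dom}(-\Delta_\cM).
\]
The term $\|\psi'\|_\cG^2$ is nonnegative (and finite, by the remark following Proposition~\ref{prop:sesq}), and the term $\langle[\psi],\Omega_\cM[\psi]\rangle_{{}^d\cK}$ is nonnegative precisely because $\Omega_\cM\ge0$. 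Hence $\langle\psi,-\Delta_\cM\psi\rangle_\cG\ge0$ on the whole operator domain, which for the self-adjoint operator $-\Delta_\cM$ is exactly the assertion $-\Delta_\cM\ge0$.

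For the second part I would argue by contradiction: suppose $\psi\in{\rm Dom}(-\Delta_\cM)$ with $-\Delta_\cM\psi=0$. Pairing with $\psi$ and using the displayed identity gives $\|\psi'\|_\cG^2+\langle[\psi],\Omega_\cM[\psi]\rangle_{{}^d\cK}=0$, a sum of two nonnegative numbers, so both vanish. From $\|\psi'\|_\cG=0$ it follows that $\psi$ is constant on each edge. From $\langle[\psi],\Omega_\cM[\psi]\rangle_{{}^d\cK}=0$, together with $[\psi]\in\cM$ (which holds for any $\psi\in{\rm Dom}(-\Delta_\cM)$ by \eqref{thru}) and the hypothesis that $\Omega_\cM$ is strictly positive on $\cM$, it follows that $[\psi]=0$; in particular $\psi_i(0)=\psi_i(a_i)=0$ for every internal edge and $\psi_e(0)=0$ for every external edge. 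An edgewise constant function with vanishing endpoint values is identically zero, so $\psi=0$ and $0$ is not an eigenvalue.

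The one point that needs care is the meaning of the hypothesis ``$\Omega_\cM>0$''. Because $\Omega_\cM=P_\cM\Omega P_\cM$ vanishes on the orthogonal complement of $\cM=\Ran P_\cM$, it is never positive definite on all of ${}^d\cK$, so the condition has to be understood as positive definiteness of the form $x\mapsto\langle x,\Omega_\cM x\rangle_{{}^d\cK}$ on the subspace $\cM$. Since the boundary data $[\psi]$ of any element of ${\rm Dom}(-\Delta_\cM)$ lie in $\cM$, this restricted positivity is all that the argument uses, so there is no real obstacle beyond this bit of bookkeeping; the proof is essentially one line once the two nonnegative contributions in \eqref{sesquilinear:2} are isolated.
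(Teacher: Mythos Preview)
Your argument is correct and is essentially the paper's own proof: both parts are read off directly from the identity \eqref{sesquilinear:2} with $\varphi=\psi$, and for the second part one concludes $\psi'=0$ and $[\psi]=0$, hence $\psi=0$. Your explicit remark that ``$\Omega_\cM>0$'' must be interpreted as positivity on $\cM=\Ran P_\cM$ (since $\Omega_\cM=P_\cM\Omega P_\cM$ annihilates $\cM^\perp$) is a useful clarification that the paper leaves implicit.
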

The converse does not hold, that is $-\Delta_\cM\ge 0$ does not imply $\Omega_\cM\ge 0$,
as Example~3.8 in \cite{KPS2} shows.

The next corollary is a trivial consequence of \eqref{OmegaM} and \eqref{sesquilinear:2}
in combination with Theorem \ref{theo:sa}.
\begin{corollary}\label{cor:psiprime}
If the boundary condition $\cM$ is such that $\Omega_\cM\ge 0$ and hence $-\Delta_\cM\ge 0$
is valid, then
\begin{equation}\label{psiprimebound1}
||\psi^\prime||_{\cG}\le ||\sqrt{-\Delta_\cM}\psi||_\cG
\end{equation}
holds.

If $\Omega_\cM>0$ and if the boundary value $[\psi]$ is non-vanishing, then the
inequality is strict. $\Omega_\cM= 0$ if and only if $\cM=\cM(A,B)$ is such that
$AB^\dagger=0$ holds and then \eqref{psiprimebound1} is actually an equality for all
$\psi\in {\rm Dom}(\sqrt{-\Delta_\cM})$.
\end{corollary}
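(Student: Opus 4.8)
The plan is to derive everything from the sesquilinear identity \eqref{sesquilinear:2} together with the spectral theorem. Fix a maximal isotropic $\cM$ with $\Omega_\cM\ge 0$, so that $-\Delta_\cM\ge0$ by the first part of Corollary~\ref{cor:qpos} and the self-adjoint square root $\sqrt{-\Delta_\cM}$ is well defined. First I would take $\varphi=\psi\in\Dom(-\Delta_\cM)$ in \eqref{sesquilinear:2}, obtaining
\begin{equation*}
\|\psi'\|_\cG^2 = \langle\psi,-\Delta_\cM\psi\rangle_\cG - \langle[\psi],\Omega_\cM[\psi]\rangle_{{}^d\cK}
  \le \langle\psi,-\Delta_\cM\psi\rangle_\cG = \|\sqrt{-\Delta_\cM}\,\psi\|_\cG^2,
\end{equation*}
where the middle inequality is exactly the hypothesis $\Omega_\cM\ge0$ and the final equality is the spectral calculus applied to the nonnegative self-adjoint operator $-\Delta_\cM$. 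This already gives \eqref{psiprimebound1} on $\Dom(-\Delta_\cM)$; to extend it to all $\psi\in\Dom(\sqrt{-\Delta_\cM})$ one uses that $\Dom(-\Delta_\cM)$ is a core for $\sqrt{-\Delta_\cM}$ (standard, via the spectral theorem) and that both sides are continuous in the graph norm $\|\cdot\|_\cG + \|\sqrt{-\Delta_\cM}\,\cdot\|_\cG$ — here one must check that $\psi\mapsto\|\psi'\|_\cG$ is controlled by this graph norm, which follows since for a Cauchy sequence in the graph norm the limit again lies in $\cD$ and the boundary values converge, so $\|\psi'\|_\cG$ passes to the limit.

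Next, the strictness claim. If $\Omega_\cM>0$ and $[\psi]\ne0$, then $\langle[\psi],\Omega_\cM[\psi]\rangle_{{}^d\cK}>0$ strictly, so the first inequality in the display above is strict, giving $\|\psi'\|_\cG<\|\sqrt{-\Delta_\cM}\,\psi\|_\cG$.

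For the characterization of $\Omega_\cM=0$, I would argue directly from the explicit formula \eqref{OmegaM}. Writing $G:=(AA^\dagger+BB^\dagger)^{-1}$, which is a strictly positive invertible operator on $\cK$ by the maximal-rank condition (i) of Theorem~\ref{theo:sa}, formula \eqref{OmegaM} reads $\Omega_\cM = -\binom{-B^\dagger}{A^\dagger}G\,AB^\dagger\,G\,(-B,A)$. Since the outer factor $(-B,A):{}^d\cK\to\cK$ is surjective (again by (i)) and $\binom{-B^\dagger}{A^\dagger}$ is its adjoint up to the sign convention, sandwiching by these maps is injective on the relevant range; hence $\Omega_\cM=0$ iff $G\,AB^\dagger\,G=0$ iff $AB^\dagger=0$ (using invertibility of $G$). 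When $AB^\dagger=0$, the boundary term in \eqref{sesquilinear:2} vanishes identically on $\Dom(-\Delta_\cM)$, so \eqref{psiprimebound1} becomes an equality there, and then extends to an equality on all of $\Dom(\sqrt{-\Delta_\cM})$ by the same density and continuity argument as above.

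The step I expect to be the main obstacle is not the algebra but the \emph{extension by density}: one must justify that $\psi\mapsto\|\psi'\|_\cG$ is finite and closable-compatible on $\Dom(\sqrt{-\Delta_\cM})$, i.e. that approximating $\psi$ in the graph norm of $\sqrt{-\Delta_\cM}$ forces the edgewise $H^1$ norms to converge. This is where one genuinely needs the structure of the Laplacians under consideration — it is precisely the point, flagged in the introduction, where the Sobolev-type estimates of the appendix enter, controlling the boundary values $[\psi]$ (hence the $\Omega_\cM$-term) by the graph norm. With that in hand, the remaining claims are immediate consequences of \eqref{sesquilinear:2}, the spectral theorem, and the formula \eqref{OmegaM}, exactly as the paper indicates by calling the corollary ``a trivial consequence.''
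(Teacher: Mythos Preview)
Your approach is correct and is exactly what the paper has in mind: it states the corollary as ``a trivial consequence of \eqref{OmegaM} and \eqref{sesquilinear:2} in combination with Theorem~\ref{theo:sa}'' and gives no further proof, and you use precisely those three ingredients in the obvious way.

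One small remark: you overstate the difficulty of the density step. Once you have $\|\psi'\|_\cG\le\|\sqrt{-\Delta_\cM}\,\psi\|_\cG$ on the core $\mathrm{Dom}(-\Delta_\cM)$, this inequality itself shows that $\psi\mapsto\psi'$ is bounded from $(\mathrm{Dom}(-\Delta_\cM),\|\cdot\|+\|\sqrt{-\Delta_\cM}\cdot\|)$ into $L^2(\cG)$, hence extends by continuity to $\mathrm{Dom}(\sqrt{-\Delta_\cM})$; the closedness of the edgewise derivative in $L^2$ (if $\psi_n\to\psi$ and $\psi_n'\to g$ in $L^2$ then $g=\psi'$) then identifies the extension. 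No recourse to the appendix Sobolev estimates is needed at this point --- those are used in the paper for the time-regularity of boundary values, not here.
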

The inequality \eqref{psiprimebound1} for the norm of the first derivative compares
with the identity \eqref{normprimeprime} for the second derivative.
Characterizations of maximal isotropic subspaces $\cM(A,B)$ satisfying
$AB^\dagger=0$ are given in \cite{KPS1}, Proposition 2.4 and \cite{KS8} Remark 3.9.
There also examples are provided.

With respect to the decomposition \eqref{K:def} any vector $\chi$ in $\cK$ can be
represented as
\begin{equation}\label{chi}
	\chi = \bigl((\chi_e,\,e\in\cE), (\chi^-_i,\,i\in\cI), (\chi^+_i,\,i\in\cI)\bigr)^t.
\end{equation}
Consider the orthogonal decomposition
$$
\cK=\bigoplus_{v\in\cV}\cL_v
$$
with $\cL_v$ being the linear subspace of dimension $\deg(v)$ spanned by those
elements $\chi$ in $\cK$ of the form \eqref{chi} which satisfy
\begin{equation*}\label{chiv}
\begin{split}
    \chi_e   &=0,\quad \mbox{if $e\in\cE$ is not incident with $v$,}\\
    \chi^-_i &=0,\quad \mbox{if $v$ is not an initial vertex of $i\in\cI$,}\\
    \chi^+_i &=0,\quad \mbox{if $v$ is not a final vertex of $i\in\cI$.}
\end{split}
\end{equation*}
Set $^d \cL_v:=\cL_v\oplus\cL_v\cong \C^{2\deg(v)}$. Obviously each $^d \cL_v$
inherits in a canonical way a symplectic structure from $^d\cK$ such that the
orthogonal and symplectic decomposition
$$
    \bigoplus_{v\in\cV}{}^d\cL_v={}^d\cK
$$
holds. If the boundary condition $\cM=\cM(A,B)$ is \emph{local} (see \cite{KS8} for
more details), then $A$ and $B$ have a decomposition
$$
    A=\bigoplus_{v\in\cV}A_v,\qquad B=\bigoplus_{v\in\cV}B_v,
$$
that is $A_v$ and $B_v$ are linear transformations on $\cL_v$, such that $(A_v,B_v)$
is a linear transformation in $^d\cL_v$. So correspondingly there is a decomposition
$$
    \cM=\bigoplus_{v\in\cV}\cM_v
$$
with $\cM_v=\cM(A_v,B_v)=\ker(A_v,B_v)$. Let $Q_v$ denote the orthogonal projection
in $\cK$ onto $\cL_v$ and $^dQ_v:=Q_v\oplus Q_v$ its double, that is the orthogonal
projection in $^d\cK$ onto $^d\cL_v$. Then we have
\begin{lemma}\label{lem:comm} The relation
\begin{equation}\label{pv}
    ^dQ_v\,P_\cM=P_\cM\;^dQ_v
\end{equation}
holds and equals the orthogonal projection $P_v$ in $^d\cK$ onto $\cM_v$.
\end{lemma}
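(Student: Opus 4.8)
The statement has two parts: that the operators $^dQ_v$ and $P_\cM$ commute, and that their product is the orthogonal projection onto $\cM_v$. The plan is to exploit the block-diagonal structure of $A$ and $B$ with respect to the decomposition $\cK=\bigoplus_{v\in\cV}\cL_v$ together with the explicit formula for $P_\cM$ recorded before the lemma. First I would observe that since $A=\bigoplus_v A_v$ and $B=\bigoplus_v B_v$, the map $AA^\dagger+BB^\dagger$ also decomposes as $\bigoplus_v (A_vA_v^\dagger+B_vB_v^\dagger)$, and the same for its inverse; crucially each summand acts on $\cL_v$ and therefore commutes with every $Q_w$. Likewise the matrices $(-B,A)$ and $\begin{pmatrix}-B^\dagger\\ A^\dagger\end{pmatrix}$ are built block-diagonally out of the $\cL_v$-pieces, hence commute with $^dQ_w$ (for $(-B,A)$ one must be slightly careful: it maps $^d\cK\to\cK$, but $Q_v(-B,A)=(-Q_vB,Q_vA)=(-B_vQ_v,A_vQ_v)=(-B,A)\,^dQ_v$ on the summand, and summing over $v$ gives $Q_v(-B,A)=(-B,A)\,^dQ_v$ for every fixed $v$).

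Given those commutation relations, the identity $^dQ_vP_\cM=P_\cM\,^dQ_v$ follows by inspecting the formula
\[
P_{\cM}=\begin{pmatrix}-B^\dagger\\ A^\dagger\end{pmatrix}(AA^\dagger+BB^\dagger)^{-1}(-B,A)
\]
and moving $^dQ_v$ through each of the three factors from left to right (or right to left): each commutation step is one of the relations just established. This is essentially a routine calculation once the block structure is in place. To identify the resulting operator, I would note that $^dQ_vP_\cM$ is idempotent (because $P_\cM$ and $^dQ_v$ are commuting projections, their product is a projection) and self-adjoint (same reason), so it is an orthogonal projection; its range is $^dQ_v(\Ran P_\cM)=\,^dQ_v\,\cM$. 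It remains to check $^dQ_v\,\cM=\cM_v$. The inclusion $\cM_v\subseteq\cM$ is clear since a vector supported on $^d\cL_v$ and annihilated by $(A_v,B_v)$ is annihilated by $(A,B)$; and $\cM_v=\,^dQ_v\,\cM_v\subseteq\,^dQ_v\,\cM$. Conversely, for $\xi\in\cM$ write $\xi=\sum_w\,^dQ_w\xi$; applying $(A,B)$ and using the block-diagonal decomposition $(A,B)=\bigoplus_w(A_w,B_w)$ one gets $0=(A,B)\xi=\sum_w(A_w,B_w)\,^dQ_w\xi$ with the $w$-th term living in $\cL_w$, so each term vanishes, i.e. $^dQ_w\xi\in\cM_w$ for every $w$; in particular $^dQ_v\xi\in\cM_v$. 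Hence $^dQ_v\,\cM\subseteq\cM_v$, giving equality, and the range of the projection $^dQ_vP_\cM$ is exactly $\cM_v$, i.e. it is $P_v$.

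The only point requiring genuine care is the first paragraph: making precise that locality of $\cM$, as defined in \cite{KS8}, really does furnish the block decompositions $A=\bigoplus_v A_v$, $B=\bigoplus_v B_v$ with $A_v,B_v$ acting on $\cL_v$ — but this is exactly the content of the paragraph preceding the lemma, where that decomposition is stated, so it may simply be invoked. I do not expect a substantive obstacle; the lemma is a structural observation whose proof is a direct manipulation of the projection formula combined with the obvious fact that block-diagonal operators commute with the coordinate projections onto the blocks.
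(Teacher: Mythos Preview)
Your proposal is correct and rests on the same idea as the paper's proof, namely that the block-diagonal decompositions $A=\bigoplus_v A_v$, $B=\bigoplus_v B_v$ force $P_\cM$ itself to be block-diagonal with respect to $^d\cK=\bigoplus_v{}^d\cL_v$. The paper packages this as a commutative diagram (noting that $P_\cM|_{{}^d\cL_v}$ is the projection in ${}^d\cL_v$ onto $\cM_v$, so ${}^dQ_v\circ P_\cM=P_{\cM_v}\circ{}^dQ_v$), whereas you push ${}^dQ_v$ through the explicit formula for $P_\cM$ and then identify the range by the kernel argument ${}^dQ_v\cM=\cM_v$; both routes are equally short and valid.
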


Note that \eqref{pv} implies that $^dQ_v\,P_\cM$ is an orthogonal projector. Also
$P_vP_{v^\prime}=P_{v^\prime}P_v$, so in particular the $P_v$'s commute pairwise.

\begin{proof}
Let $\cP_{\cM_v}$ denote the orthogonal projection in $^d\cL_v$ onto $\cM_v$, that
is $\cP_{\cM_v}$ is obtained in a similar way from $(A_v,B_v)$ as is $\cP_{\cM}$
from the pair $(A,B)$ with $^d\cK$ being replaced by $^d\cL_v$. Denote by $P_v$ the
orthogonal projection in $^d\cK$ onto $\cM_v$. Then $P_{\cM_v}$ is the restriction
of $\cP_{\cM}$ to $^d\cL_v$, $P_{\cM_v}=\cP_{\cM}|_{^d\cL_v}$. Similarly we view $^d
Q_v$ as a map from $^d\cK$ onto $^d\cL_v$ and its restriction ${}^dQ_{v}|_\cM$ to
$\cM$ as map from $\cM$ onto $\cM_v$. Then we have a commutative diagram
$$
\begin{array}{ccc}
\quad\,^d\cK&\mathop{\put(-10,4){\vector(2,0){50}}}\limits^{\qquad\cP_\cM}& \cM\\
&&\\
\put(-20,0){$^{{}^dQ_v}$}\put(5,20){\vector(0,-1){30}}&&
\put(35,20){\vector(0,-1){30}}\qquad\qquad ^{ {}^dQ_{v}|_{\cM}}\\
&&\\
\quad\, ^d\cL_v&\mathop{\put(-10,4){\vector(2,0){50}}}\limits^{\qquad\cP_{\cM_v}
=\cP_{\cM}|_{^d\cL_v}}&\cM_v
\end{array}
$$
from which \eqref{pv} and the equality $P_v={}^dQ_v\,P_\cM$ follow.
\end{proof}

Correspondingly we obtain
\begin{equation*}\label{summ}
    \Omega_\cM=\bigoplus_{v\in\cV} \Omega_{v}
\end{equation*}
with $\Omega_{v}=P_v\Omega=\Omega P_v=P_v\Omega P_v$. Since $\cM_v$ is a subspace of
$\cM$ also
\begin{equation}\label{pmv}
    P_v\cP_{\cM}=\cP_{\cM}P_v=^dQ_v\cP_{\cM}=\cP_{\cM}\;^dQ_v=P_v
\end{equation}
holds. For any subset $\cV^\prime$ of $\cV$ we introduce the orthogonal projection
$P_{\cV^\prime}=\bigoplus_{v\in\cV^\prime} P_v$. These $P_{\cV^\prime}$'s commute
pairwise. Finally set
\begin{equation}\label{qv}
\Omega_{\cM,\,\cV^\prime}=P_{\cV^\prime}\Omega_{\cM}
=P_{\cV^\prime}\Omega_{\cM}P_{\cV^\prime}
=\Omega_{\cM}P_{\cV^\prime}
\end{equation}
such that $\Omega_{\cM,\,\cV}=\Omega_\cM$ holds. More generally
\begin{equation*}\label{qp}
    \Omega_{\cM,\,\cV^{\prime\prime}}
        =P_{\cV^{\prime\prime}}\Omega_{\cM,\,\cV^\prime}P_{\cV^{\prime\prime}}
\end{equation*}
is valid for any pair $\cV^{\prime\prime}\subseteq \cV^\prime$. The following lemma
is trivial
\begin{lemma}\label{lem:qs}
$\Omega_\cM\ge 0$ is valid if and only if $\Omega_{\cM,\,v}\ge 0$ for all $v\in\cV$. Similarly
$\Omega_\cM>0$ holds if and only if $\Omega_{\cM,\,v}> 0$ for all $v\in\cV$.
If $\Omega_\cM\ge 0$ then $0\le \Omega_{\cM,\,\cV^{\prime\prime}}\le \Omega_{\cM,\,\cV^{\prime}}$ for all
$\cV^{\prime\prime}\subseteq \cV^\prime$.
\end{lemma}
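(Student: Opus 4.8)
The plan is to reduce all three assertions to the orthogonal direct-sum decomposition $\Omega_{\cM}=\bigoplus_{v\in\cV}\Omega_{v}$ recorded just above the lemma, after first identifying the blocks $\Omega_{v}$ with the operators $\Omega_{\cM,\,v}$. I would begin by noting that $P_{v}P_{v^\prime}=0$ for $v\neq v^\prime$: the ranges $\cM_{v}\subseteq{}^{d}\cL_{v}$ are mutually orthogonal because the ${}^{d}\cL_{v}$ are. From $\Omega_{v}=P_{v}\Omega P_{v}$ one then gets $P_{v}\Omega_{w}=0$ for $w\neq v$ and $P_{v}\Omega_{v}=\Omega_{v}$, so by \eqref{qv} and the decomposition of $\Omega_{\cM}$, $\Omega_{\cM,\,v}=P_{v}\Omega_{\cM}=\Omega_{v}$, and more generally $\Omega_{\cM,\,\cV^\prime}=\bigoplus_{v\in\cV^\prime}\Omega_{v}$. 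Each $\Omega_{v}$ has range contained in $\cM_{v}$ and annihilates $\cM_{v}^{\perp}$, hence it acts within ${}^{d}\cL_{v}$; throughout, $\Omega_{\cM}\ge 0$ and $\Omega_{\cM}>0$ are read, as in Corollaries \ref{cor:qpos} and \ref{cor:psiprime}, as nonnegativity (resp.\ positive-definiteness) of the form on $\cM=\bigoplus_{v}\cM_{v}$, and similarly $\Omega_{\cM,\,v}\ge 0$ ($>0$) as a form on $\cM_{v}$.

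With this in hand the first equivalence is immediate: for $\chi=\bigoplus_{v}\chi_{v}\in{}^{d}\cK$ one has $\langle\chi,\Omega_{\cM}\chi\rangle=\sum_{v}\langle\chi_{v},\Omega_{v}\chi_{v}\rangle$, and a finite sum of reals indexed by $v$ is nonnegative for every choice of the $\chi_{v}$ if and only if each summand is. For the strict statement, if every $\Omega_{\cM,\,v}>0$ and $0\neq\chi\in\cM$, then some $\chi_{v}\neq 0$ and $\langle\chi,\Omega_{\cM}\chi\rangle=\sum_{v}\langle\chi_{v},\Omega_{v}\chi_{v}\rangle>0$, since all summands are $\ge 0$ by the first equivalence while the $v$-th is $>0$; conversely, restricting a positive-definite $\Omega_{\cM}$ to the summand $\cM_{v}$ gives $\Omega_{\cM,\,v}>0$.

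For the last assertion, I would assume $\Omega_{\cM}\ge 0$, so that $\Omega_{v}\ge 0$ for all $v$ by the first part, and then for $\cV^{\prime\prime}\subseteq\cV^\prime$ write $\Omega_{\cM,\,\cV^{\prime\prime}}=\bigoplus_{v\in\cV^{\prime\prime}}\Omega_{v}$ and $\Omega_{\cM,\,\cV^\prime}-\Omega_{\cM,\,\cV^{\prime\prime}}=\bigoplus_{v\in\cV^\prime\setminus\cV^{\prime\prime}}\Omega_{v}$, each a direct sum of nonnegative blocks on mutually orthogonal subspaces and hence nonnegative, which is exactly $0\le\Omega_{\cM,\,\cV^{\prime\prime}}\le\Omega_{\cM,\,\cV^\prime}$. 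There is no genuine obstacle: as the text asserts, the lemma is trivial once the identification $\Omega_{\cM,\,v}=\Omega_{v}$ and the mutual orthogonality of the ${}^{d}\cL_{v}$ are recorded; the only point deserving care is the convention that the strict inequalities refer to positive-definiteness on $\cM$ (resp.\ $\cM_{v}$) rather than on all of ${}^{d}\cK$.
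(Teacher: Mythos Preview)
Your argument is correct and is exactly the straightforward verification the paper has in mind; the paper itself gives no proof, merely declaring the lemma ``trivial'' after recording the decomposition $\Omega_\cM=\bigoplus_{v\in\cV}\Omega_v$ and the definition~\eqref{qv}. Your careful remark that the strict inequality must be read as positive-definiteness on $\cM$ (respectively $\cM_v$) rather than on all of ${}^d\cK$ is well taken, since $\Omega_\cM$ annihilates $\cM^\perp$.
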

%%%%%%%%%%%%%%%%%%%%%%%%%%%%%%%%%%%%%%%%%%%%%%%%%%%%%%%%%%%%%%%%%%%%%%%%%%%%%%%%%%%%%%%%%%%%%%%%%%%%%

\section{Existence and Uniqueness of Solutions of the Wave Equation}\label{sec:uniqueness}

%%%%%%%%%%%%%%%%%%%%%%%%%%%%%%%%%%%%%%%%%%%%%%%%%%%%%%%%%%%%%%%%%%%%%%%%%%%%%%%%%%%%%%%%%%%%%%%%%%%%%
Throughout this section we fix a maximal isotropic subspace $\cM$ of $^d\cK$.
We introduce the wave kernel
\begin{equation*}\label{wavekernel}
    W(t)=W_\cM(t)=\frac{\sin\bigl(\sqrt{-\Delta_\cM}t\bigr)}{\sqrt{-\Delta_\cM}},\qquad t\in\R
\end{equation*}
which is defined through the spectral representation of the self-adjoint operator
$-\Delta_\cM$ and operator calculus. Note that at the moment we do not (need to)
assume $-\Delta_\cM$ to be non-negative. $W(t)$ is bounded and self-adjoint for all
$t\in \R$ with $W(0)=0$. Set
\begin{equation*}\label{infspec}
    \rho_\cM(t)=\cosh\bigl(t \sqrt{-\varepsilon_\cM}\bigr)
\end{equation*}
with
\begin{equation*}\label{infspec1}
    \varepsilon_\cM=\min\bigl(\inf\spec(-\Delta_\cM),0\bigr).
\end{equation*}
$\rho_\cM(t)=1$ for all $t\in\R$, if $-\Delta_\cM$ is non-negative.
$W(t)$ is a bounded operator and norm continuous
in $t$
\begin{equation}\label{wtnorm}
    \|W(t)\|\le |t|\,\rho_\cM(t),\qquad \|W(t)-W(t^\prime)\|
\le |t-t^\prime|\max(\rho_\cM(t),\rho_\cM(t^\prime)).
\end{equation}
We will also consider the time derivatives of the wave kernel $W(t)$:
\begin{equation}\label{wavekernel1}
\begin{split}
    \partial_t W(t)   &= \cos\bigl(\sqrt{-\Delta_\cM}t\bigr),\\[.5ex]
    \partial_t^2 W(t) &= -\sqrt{-\Delta_\cM}\sin\bigl(\sqrt{-\Delta_\cM}t\bigr)
							= \Delta_\cM W(t),
\end{split}
\end{equation}
where the derivatives are taken in the strong operator topology.
$\partial_t W(t)$ is a bounded self-adjoint operator for all $t$ with
operator norm bound
\begin{equation}\label{twtnorm}
    \|\partial_t W(t)\|\le \rho_\cM(t),\qquad t\in\R.
\end{equation}
The estimates \eqref{wtnorm} and \eqref{twtnorm} follow from the spectral theorem and
the trivial bounds
$$
\sup_{x\in\R}\left|\frac{\sin x}{x}\right|\le 1,
                \;\sup_{0\le x\le y}\frac{\sinh x}{x}\le \cosh y,\; y\ge 0.
$$
It will be convenient to introduce the following notation. Choose $m^2\ge 0$ such
that $-\Delta_{\cM}+m^2$ is non-negative. In particular, if $-\Delta_{\cM}\ge 0$,
we set $m^2=0$.

For any $\ga\ge 0$ the domain $\text{Dom}\bigl((-\Delta_{\cM}+m^2)^{\ga/2}\bigr)$ may be
equipped with the inner product
$$
\langle\langle\phi,\psi\rangle\rangle=\langle\phi,\psi\rangle_\cG
+\langle(-\Delta_{\cM}+m^2)^{\ga/2}\phi,(-\Delta_{\cM}+m^2)^{\ga/2}\psi\rangle_\cG,
$$
turning it into a Hilbert, which we denote by $H^\ga_{\cM,m^2}\subset L^2(\cG)$. The relation
$H^\ga_{\cM,m^2}\subset H^{\ga^\prime}_{\cM,m^2}$ whenever $\ga^\prime\le \ga$ is obvious.
By construction the semi-norm on $H^\ga_{\cM,m^2}$
$$
\|\psi \|_{\cM,\ga} =\|(-\Delta_\cM+m^2)^{\ga/2}\psi \|_{\cG}
$$
satisfies $\|\psi \|_{\cM,\ga}^2\le \langle\langle\psi,\psi\rangle\rangle$ and is hence
a continuous map from $H^\ga_{\cM,m^2}$ onto the non-negative numbers.
It is a norm if and only if 0 is not am eigenvalue of $-\Delta_\cM+m^2$.
For varying $\ga$ these Sobolev (semi-)norms will constitute the basic tools when
we estimate solutions of the wave equation in terms of the initial data and to which we turn now.

For $k\in\N$, an iteration of~\eqref{wavekernel1} yields
\begin{equation}\label{wavekernelg}
\begin{split}
	\partial_t^{2k+1} W(t) &= \Delta_\cM^k \cos\bigl(\sqrt{-\Delta_\cM}t\bigr)
							= \Delta_\cM^k \partial_t W(t),\\
	\partial_t^{2k} W(t) &= (-1)^k \bigl(\sqrt{-\Delta_\cM}\bigr)^{2k-1}
							\sin\bigl(\sqrt{-\Delta_\cM}t\bigr)
= \Delta_\cM^k W(t).
\end{split}
\end{equation}
For $n\in\Z$ set $n_+ = \max(n,0)$. An easy application of the spectral
theorem gives the following
\begin{lemma}	\label{lem_wkernel}
{\ }
\begin{enumerate}\aitem
	\item{For all $n\in\N_0$, $t\in\R$, $\partial_t^n W(t)$ is a self-adjoint
		  operator with domain 	
		  \begin{equation*}\label{Wkdomain}
    			\text{Dom}\bigl(\partial_t^n W(t)\bigr)
                          = H^{(n-1)_+}_{\cM,m^2}\;,
		  \end{equation*}
		  commuting with $\Delta_\cM$, and mapping $H^{(l+n-1)_+}_{\cM,m^2}$ into
		  $H^{l}_{\cM,m^2}$, $l\in\N_0$. Moreover, $\partial_t^n W(t)$ is strongly
		  continuous in $t$ as an operator on $H^{(n-1)_+}_{\cM,m^2}$.}
	\item{For all $n\in\N_0$, $t\in\R$, the relation
		  \begin{equation*}	\label{WE}
				\Box_\cM \partial_t^n W(t) = 0
		  \end{equation*}	
		  holds on $H^{n+1}_{\cM,m^2}$, where $\Box_\cM$ is the \emph{d'Alembert
		  operator} associated with $\Delta_\cM$:	
		  \begin{equation*}	\label{DA}
				\Box_\cM = \partial_t^2 - \Delta_\cM.
		  \end{equation*}}			
\end{enumerate}
\end{lemma}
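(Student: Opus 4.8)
The plan is to reduce everything to the spectral theorem applied to the self-adjoint operator $-\Delta_\cM+m^2\ge 0$, whose spectral decomposition I write as $-\Delta_\cM+m^2=\int_{0}^\infty\lambda\,dE(\lambda)$; equivalently $-\Delta_\cM=\int\mu\,d\tilde E(\mu)$ with $\mu=\lambda-m^2$ ranging over $\spec(-\Delta_\cM)\subset[\varepsilon_\cM,\infty)$. First I would record, from \eqref{wavekernelg}, that on a suitable domain $\partial_t^n W(t)$ is the function of $-\Delta_\cM$ given by $g_{n,t}(\mu)=\partial_t^n\bigl(\sin(\sqrt\mu\,t)/\sqrt\mu\bigr)$, which one checks by elementary differentiation to be of the form $\mu^{(n-1)_+/1}\times(\text{a bounded trigonometric/hyperbolic factor})$; more precisely $|g_{n,t}(\mu)|\le (1+|\mu|)^{(n-1)_+/2}\,\rho_\cM(t)\cdot C$ uniformly for $\mu\ge\varepsilon_\cM$ and $t$ in compacta, with $C$ absolute. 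Hence $g_{n,t}$ is a real-valued Borel function, so $\partial_t^n W(t)=g_{n,t}(-\Delta_\cM)$ is self-adjoint by the functional calculus, and its natural domain is $\{\psi:\int|g_{n,t}(\mu)|^2\,d\|\tilde E(\mu)\psi\|^2<\infty\}$, which by the two-sided bound $|g_{n,t}(\mu)|^2\asymp(1+|\mu|)^{(n-1)_+}$ equals $\mathrm{Dom}\bigl((-\Delta_\cM+m^2)^{(n-1)_+/2}\bigr)=H^{(n-1)_+}_{\cM,m^2}$. This is independent of $t$, as claimed. Commutation with $\Delta_\cM$ is immediate since $g_{n,t}(-\Delta_\cM)$ and $\Delta_\cM$ are both functions of the same self-adjoint operator.

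Next I would verify the mapping property. For $\psi\in H^{(l+n-1)_+}_{\cM,m^2}$ one must show $\partial_t^n W(t)\psi\in H^l_{\cM,m^2}$, i.e. $(-\Delta_\cM+m^2)^{l/2}g_{n,t}(-\Delta_\cM)\psi\in L^2(\cG)$. Again this is a single spectral integral: the integrand is $(1+|\mu|)^{l/2}|g_{n,t}(\mu)|$, and the pointwise bound on $g_{n,t}$ gives $(1+|\mu|)^{l/2}|g_{n,t}(\mu)|\le C\rho_\cM(t)(1+|\mu|)^{(l+n-1)_+/2}$, so the integral is controlled by $\|\psi\|_{\cM,(l+n-1)_+}^2+\|\psi\|_\cG^2<\infty$; thus $\partial_t^n W(t)$ maps $H^{(l+n-1)_+}_{\cM,m^2}$ boundedly into $H^l_{\cM,m^2}$. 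Strong continuity in $t$ on $H^{(n-1)_+}_{\cM,m^2}$ follows by dominated convergence in the spectral integral: for $\psi$ in that space and $t'\to t$, $\|(\partial_t^nW(t')-\partial_t^nW(t))\psi\|_\cG^2=\int|g_{n,t'}(\mu)-g_{n,t}(\mu)|^2 d\|\tilde E(\mu)\psi\|^2$; the integrand tends to $0$ pointwise (the trig/hyperbolic functions and their $t$-derivatives are continuous in $t$) and is dominated by $4C^2\rho_\cM(t'')^2(1+|\mu|)^{(n-1)_+}$ with $\rho_\cM$ bounded on the compact $t$-interval, an integrable majorant precisely because $\psi\in\mathrm{Dom}\bigl((-\Delta_\cM+m^2)^{(n-1)_+/2}\bigr)$.

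Part (b) is then the shortest step. On $H^{n+1}_{\cM,m^2}$ both $\partial_t^2\partial_t^nW(t)$ and $\Delta_\cM\partial_t^nW(t)$ are defined (the former because $\partial_t^{n+2}W(t)$ has domain $H^{(n+1)_+}_{\cM,m^2}=H^{n+1}_{\cM,m^2}$, the latter because $\partial_t^nW(t)$ maps $H^{n+1}_{\cM,m^2}$ into $H^2_{\cM,m^2}\subset\mathrm{Dom}(\Delta_\cM)$), and as functions of $-\Delta_\cM$ they correspond to the same multiplier: differentiating $g_{n,t}$ twice in $t$ multiplies it by $-\mu$, which is exactly the multiplier of $\Delta_\cM$. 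Hence $\Box_\cM\partial_t^nW(t)=\partial_t^{n+2}W(t)-\Delta_\cM\partial_t^nW(t)=0$ on $H^{n+1}_{\cM,m^2}$; interchanging $\partial_t^2$ with the spectral integral is justified by the strong-continuity/differentiability established in (a) applied one order higher. The main obstacle — really the only point needing care — is bookkeeping the domains: matching the operator-theoretic domain $\{\psi:\int|g_{n,t}(\mu)|^2 d\|\tilde E(\mu)\psi\|^2<\infty\}$ with the Sobolev space $H^{(n-1)_+}_{\cM,m^2}$ uniformly in $t$, which hinges on the two-sided estimate $|g_{n,t}(\mu)|\asymp(1+|\mu|)^{(n-1)_+/2}$ for large $\mu$ (the lower bound, needed for the domain inclusion "$\supseteq$", uses that $\sin(\sqrt\mu\,t)$ does not vanish identically in $t$, so after integrating in $t$ over a short interval the average of $|g_{n,t}(\mu)|^2$ is bounded below by $c(1+|\mu|)^{(n-1)_+}$). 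Everything else is a routine application of the spectral theorem and dominated convergence.
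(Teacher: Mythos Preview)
Your approach is exactly what the paper intends: the paper's entire proof is the single sentence ``An easy application of the spectral theorem gives the following,'' and your spelled-out functional-calculus argument is precisely that application.

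One small remark on the point you yourself flag as the main obstacle. Your $t$-averaging trick shows that the \emph{common} domain $\bigcap_t\mathrm{Dom}\bigl(g_{n,t}(-\Delta_\cM)\bigr)$ equals $H^{(n-1)_+}_{\cM,m^2}$, but it does not give the pointwise domain equality claimed in the lemma: for particular values of $t$ the operator-theoretic domain can be strictly larger (e.g.\ $\partial_t^2 W(0)=0$ has domain all of $L^2(\cG)$, and more generally the zeros of $\sin(\sqrt{\mu}\,t)$ may enlarge the domain at fixed $t$). This is, however, an imprecision in the statement of the lemma rather than a defect in your argument: everywhere in the paper only the inclusion $H^{(n-1)_+}_{\cM,m^2}\subseteq\mathrm{Dom}\bigl(\partial_t^n W(t)\bigr)$ and the mapping and continuity properties are used, and those you establish cleanly from the upper bound $|g_{n,t}(\mu)|\le C\,\rho_\cM(t)\,(1+|\mu|)^{(n-1)_+/2}$.
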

For any \emph{Cauchy data} $(\psi_0,\dot{\psi}_0)$ in $L^2(\cG)\times L^2(\cG)$ set
\begin{equation}\label{wave1}
    \psi(t)=\partial_t W(t)\,\psi_0 + W(t)\,\dot{\psi}_0,\qquad t\in\R,
\end{equation}
which is a well-defined element in $L^2(\cG)$ for all $t$, strongly
continuous in $t$. For what follows it will be
convenient to introduce the notation
\begin{equation*}
	H^{\alpha,\beta}_{\cM,m^2} = H^\alpha_{\cM,m^2}\oplus H^\beta_{\cM,m^2},\qquad
								\alpha,\,\beta\ge 0.
\end{equation*}
If the Cauchy data $(\psi_0,\dot\psi_0)$ belong to $H^{2,1}_{\cM,m^2}$,
then we obtain from Lemma~\ref{lem_wkernel} that for all $t$, $\psi(t)$
is twice strongly differentiable in $t$, that it belongs to $H^2_{\cM,m^2}$,
and that it is a solution of the initial value problem of the wave equation
\begin{equation}\label{wave3}
\begin{split}
    \Box_\cM\psi(t)    &= 0,\\
          \psi(t=0)    &= \psi_0,\\
    \partial_t\psi(t=0)&= \dot{\psi}_0.
\end{split}
\end{equation}
$\psi(t)\in H^2_{\cM,m^2}$ implies that $\psi(t)$ and $\psi(t)'$ are
absolutely continuous on every open edge for all
$t$. Also the boundary values
$[\psi(t)]$ at the vertices of $\cG$ may be taken.
If $(\psi_0,\dot\psi_0)\in H^{3,2}_{\cM,m^2}$, then an analogous
statement is true for $\partial_t \psi(t)$, and both $\psi(t)$ and $\partial_t \psi(t)$
have for all $t$ second order spatial derivatives on the open edges, which define elements
in $L^2(\cG)$. Therefore in this case the set of boundary values $[\partial_t\psi(t)]$
exists, too. If both the boundary condition $\cM$ and the Cauchy data are chosen
to be real, then $\psi(t)$ is real for  all times $t$. We also remark
that~\eqref{wave1} extends to
\begin{equation*}\label{wave4}
    \psi(t)=W(t-s)\,\partial_s \psi(s) - \bigl(\partial_s W(t-s)\bigr)\,\psi(s),
\end{equation*}
valid for all $t$, $s\in\R$.

Similar arguments lead to the following slightly more general result which
will be useful below:
\begin{proposition}\label{prop:waveeq}
Suppose that $\psi$ is defined by equation~\eqref{wave1} with Cauchy data
$(\psi_0,\dot\psi_0)$.
\begin{enumerate}	\aitem
\item If $(\psi_0,\dot\psi_0)\in H^{n+l, n+l-1}_{\cM,m^2}$, $n\in\N_0$, $l\in\N$,
	  then $\partial_t^n\psi(t)$, is of the form~\eqref{wave1} with
	  $\partial_t^n\psi(t)\in H^l_{\cM,m^2}$ for all $t$, and
	  $\partial_t^n \psi(t)$ is $l$ times strongly continuously differentiable in $t$.
	  If $l\ge 2$, then $\partial_t^n\psi(t)$ is a solution of the wave equation
	  with Cauchy data $(\Delta^k \psi_0,\Delta^k\dot\psi_0)$ if $n=2k$, and with
	  $(\Delta^k \dot\psi_0,\Delta^{k+1}\psi_0)$, respectively, if $n=2k+1$.
	  Furthermore, if $l\ge 3$, then the set $[\partial_t^n\psi(t)]$ of boundary
	  values of $\partial_t^n\psi(t)$ at the vertices of $\cG$ exists.
\item If $(\psi_0,\dot\psi_0)\in H^{n+2, n+1}_{\cM,m^2}$, $n\in\N_0$, then
	  $(-\Delta_\cM + m^2)^{n/2}\psi(t)$ is a solution of the wave equation
	  of the form~\eqref{wave1} with Cauchy data $((-\Delta_\cM+m^2)^{n/2}\psi_0,
	  (-\Delta+m^2)^{n/2}\dot\psi_0)$. If  $(\psi_0,\dot\psi_0)\in H^{n+3, n+2}_{\cM,m^2}$,
	  $n\in\N_0$, then the boundary values $[(-\Delta+m^2)^{n/2}\psi(t)]$ are
	  well-defined for all $t$.
\end{enumerate}
\end{proposition}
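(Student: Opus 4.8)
The plan is to reduce Proposition~\ref{prop:waveeq} entirely to repeated applications of Lemma~\ref{lem_wkernel}, exploiting that $\partial_t W(t)$ and $W(t)$ commute with $\Delta_\cM$ (hence with every bounded Borel function of $\Delta_\cM$, in particular with $(-\Delta_\cM+m^2)^{\ga/2}$ and with the projections onto spectral subspaces). The key observation is that formula~\eqref{wavekernelg} lets us rewrite time derivatives of $\psi$ as \emph{new} wave-type functions of the same form~\eqref{wave1} but with differentiated Cauchy data: for $n=2k$ one has $\partial_t^{2k}\psi(t)=\partial_tW(t)\,\Delta_\cM^k\psi_0+W(t)\,\Delta_\cM^k\dot\psi_0$, while for $n=2k+1$ one gets $\partial_t^{2k+1}\psi(t)=\partial_tW(t)\,\Delta_\cM^k\dot\psi_0+W(t)\,\Delta_\cM^{k+1}\psi_0$. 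These identities hold a priori in the strong sense on a suitable Sobolev domain, and that domain is exactly dictated by part~(a) of Lemma~\ref{lem_wkernel}: $\partial_t^nW(t)$ has domain $H^{(n-1)_+}_{\cM,m^2}$ and maps $H^{(l+n-1)_+}_{\cM,m^2}$ into $H^l_{\cM,m^2}$.

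Concretely, for part~(a) I would first fix $n\in\N_0$, $l\in\N$ and assume $(\psi_0,\dot\psi_0)\in H^{n+l,\,n+l-1}_{\cM,m^2}$. Writing $\psi(t)=\partial_tW(t)\psi_0+W(t)\dot\psi_0$, each summand's $n$-th time derivative is computed from~\eqref{wavekernelg}; the mapping property of Lemma~\ref{lem_wkernel}(a) applied with the indices $n{+}1$ (for the $\cos$ term) and $n$ (for the $\sin/\!\sqrt{\,}$ term), respectively, shows $\partial_t^nW(t)\psi_0\in H^{l}_{\cM,m^2}$ and $\partial_t^nW(t)\dot\psi_0\in H^{l}_{\cM,m^2}$ provided $n{+}l\le$ the available regularity of $\psi_0$ and $n{+}l{-}1\le$ that of $\dot\psi_0$ — which is precisely the hypothesis. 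Strong $C^l$-differentiability in $t$ follows by iterating Lemma~\ref{lem_wkernel}(a)'s strong continuity statement, each further $t$-derivative costing one Sobolev index, so $l$ derivatives are available. When $l\ge2$, Lemma~\ref{lem_wkernel}(b) gives $\Box_\cM\partial_t^n\psi(t)=0$ on the relevant domain, and the Cauchy data of $\partial_t^n\psi$ at $t=0$ are read off by evaluating~\eqref{wavekernelg} at $t=0$ together with $W(0)=0$, $\partial_tW(0)=\1$: one obtains $(\Delta_\cM^k\psi_0,\Delta_\cM^k\dot\psi_0)$ for $n=2k$ and $(\Delta_\cM^k\dot\psi_0,\Delta_\cM^{k+1}\psi_0)$ for $n=2k+1$. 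Finally, when $l\ge3$ we have $\partial_t^n\psi(t)\in H^3_{\cM,m^2}\subset H^2_{\cM,m^2}$, so by the same reasoning used in the discussion following~\eqref{wave3} the restriction of $\partial_t^n\psi(t)$ and its spatial derivative to each open edge is absolutely continuous, hence the vertex boundary values $[\partial_t^n\psi(t)]$ exist.

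For part~(b) the argument is the same but with $\Delta_\cM^k$ replaced by the bounded (on the appropriate spectral subspace) operator $(-\Delta_\cM+m^2)^{n/2}$, which commutes with both $W(t)$ and $\partial_tW(t)$ by the spectral theorem. If $(\psi_0,\dot\psi_0)\in H^{n+2,\,n+1}_{\cM,m^2}$, then $(-\Delta_\cM+m^2)^{n/2}\psi_0\in H^2_{\cM,m^2}$ and $(-\Delta_\cM+m^2)^{n/2}\dot\psi_0\in H^1_{\cM,m^2}$, so the function $(-\Delta_\cM+m^2)^{n/2}\psi(t)=\partial_tW(t)(-\Delta_\cM+m^2)^{n/2}\psi_0+W(t)(-\Delta_\cM+m^2)^{n/2}\dot\psi_0$ falls under the $H^{2,1}_{\cM,m^2}$ case already treated after~\eqref{wave3}: it solves the wave equation with the stated Cauchy data. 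Upgrading the Cauchy data to $H^{n+3,\,n+2}_{\cM,m^2}$ puts $(-\Delta_\cM+m^2)^{n/2}\psi(t)$ in $H^3_{\cM,m^2}\subset H^2_{\cM,m^2}$, giving the boundary values $[(-\Delta_\cM+m^2)^{n/2}\psi(t)]$ exactly as in part~(a).

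The only genuinely delicate point — and the one I would write out carefully — is the bookkeeping of Sobolev indices: one must check that the hypotheses $H^{n+l,\,n+l-1}_{\cM,m^2}$ (resp.\ $H^{n+2,\,n+1}_{\cM,m^2}$) are the sharp ones matching the "$(l+n-1)_+$ into $l$" mapping rule of Lemma~\ref{lem_wkernel}(a) applied to $\partial_tW(t)$ (index shift $n$) versus $W(t)$-type terms (index shift $n-1$), taking the $(-)_+$ truncation into account for small $n$. Beyond that, everything is a direct invocation of the spectral calculus and Lemma~\ref{lem_wkernel}, together with the Sobolev embedding $H^2_{\cM,m^2}\hookrightarrow C^1$ on open edges already used for~\eqref{wave3}. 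I expect no hidden obstruction; the proof is essentially "apply Lemma~\ref{lem_wkernel} $l$ times and keep track of indices."
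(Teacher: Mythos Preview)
Your proposal is correct and matches the paper's intended approach: the paper does not give a written-out proof of Proposition~\ref{prop:waveeq} but introduces it with ``Similar arguments lead to the following slightly more general result,'' referring precisely to the spectral-calculus manipulations you outline --- repeated application of Lemma~\ref{lem_wkernel}, the identities~\eqref{wavekernelg}, commutativity of $W(t)$ and $\partial_tW(t)$ with functions of $\Delta_\cM$, and the Sobolev bookkeeping. Your write-up is in fact a faithful expansion of what the authors leave implicit.
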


With~\eqref{wtnorm} and~\eqref{twtnorm}, the corresponding estimates in terms of the
Cauchy data are given by

\begin{proposition}\label{prop:apriori}
The following \emph{a priori} estimates are valid for $\psi(t)$, as defined by
\eqref{wave1} with Cauchy data $(\psi_0,\dot{\psi}_0)$:
\begin{enumerate}	\aitem
\begin{subequations}%\label{estimates}
	\item Suppose that $(\psi_0,\dot\psi_0)\in H^{n+2k,n+2k}_{\cM,m^2}(\cG)$, $k$, $n\in\N_0$.
			Then for all $t$
			\begin{equation*}	\label{estimatesa}
				\|\partial_t^{2k}\psi(t)\|_{\cM,n}
					\le \rho_{\cM}(t)\,\bigl(\|\Delta_\cM^k\psi_0\|_{\cM,n}
						+ |t|\,\|\Delta_\cM^k\dot\psi_0\|_{\cM,n}\bigr)
			\end{equation*}
			holds.
	\item Suppose that  $(\psi_0,\dot\psi_0)\in H^{n+2k+2,n+2k}_{\cM,m^2}(\cG)$,
			$k$, $n\in\N_0$. Then for all $t$
			\begin{equation*}	\label{estimatesb}
				\|\partial_t^{2k+1} \psi(t)\|_{\cM,n}
					\le \rho_{\cM}(t)\,\bigl(\|\Delta_\cM^k\dot\psi_0\|_{\cM,n}
						+ |t|\,\|\Delta_\cM^{k+1}\psi_0\|_{\cM,n}\bigr)
			\end{equation*}
			holds.
\end{subequations}
\end{enumerate}
\end{proposition}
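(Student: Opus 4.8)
The plan is to reduce both inequalities to the operator-norm bounds \eqref{wtnorm} and \eqref{twtnorm} after rewriting the time derivatives of $\psi(t)$ purely in terms of $W(t)$, $\partial_t W(t)$, and powers of $\Delta_\cM$ acting on the Cauchy data. First I would invoke Proposition~\ref{prop:waveeq}(a), applied with its parameter $n$ replaced by $2k$ (resp.\ $2k+1$) and its parameter $l$ by the present $n$: under the hypotheses of (a), resp.\ (b), this guarantees that $\partial_t^{2k}\psi(t)$, resp.\ $\partial_t^{2k+1}\psi(t)$, exists for all $t$ and is again of the form \eqref{wave1}. Differentiating \eqref{wave1} term by term, using the power formulas \eqref{wavekernelg}, and using that $W(t)$, $\partial_t W(t)$ and $\Delta_\cM^k$ are all functions of $-\Delta_\cM$ and hence commute, one obtains
\begin{equation*}
\partial_t^{2k}\psi(t) = \partial_t W(t)\,\Delta_\cM^{k}\psi_0 + W(t)\,\Delta_\cM^{k}\dot\psi_0,
\qquad
\partial_t^{2k+1}\psi(t) = W(t)\,\Delta_\cM^{k+1}\psi_0 + \partial_t W(t)\,\Delta_\cM^{k}\dot\psi_0.
\end{equation*}

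Next I would apply $(-\Delta_\cM+m^2)^{n/2}$ to these identities; since it commutes with $W(t)$ and $\partial_t W(t)$, it moves through to act directly on $\Delta_\cM^{k}\psi_0$ and $\Delta_\cM^{k}\dot\psi_0$ (resp.\ $\Delta_\cM^{k+1}\psi_0$). The only step needing a short justification is that the right-hand sides are meaningful, i.e.\ that $\Delta_\cM^{k}$ maps $H^{n+2k}_{\cM,m^2}$ into $H^{n}_{\cM,m^2}$ (and $\Delta_\cM^{k+1}$ maps $H^{n+2k+2}_{\cM,m^2}$ into $H^{n}_{\cM,m^2}$). This follows from the spectral theorem and the finiteness of the spectral measures: on $\spec(-\Delta_\cM)\subseteq[\varepsilon_\cM,\infty)$ one has the scalar bound $\lambda^{2k}(\lambda+m^2)^{n}\le C\bigl((\lambda+m^2)^{n+2k}+1\bigr)$, so that $\|(-\Delta_\cM+m^2)^{n/2}\Delta_\cM^{k}\phi\|_\cG^2\le C\bigl(\|\phi\|_\cG^2+\|(-\Delta_\cM+m^2)^{(n+2k)/2}\phi\|_\cG^2\bigr)<\infty$ for $\phi\in H^{n+2k}_{\cM,m^2}$, and likewise with $k$ replaced by $k+1$ under the hypothesis of (b). In particular $\|\Delta_\cM^{k}\psi_0\|_{\cM,n}=\|(-\Delta_\cM+m^2)^{n/2}\Delta_\cM^{k}\psi_0\|_\cG$ and the analogous quantities for $\dot\psi_0$ (and $\Delta_\cM^{k+1}\psi_0$) are finite.

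Finally, taking $\|\cdot\|_\cG$-norms in the two identities above, using the triangle inequality, and extracting the bounded operators via \eqref{twtnorm}, $\|\partial_t W(t)\|\le\rho_\cM(t)$, and \eqref{wtnorm}, $\|W(t)\|\le|t|\,\rho_\cM(t)$, gives $\|\partial_t^{2k}\psi(t)\|_{\cM,n}\le\rho_\cM(t)\|\Delta_\cM^{k}\psi_0\|_{\cM,n}+|t|\,\rho_\cM(t)\|\Delta_\cM^{k}\dot\psi_0\|_{\cM,n}$ and $\|\partial_t^{2k+1}\psi(t)\|_{\cM,n}\le|t|\,\rho_\cM(t)\|\Delta_\cM^{k+1}\psi_0\|_{\cM,n}+\rho_\cM(t)\|\Delta_\cM^{k}\dot\psi_0\|_{\cM,n}$, which are exactly (a) and (b) after factoring out $\rho_\cM(t)$. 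I do not expect a genuine obstacle here: the whole argument is bookkeeping inside the functional calculus of $-\Delta_\cM$, and the only slightly non-automatic ingredient is the mapping property $\Delta_\cM^{k}\colon H^{n+2k}_{\cM,m^2}\to H^{n}_{\cM,m^2}$ that makes the right-hand sides well defined.
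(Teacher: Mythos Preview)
Your proposal is correct and follows essentially the same approach as the paper, which does not spell out a proof but simply introduces the proposition with the words ``With~\eqref{wtnorm} and~\eqref{twtnorm}, the corresponding estimates in terms of the Cauchy data are given by\ldots'', indicating that one should combine the power formulas~\eqref{wavekernelg} with the operator-norm bounds and the triangle inequality, exactly as you do. Your write-up in fact supplies more detail than the paper (the mapping property $\Delta_\cM^{k}\colon H^{n+2k}_{\cM,m^2}\to H^{n}_{\cM,m^2}$ and the commutation of $(-\Delta_\cM+m^2)^{n/2}$ with $W(t)$, $\partial_tW(t)$), none of which is problematic.
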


Assume that $(\psi_0,\dot\psi_0)\in H^{2,2}_{\cM,m^2}$, so that by Proposition~\ref{prop:waveeq}
$\partial_t\psi(t)$ is strongly continuously differentiable in $t$. Hence we can write
\begin{equation*}
	(\partial_t\psi)(t_1) - (\partial_t\psi)(t_2)
		= \int_{t_2}^{t_1} \bigl(\partial_t^2\psi\bigr)(s)\,ds,
\end{equation*}
as a relation in the Hilbert space $L^2(\cG)$, where the last integral is a
Bochner integral. With the estimates given in Proposition~\ref{prop:apriori} we therefore
find
\begin{equation*}
\begin{split}
	\|(\partial_t&\psi)(t_1) - (\partial_t\psi)(t_2)\|\\
		 &\le |t_1-t_2|\,\max\bigl(\rho_\cM(t_1),\rho_\cM(t_2)\bigr)\,
				\bigl(\|\Delta_\cM \psi_0\| + \max(|t_1|, |t_2)\,\|\Delta_\cM\dot\psi_0\|\bigr),
\end{split}			
\end{equation*}
and by hypothesis the last two norms are finite. This argument is readily generalized
to provide the following result

\begin{proposition}\label{prop:apriori2}
Suppose that $\psi(t)$ is defined by \eqref{wave1} with Cauchy data $(\psi_0,\dot{\psi}_0)$.
\begin{enumerate}	\aitem
\begin{subequations}\label{estimates1}
	\item Assume that  $(\psi_0,\dot\psi_0)\in H^{n+2k+2, n+2k}_{\cM,m^2}(\cG)$
			$k$, $n\in\N_0$. Then for all $t_1$, $t_2$
			\begin{equation*}	\label{estimatesc}
			\begin{split}
				\|(\partial^{2k}_t\psi)(t_1)-&(\partial^{2k}_t\psi)(t_2)\|_{\cM,n}\\
					&\le |t_1-t_2|\,\max(\rho_\cM(t_1),\rho_\cM(t_2))\\
					&\hspace{3em}\times\bigl(\|\Delta^k_\cM \dot\psi_0\|_{\cM,n}
						+ \max(|t_1|,|t_2|)\,\|\Delta^{k+1}_\cM \psi_0\|_{\cM,n}\bigr)
			\end{split}
			\end{equation*}
		  holds true.
	\item Assume that  $(\psi_0,\dot\psi_0)\in H^{n+2k+2, n+2k+2}_{\cM,m^2}(\cG)$
			$k$, $n\in\N_0$, then for all $t_1$, $t_2$
			\begin{equation*}	\label{estimatesd}
			\begin{split}
				\|(\partial^{2k+1}_t\psi)(t_1)-&(\partial^{2k+1}_t\psi)(t_2)\|_{\cM,n}\\
					&\le |t_1-t_2|\,\max(\rho_\cM(t_1),\rho_\cM(t_2))\\
					&\hspace{3em}\times\bigl(\|\Delta^{k+1}_\cM \psi_0\|_{\cM,n}
						+ \max(|t_1|,|t_2|)\,\|\Delta^{k+1}_\cM \dot\psi_0\|_{\cM,n}\bigr)
			\end{split}
			\end{equation*}
		  holds.
\end{subequations}
\end{enumerate}
\end{proposition}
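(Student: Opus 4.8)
The plan is to deduce both estimates from Proposition~\ref{prop:apriori} by the fundamental theorem of calculus, following the computation displayed just before the statement (which is nothing but the case $k=n=0$ of part~(a)). Throughout, fix $k,n\in\N_0$.

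Assume first the hypothesis of part~(a), $(\psi_0,\dot\psi_0)\in H^{n+2k+2,n+2k}_{\cM,m^2}(\cG)$. By Proposition~\ref{prop:waveeq} (applied with time-derivative order $2k$ and Sobolev index $n+1$, whose hypothesis $H^{n+2k+1,n+2k}_{\cM,m^2}$ is implied by the present one since $H^{\alpha}_{\cM,m^2}\subset H^{\alpha'}_{\cM,m^2}$ for $\alpha'\le\alpha$), the curve $t\mapsto(\partial_t^{2k}\psi)(t)$ is again of the form~\eqref{wave1}, takes values in $H^{n}_{\cM,m^2}$, and is strongly continuously differentiable as a map into $H^{n}_{\cM,m^2}$, with derivative $t\mapsto(\partial_t^{2k+1}\psi)(t)$, itself strongly continuous into $H^{n}_{\cM,m^2}$. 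Hence, by the fundamental theorem of calculus for Bochner integrals in the Hilbert space $H^{n}_{\cM,m^2}$,
\[
	(\partial_t^{2k}\psi)(t_1)-(\partial_t^{2k}\psi)(t_2)=\int_{t_2}^{t_1}(\partial_t^{2k+1}\psi)(s)\,ds ,
\]
and, since $\|\cdot\|_{\cM,n}$ is (as noted above) a continuous seminorm on $H^{n}_{\cM,m^2}$, it may be passed inside the integral to give
\[
	\|(\partial_t^{2k}\psi)(t_1)-(\partial_t^{2k}\psi)(t_2)\|_{\cM,n}\le\Bigl|\int_{t_2}^{t_1}\|(\partial_t^{2k+1}\psi)(s)\|_{\cM,n}\,ds\Bigr| .
\]
Now apply part~(b) of Proposition~\ref{prop:apriori}, whose hypothesis is exactly the one assumed here, to bound the integrand by $\rho_\cM(s)\bigl(\|\Delta_\cM^k\dot\psi_0\|_{\cM,n}+|s|\,\|\Delta_\cM^{k+1}\psi_0\|_{\cM,n}\bigr)$. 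Since $\rho_\cM(s)=\cosh\bigl(s\sqrt{-\varepsilon_\cM}\bigr)$ and $|s|$ are even and nondecreasing in $|s|$, for $s$ between $t_1$ and $t_2$ they are bounded by $\max(\rho_\cM(t_1),\rho_\cM(t_2))$ and $\max(|t_1|,|t_2|)$ respectively; pulling these constants out and estimating the remaining $\int_{t_2}^{t_1}ds$ by $|t_1-t_2|$ produces exactly the bound claimed in part~(a).

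Part~(b) is handled identically with $2k$ replaced by $2k+1$: under the hypothesis $(\psi_0,\dot\psi_0)\in H^{n+2k+2,n+2k+2}_{\cM,m^2}(\cG)$, Proposition~\ref{prop:waveeq} makes $s\mapsto(\partial_t^{2k+1}\psi)(s)$ a strongly $C^1$ curve in $H^{n}_{\cM,m^2}$ with derivative $\partial_t^{2k+2}\psi$; one writes the analogous integral identity, applies part~(a) of Proposition~\ref{prop:apriori} with $k$ replaced by $k+1$ (its hypothesis $H^{n+2(k+1),n+2(k+1)}_{\cM,m^2}$ being precisely the present one) to dominate $\|(\partial_t^{2k+2}\psi)(s)\|_{\cM,n}$ by $\rho_\cM(s)\bigl(\|\Delta_\cM^{k+1}\psi_0\|_{\cM,n}+|s|\,\|\Delta_\cM^{k+1}\dot\psi_0\|_{\cM,n}\bigr)$, and concludes with the same monotonicity argument. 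The only points needing attention are the bookkeeping of Sobolev indices required to legitimately invoke the strong $C^1$-differentiability from Proposition~\ref{prop:waveeq} (so that the Bochner integral identity is valid) and the monotonicity of $\rho_\cM$ and $|\cdot|$ that lets the maxima be pulled out of the integral; neither is a real obstacle, and the proposition is in essence a repackaging of Proposition~\ref{prop:apriori} together with the fundamental theorem of calculus.
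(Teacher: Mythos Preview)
Your proof is correct and follows exactly the approach the paper intends: write the difference as a Bochner integral of the next time derivative, bound the integrand by Proposition~\ref{prop:apriori}, and use the monotonicity of $\rho_\cM$ and $|s|$ to extract the maxima. One trivial slip: the computation displayed just before the proposition is the case $k=n=0$ of part~(b), not part~(a), since it concerns $\partial_t\psi$ and the bound involves $\|\Delta_\cM\psi_0\|+\max(|t_1|,|t_2|)\,\|\Delta_\cM\dot\psi_0\|$.
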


Next we consider the case where the boundary conditions defined by $\cM$ are such that
$-\Delta_\cM$ is non-negative. Recall that in this case we make the choice $m^2=0$,
and that $\rho_\cM(t)$ is equal to $1$ for all $t\in\R$. Moreover, we remark that
then $\sqrt{-\Delta_\cM}$ is a well-defined self-adjoint operator with domain
$H^1_{\cM}(\cG)$. Let $\psi$ be defined as in~\eqref{wave1}, and let $k\in\N_0$.
Then we get from~\eqref{wavekernelg} for all $t$
\begin{subequations}\label{psipos}
\begin{align}
	\partial_t^{2k} \psi(t)
		&= \cos\bigl(\sqrt{-\Delta_\cM} t\bigr)\,\Delta_\cM^k\psi_0
				+ \sin\bigl(\sqrt{-\Delta_\cM} t\bigr)\,\Delta_\cM^{(2k-1)/2}\dot\psi_0
					\label{psiposa}\\
	\partial_t^{2k+1} \psi(t)
		&= \sin\bigl(\sqrt{-\Delta_\cM} t\bigr)\,\Delta_\cM^{(2k+1)/2}\psi_0
		 + \cos\bigl(\sqrt{-\Delta_\cM} t\bigr)\,\Delta_\cM^k\dot\psi_0,
					\label{psiposb}
\end{align}
\end{subequations}
for Cauchy data $\psi_0$, $\dot\psi_0$ in Sobolev spaces of sufficiently
high degree. Hence we find the following result.

\begin{corollary}	\label{cor:psidiff}
Suppose that the boundary conditions defined by $\cM$ are such that $\Omega_\cM\ge 0$, and hence
$-\Delta_\cM$ is non-negative. Assume furthermore that $\psi$ is defined by~\eqref{wave1}
with Cauchy data $(\psi_0,\dot\psi_0)$. If $(\psi_0,\dot\psi_0)\in H^{n,n-1}_\cM(\cG)$,
$n\in\N$, then $\psi(t)$ is $n$ times strongly continuously differentiable
in $t$.
\end{corollary}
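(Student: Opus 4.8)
The plan is to reduce the statement to the case $n=1$ and then apply the already-established a priori estimates together with the standard criterion that a Bochner-differentiable curve arises from integrating a strongly continuous derivative. Concretely, since $-\Delta_\cM\ge 0$ we have $m^2=0$, $\rho_\cM(t)\equiv 1$, and $\sqrt{-\Delta_\cM}$ is self-adjoint with domain $H^1_\cM(\cG)$; the relations \eqref{psipos} express $\partial_t^j\psi(t)$, for $j$ up to $n$, as $\cos(\sqrt{-\Delta_\cM}t)$ and $\sin(\sqrt{-\Delta_\cM}t)$ applied to fixed vectors built from $\psi_0,\dot\psi_0$ by powers of $\sqrt{-\Delta_\cM}$. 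The hypothesis $(\psi_0,\dot\psi_0)\in H^{n,n-1}_\cM(\cG)$ is exactly what guarantees that for each $j\le n-1$ the vectors $\Delta_\cM^{k}\psi_0$, $\Delta_\cM^{(2k-1)/2}\dot\psi_0$ (when $j=2k$) and $\Delta_\cM^{(2k+1)/2}\psi_0$, $\Delta_\cM^{k}\dot\psi_0$ (when $j=2k+1$) lie in $L^2(\cG)$, so that $\partial_t^j\psi(t)$ is a well-defined $L^2(\cG)$-valued function, and moreover that the next time derivative, formally $\partial_t^{j+1}\psi(t)$, is also an $L^2(\cG)$-valued function for $j+1\le n$.

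First I would record, using Lemma~\ref{lem_wkernel}(a) (or directly the spectral theorem applied to the bounded operators $\cos(\sqrt{-\Delta_\cM}t)$, $\sin(\sqrt{-\Delta_\cM}t)$), that each of the expressions in \eqref{psipos} for $0\le j\le n$ defines a function of $t$ that is strongly continuous in $L^2(\cG)$; here one uses that strong continuity of $t\mapsto\cos(\sqrt{-\Delta_\cM}t)$ and $t\mapsto\sin(\sqrt{-\Delta_\cM}t)$ on $L^2(\cG)$ follows from dominated convergence in the spectral representation, and the fixed vectors they act on lie in $L^2(\cG)$ by the Sobolev hypothesis. Thus we have candidate derivatives $g_j(t):=$ (the right-hand side of \eqref{psiposa} or \eqref{psiposb}) for $j=0,\dots,n$, each continuous $\R\to L^2(\cG)$, with $g_0(t)=\psi(t)$.

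Next I would verify the recursion $g_{j+1}(t)=\frac{d}{dt}g_j(t)$ in the strong $L^2(\cG)$ sense, for $0\le j\le n-1$. This is a one-step differentiability statement: differentiating $\cos(\sqrt{-\Delta_\cM}t)v$ in $t$ gives $-\sqrt{-\Delta_\cM}\sin(\sqrt{-\Delta_\cM}t)v = \sin(\sqrt{-\Delta_\cM}t)(-\sqrt{-\Delta_\cM}v)$, which is legitimate in the strong topology precisely when $v\in\mathrm{Dom}(\sqrt{-\Delta_\cM})=H^1_\cM(\cG)$, and similarly for the sine term; the needed membership of the relevant fixed vectors in $H^1_\cM(\cG)$ is again guaranteed by the Sobolev hypothesis $(\psi_0,\dot\psi_0)\in H^{n,n-1}_\cM(\cG)$. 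Equivalently, one may simply invoke Proposition~\ref{prop:apriori2}, whose estimates show $\|g_j(t_1)-g_j(t_2)\|_\cG \le |t_1-t_2|\,\|g_{j+1}\|$-type bounds, together with the Bochner fundamental theorem of calculus $g_j(t_1)-g_j(t_2)=\int_{t_2}^{t_1}g_{j+1}(s)\,ds$, to conclude that $g_j$ is strongly continuously differentiable with derivative $g_{j+1}$. Iterating from $j=0$ to $j=n-1$ then yields that $\psi(t)=g_0(t)$ is $n$ times strongly continuously differentiable, with $\partial_t^j\psi(t)=g_j(t)$, proving the corollary.

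The main obstacle is bookkeeping rather than conceptual: one must check carefully that, under the single hypothesis $(\psi_0,\dot\psi_0)\in H^{n,n-1}_\cM(\cG)$, \emph{all} the intermediate vectors appearing as arguments of the trigonometric operators in \eqref{psipos} — in particular the half-integer powers $\Delta_\cM^{(2k\pm1)/2}$ applied to $\psi_0$ or $\dot\psi_0$ — lie in $L^2(\cG)$, and that those needed to justify the differentiation step at level $j$ additionally lie in $H^1_\cM(\cG)$. This amounts to checking that the largest power of $\sqrt{-\Delta_\cM}$ hitting $\psi_0$ is at most $n$ and the largest hitting $\dot\psi_0$ is at most $n-1$, which matches the asymmetric Sobolev indices $H^{n,n-1}_\cM$ exactly; the case distinction between even and odd $n$ (equivalently, whether the top derivative is a $\partial_t^{2k}$ or a $\partial_t^{2k+1}$ term) is where one has to be attentive. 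Once this accounting is in place, strong continuity and the one-step differentiability are immediate from the spectral theorem as above.
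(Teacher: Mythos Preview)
Your proposal is correct and follows essentially the same approach as the paper: the corollary is stated immediately after the formulae~\eqref{psipos} with the phrase ``Hence we find the following result,'' and your argument is precisely a careful unpacking of that inference---expressing each $\partial_t^j\psi(t)$ via \eqref{psipos} as bounded trigonometric operators applied to fixed vectors, checking that the Sobolev hypothesis $(\psi_0,\dot\psi_0)\in H^{n,n-1}_\cM(\cG)$ places those vectors in $L^2(\cG)$ (and, for the differentiation step, in $H^1_\cM(\cG)$), and invoking strong continuity from the spectral theorem. The bookkeeping you flag about half-integer powers and the even/odd case distinction is exactly the content the paper leaves implicit.
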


The formulae~\eqref{psipos} lead to alternative estimates as compared to those which we obtain
directly from Proposition~\ref{prop:apriori} for $m^2=0$. They are given in
the next proposition, where we also combine them with the estimates of
Proposition~\ref{prop:apriori}.

\begin{corollary}	\label{cor:psidiffnorm}
Suppose that $\cM$ and $\psi$ are as in the hypothesis of Corollary~\ref{cor:psidiff}.
\begin{subequations}%\label{psiposnorm}
\begin{enumerate}\aitem
	\item For all $\psi_0$, $\dot\psi_0)\in L^2(\cG)$, the following \emph{a priori}
		  estimate is valid
			\begin{equation*}	%\label{psiposnorm0}
				\|\psi(t)\| \le \|\psi_0\| + |t|\,\|\dot\psi_0\|.
			\end{equation*}
		  If $(\psi_0,\dot\psi_0)\in H^{n+k,n+k-1}$, $n$, $k\in\N_0$, with $n+k\ge 1$,
		  then
			\begin{equation*} %\label{psiposnorma}
				\|\partial_t^k \psi(t)\|_{\cM,n}
					\le \|\psi_0\|_{\cM,n+k} + \|\dot\psi_0\|_{\cM,n+k-1}
			\end{equation*}	
		  holds true for all $t$.

	\item Assume that $(\psi_0,\dot\psi_0)\in H^{n+2l,n+2l}_{\cM}$, $l\in\N$, $n\in\N_0$, then
			\begin{equation*}	%\label{psiposnormb}
				\|\partial^{2l}_t \psi(t)\|_{\cM,n}
					\le \|\psi_0\|_{\cM, n+2l}
					  + \min\Bigl(|t|\|\dot\psi_0\|_{\cM,n+2l},\|\dot\psi_0\|_{\cM,n+2l-1}\Bigr)
			\end{equation*}
			is valid for all $t$. If $(\psi_0,\dot\psi_0)\in H^{n+2l+2,n+2l}_{\cM}$,
			$l\in\N_0$, $n\in\N_0$, then
			\begin{equation*}	%\label{psiposnormd}
				\|\partial^{2l+1}_t \psi(t)\|_{\cM,n}
					\le \|\dot\psi_0\|_{\cM, n+2l}
					  + \min\Bigl(|t|\|\psi_0\|_{\cM,n+2l+2},\|\psi_0\|_{\cM,n+2l+1}\Bigr)
			\end{equation*}
			holds for all $t$.
\end{enumerate}
\end{subequations}
\end{corollary}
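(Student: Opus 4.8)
The plan is to derive everything from the two families of identities \eqref{psipos} together with the trivial spectral bounds $|\cos x|\le 1$, $|\sin x|\le 1$, and $|\sin x|\le |x|$, combined in part with the estimates of Proposition~\ref{prop:apriori}. Since $\Omega_\cM\ge 0$ forces $-\Delta_\cM\ge 0$ (Corollary~\ref{cor:qpos}), we have $m^2=0$ and $\rho_\cM(t)\equiv 1$, so $\sqrt{-\Delta_\cM}$ is a genuine self-adjoint operator with the scale $H^\ga_\cM$ behaving as expected; in particular $(-\Delta_\cM)^{1/2}$ maps $H^{\ga}_\cM$ into $H^{\ga-1}_\cM$ and $\|\,(-\Delta_\cM)^{1/2}\phi\,\|_{\cM,n}=\|\phi\|_{\cM,n+1}$. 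This is the point at which the hypothesis $\Omega_\cM\ge 0$ is actually used; everything else is the spectral theorem applied to bounded functions of $-\Delta_\cM$.

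For part~(a): the first inequality is just \eqref{wave1} with $\|W(t)\|\le|t|$ and $\|\partial_tW(t)\|\le 1$ from \eqref{wtnorm}, \eqref{twtnorm} at $\rho_\cM\equiv1$. For the second, I would apply $(-\Delta_\cM)^{n/2}$ to \eqref{psiposa} or \eqref{psiposb} (legitimate on the stated Sobolev domain because $\partial_t^k W(t)$ commutes with $\Delta_\cM$, by Lemma~\ref{lem_wkernel}(a)), take $L^2$ norms, and bound the two cosine/sine factors by $1$; one reads off $\|\partial_t^k\psi(t)\|_{\cM,n}\le \|\Delta_\cM^{k/2}\psi_0\|_{\cM,n}+\|\Delta_\cM^{(k-1)/2}\dot\psi_0\|_{\cM,n}$, and then rewrite $\|\Delta_\cM^{k/2}\psi_0\|_{\cM,n}=\|\psi_0\|_{\cM,n+k}$ and $\|\Delta_\cM^{(k-1)/2}\dot\psi_0\|_{\cM,n}=\|\dot\psi_0\|_{\cM,n+k-1}$. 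The parities $k$ even versus odd are handled separately using \eqref{psiposa} and \eqref{psiposb} respectively, but give the same final bound.

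For part~(b): here I would produce two competing bounds for each parity and take the minimum. Start from \eqref{psiposa}: applying $(-\Delta_\cM)^{n/2}$ and bounding $|\cos|,|\sin|\le 1$ gives $\|\partial_t^{2l}\psi(t)\|_{\cM,n}\le\|\psi_0\|_{\cM,n+2l}+\|\dot\psi_0\|_{\cM,n+2l-1}$; alternatively, using Proposition~\ref{prop:apriori}(a) with $k=l$ at $\rho_\cM\equiv1$ gives $\|\psi_0\|_{\cM,n+2l}+|t|\,\|\dot\psi_0\|_{\cM,n+2l}$. Taking the smaller of the two $\dot\psi_0$-terms yields the claimed $\min\bigl(|t|\|\dot\psi_0\|_{\cM,n+2l},\|\dot\psi_0\|_{\cM,n+2l-1}\bigr)$. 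The odd case is identical, starting from \eqref{psiposb} and Proposition~\ref{prop:apriori}(b): the first gives $\|\dot\psi_0\|_{\cM,n+2l}+\|\psi_0\|_{\cM,n+2l+1}$ and the second $\|\dot\psi_0\|_{\cM,n+2l}+|t|\,\|\psi_0\|_{\cM,n+2l+2}$, so the $\psi_0$-term becomes $\min\bigl(|t|\|\psi_0\|_{\cM,n+2l+2},\|\psi_0\|_{\cM,n+2l+1}\bigr)$.

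The only genuine care needed — the ``main obstacle'', such as it is — is bookkeeping of the Sobolev indices: one must check that the stated Cauchy-data regularity $(\psi_0,\dot\psi_0)\in H^{n+2l,n+2l}_\cM$ (resp.\ $H^{n+2l+2,n+2l}_\cM$) is exactly what is required for all the half-integer powers $\Delta_\cM^{(2l\pm1)/2}$ appearing in \eqref{psipos} to land in $L^2(\cG)$, so that every norm on the right-hand side is finite and the formal manipulations with unbounded operators are justified. This is the same verification already carried out for Proposition~\ref{prop:apriori}, so it introduces nothing new; the corollary follows.
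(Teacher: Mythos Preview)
Your proposal is correct and matches the paper's approach exactly: the paper states (without further detail) that the estimates come from the formulae~\eqref{psipos} combined with those of Proposition~\ref{prop:apriori} at $m^2=0$, $\rho_\cM\equiv 1$, which is precisely the argument you spell out. Your bookkeeping of the Sobolev indices and the observation that the $\min$ in part~(b) arises from pairing the $|\sin|,|\cos|\le 1$ bound on \eqref{psipos} against the $|t|$--weighted bound from Proposition~\ref{prop:apriori} are accurate.
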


For the analogue of Corollary~\ref{cor:psidiffnorm} in the case that $-\Delta_\cM\ge 0$
we only give the form of the estimates as based on the equations~\eqref{psipos}:

\begin{corollary}	\label{cor:psidiffnormb}
Suppose that $\cM$ and $\psi$ are as in the hypothesis of Corollary~\ref{cor:psidiff}.
Assume that $(\psi_0,\dot\psi_0)\in H^{n+k+1,n+k}_\cM(\cG)$, $k$, $n\in\N_0$. Then
\begin{equation*}	\label{diffpos}
	\|(\partial_t^k\psi)(t_1)-(\partial_t^k\psi)(t_2)\|_{\cM,n}
		\le |t_1-t_2|\,\bigl(\|\psi_0\|_{\cM,n+k+1}+\|\dot\psi_0\|_{\cM,n+k}\bigr)
\end{equation*}
holds true for all $t_1$, $t_2$.
\end{corollary}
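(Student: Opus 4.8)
The plan is to read the estimate directly off the spectral calculus of $-\Delta_\cM$ together with the Lipschitz property of sine and cosine, with no energy argument needed. Since the hypothesis of Corollary~\ref{cor:psidiff} includes $\Omega_\cM\ge 0$, Corollary~\ref{cor:qpos} gives $-\Delta_\cM\ge 0$; hence $m^2=0$, $\sqrt{-\Delta_\cM}$ is a genuine self-adjoint operator, and $\rho_\cM\equiv 1$ (which is why no $\rho_\cM(t)$-factor appears). Because $(\psi_0,\dot\psi_0)$ lies in $H^{n+k+1,n+k}_\cM$, which for $k\ge 1$ is contained in $H^{k,k-1}_\cM$, Corollary~\ref{cor:psidiff} (and, for $k=0$, the strong continuity of $\psi$ noted after~\eqref{wave1}) guarantees that $\partial_t^k\psi(t)$ exists; and from~\eqref{wave1} and the iterated derivative formulae~\eqref{wavekernelg}, equivalently~\eqref{psipos}, one has $\partial_t^k\psi(t)=\partial_t^{k+1}W(t)\psi_0+\partial_t^k W(t)\dot\psi_0$. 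In the spectral representation, writing $\mu=\sqrt\lambda$ for the variable of $\sqrt{-\Delta_\cM}$, the operator $\partial_t^j W(t)$ acts by the multiplier $\mu\mapsto\pm\mu^{j-1}\tau_j(\mu t)$ with $\tau_j\in\{\sin,\cos\}$ (the sign and $\tau_j$ cycling with $j\bmod 4$). Consequently $\partial_t^k\psi(t)$ corresponds to the multiplier $\pm\mu^k\tau(\mu t)$ acting on $\psi_0$ plus $\pm\mu^{k-1}\tau'(\mu t)$ acting on $\dot\psi_0$, for suitable signs and $\tau,\tau'\in\{\sin,\cos\}$.

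Next I would split the difference as $(\partial_t^k\psi)(t_1)-(\partial_t^k\psi)(t_2)=u_1+u_2$, with $u_1$ depending on $\psi_0$ alone through the multiplier $\pm\mu^k\bigl(\tau(\mu t_1)-\tau(\mu t_2)\bigr)$ and $u_2$ on $\dot\psi_0$ alone through $\pm\mu^{k-1}\bigl(\tau'(\mu t_1)-\tau'(\mu t_2)\bigr)$, and estimate the two pieces separately. From $|\sin'|\le 1$, $|\cos'|\le 1$ one gets $|\tau(a)-\tau(b)|\le|a-b|$ for all real $a,b$, hence $|\tau(\mu t_1)-\tau(\mu t_2)|\le\mu\,|t_1-t_2|$ for $\mu\ge 0$. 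Invoking the elementary spectral-calculus fact that $|F(\mu)|\le c\,G(\mu)$ for all $\mu\ge 0$ with $G\ge 0$ forces $\|F(\sqrt{-\Delta_\cM})\phi\|\le c\,\|G(\sqrt{-\Delta_\cM})\phi\|$, applied with $F(\mu)=\mu^{n+k}\bigl(\tau(\mu t_1)-\tau(\mu t_2)\bigr)$, $c=|t_1-t_2|$ and $G(\mu)=\mu^{n+k+1}$, one obtains
\begin{equation*}
\|(-\Delta_\cM)^{n/2}u_1\|\le|t_1-t_2|\,\|(\sqrt{-\Delta_\cM})^{n+k+1}\psi_0\|=|t_1-t_2|\,\|\psi_0\|_{\cM,n+k+1},
\end{equation*}
and likewise, with $F(\mu)=\mu^{n+k-1}\bigl(\tau'(\mu t_1)-\tau'(\mu t_2)\bigr)$ and $G(\mu)=\mu^{n+k}$,
\begin{equation*}
\|(-\Delta_\cM)^{n/2}u_2\|\le|t_1-t_2|\,\|(\sqrt{-\Delta_\cM})^{n+k}\dot\psi_0\|=|t_1-t_2|\,\|\dot\psi_0\|_{\cM,n+k}.
\end{equation*}
The hypotheses $\psi_0\in H^{n+k+1}_\cM$ and $\dot\psi_0\in H^{n+k}_\cM$ are exactly what make these right-hand sides finite and the spectral integrals defining $u_1,u_2$ convergent; in the extreme case $k=0$ the negative power $\mu^{-1}$ occurring in $u_2$ is harmless, being cancelled by the factor $\mu$ from the Lipschitz bound. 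The triangle inequality
\begin{equation*}
\|(\partial_t^k\psi)(t_1)-(\partial_t^k\psi)(t_2)\|_{\cM,n}\le\|(-\Delta_\cM)^{n/2}u_1\|+\|(-\Delta_\cM)^{n/2}u_2\|
\end{equation*}
then gives the asserted bound.

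The only point requiring care is the bookkeeping: for each residue of $k$ modulo $4$ one must record which of $\pm\sin,\pm\cos$ appears and the exact powers of $\sqrt{-\Delta_\cM}$ multiplying $\psi_0$ and $\dot\psi_0$ in $\partial_t^k\psi(t)$, and one must check that the assumed Sobolev regularity matches the operator domains used when writing down $\partial_t^k\psi(t)$ and when applying $(-\Delta_\cM)^{n/2}$; all of this follows routinely from~\eqref{wavekernelg}, Lemma~\ref{lem_wkernel} and the definition of $\|\cdot\|_{\cM,\alpha}$. In fact the parity discussion can be bypassed entirely: the estimate uses only that the $\psi_0$-part of $\partial_t^k\psi(t)$ has spectral multiplier $\mu^k$ times a function whose derivative in $\mu t$ is bounded by $1$, and the $\dot\psi_0$-part $\mu^{k-1}$ times such a function. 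I do not anticipate any genuine obstacle.
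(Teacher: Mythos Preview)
Your argument is correct and is essentially the approach the paper has in mind: the paper does not spell out a proof but states that the estimate is ``based on the equations~\eqref{psipos}'', i.e.\ on the explicit spectral representation of $\partial_t^k\psi(t)$, which is exactly what you use. The only cosmetic difference is that the paper's template (displayed just before Proposition~\ref{prop:apriori2}) phrases the step as a Bochner integral $\int_{t_2}^{t_1}\partial_t^{k+1}\psi(s)\,ds$ bounded via Corollary~\ref{cor:psidiffnorm}, whereas you apply the Lipschitz bound $|\tau(\mu t_1)-\tau(\mu t_2)|\le\mu|t_1-t_2|$ directly at the level of spectral multipliers; these are two ways of writing the same mean-value estimate.
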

Our discussion so far may not be specifically restricted
to the context of metric graphs and self-adjoint
Laplacians defined there. We could instead have considered
any manifold with a self-adjoint Laplacian
$\Delta$ there, for which $-\Delta$ is bounded below and which therefore defines a wave operator.
We would have obtained the same type of estimates.
From now on, however, the specific one-dimensional situation enters.
We continue to consider the case where $\cM$ is such that $-\Delta_\cM\ge 0$.
Let
$f^{(j)}$, $j\in\N$ denote the $j$--th spatial derivative of any function $f$ on $\cG$
for which this derivative exists (in the sense of the derivative of a function or in the
$L^2$--sense). One easily verifies that $\psi^{(2n)}=\Delta_\cM^n \psi$
holds such that relation \eqref{normprimeprime} extends to
$\|\psi^{(2n)}\|=\|\Delta_\cM^n \psi\| $ for all $\psi\in\cD(\Delta_\cM^n)$, while \eqref{psiprimebound1}
extends to $\|\psi^{(2n+1)}\|\le\|(\sqrt{-\Delta_\cM})^{2n+1} \psi\| $
for all $\psi\in\cD((\sqrt{-\Delta_\cM})^{2n+1})$.
Similarly Corollary~\ref{cor:psiprime} provides the following
\begin{corollary}\label{cor:psiprimet}
Suppose that $\cM$ and $\psi$ are as in the hypothesis of Corollary~\ref{cor:psidiff}.
\begin{enumerate}\aitem
\begin{subequations}%	\label{psiprimebound}
	\item If $(\psi_0,\dot\psi_0) \in H^{n+j+k,n+j+k-1}_\cM(\cG)$, $j$, $k$, $n\in\N_0$,
			with $n+k+j\ge 1$, then
			\begin{equation*}	\label{psiprimebounda}
    				\|(\partial_t^k\psi(t))^{(j)}\|_{\cM,n}
        				\le \|\psi_0\|_{\cM,n+j+k} + \|\dot{\psi}_0\|_{\cM,n+j+k-1}
			\end{equation*}
			is valid for all $t$.
	\item If $(\psi_0,\dot\psi_0)\in  H^{n+j+k+1,n+j+k}_\cM(\cG)$, $j$, $k$, $n\in\N_0$,
			then
			\begin{equation*}	\label{psiprimeboundb}
			\begin{split}
				\bigl\|\bigl((\partial_t^k \psi)(t_1)^{(j)}
					&-(\partial_t^k \psi)(t_2)^{(j)}\bigr)\bigr\|_{\cM,n}\\
					&\le |t_1-t_2|\,\bigl( \|\psi_0\|_{\cM,n+j+k+1}
						+ \|\dot\psi_0\|_{\cM,n+j+k}\bigr)
			\end{split}
			\end{equation*}
		  holds true for all $t$.
\end{subequations}
\end{enumerate}
\end{corollary}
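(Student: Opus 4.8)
The plan is to reduce the statement, by means of the one-dimensional relations between spatial derivatives and powers of $-\Delta_\cM$ recorded just above, to the estimates for pure time derivatives already established in Corollaries~\ref{cor:psidiffnorm} and~\ref{cor:psidiffnormb}. Under the stated regularity hypotheses on $(\psi_0,\dot\psi_0)$, Corollary~\ref{cor:psidiff} and Proposition~\ref{prop:waveeq} guarantee that, at each fixed $t$, $\psi(t)$ and all the time derivatives appearing below lie in the domains of suitably high powers of $\sqrt{-\Delta_\cM}$, which legitimises the manipulations that follow. The key building block is the extension of Corollary~\ref{cor:psiprime} noted in the text: for every $\phi\in H^j_\cM(\cG)$,
\[
	\|\phi^{(j)}\|_\cG\ \le\ \|(\sqrt{-\Delta_\cM})^{j}\phi\|_\cG\ =\ \|\phi\|_{\cM,j},
\]
with equality for even $j$, where in addition $\phi^{(j)}=(-1)^{j/2}(-\Delta_\cM)^{j/2}\phi$.

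To prove (a), fix $t$. Since $(-\Delta_\cM)^{n/2}$ is a function of $-\Delta_\cM$ it commutes with $W(t)$ and $\partial_tW(t)$, hence with the operation $(\psi_0,\dot\psi_0)\mapsto\partial_t^k\psi(t)$, and — for the smooth functions $\partial_t^k\psi(t)$ at hand — it may be moved past the $j$-fold spatial derivative as well; therefore
\[
	\|(\partial_t^k\psi(t))^{(j)}\|_{\cM,n}
	\ =\ \big\|\big((\sqrt{-\Delta_\cM})^{n}\,\partial_t^k\psi(t)\big)^{(j)}\big\|_\cG .
\]
Applying the building block to $\phi=(\sqrt{-\Delta_\cM})^{n}\,\partial_t^k\psi(t)\in H^j_\cM(\cG)$ and collecting the powers of $\sqrt{-\Delta_\cM}$ gives
\[
	\|(\partial_t^k\psi(t))^{(j)}\|_{\cM,n}\ \le\ \|\partial_t^k\psi(t)\|_{\cM,n+j}.
\]
Finally Corollary~\ref{cor:psidiffnorm}(a), read with its index ``$n$'' replaced by $n+j$ — its hypothesis is then exactly $(\psi_0,\dot\psi_0)\in H^{n+j+k,\,n+j+k-1}_\cM(\cG)$, and its side condition $(n+j)+k\ge1$ coincides with the hypothesis $n+j+k\ge1$ — bounds the right-hand side by $\|\psi_0\|_{\cM,n+j+k}+\|\dot\psi_0\|_{\cM,n+j+k-1}$, which is precisely (a).

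Part (b) follows by the same argument applied to $(\partial_t^k\psi)(t_1)-(\partial_t^k\psi)(t_2)$: by linearity its $j$-th spatial derivative equals $((\partial_t^k\psi)(t_1))^{(j)}-((\partial_t^k\psi)(t_2))^{(j)}$, and the two steps above yield
\[
	\big\|((\partial_t^k\psi)(t_1))^{(j)}-((\partial_t^k\psi)(t_2))^{(j)}\big\|_{\cM,n}
	\ \le\ \big\|(\partial_t^k\psi)(t_1)-(\partial_t^k\psi)(t_2)\big\|_{\cM,n+j}.
\]
Corollary~\ref{cor:psidiffnormb} with ``$n$'' replaced by $n+j$ (hypothesis $(\psi_0,\dot\psi_0)\in H^{n+j+k+1,\,n+j+k}_\cM(\cG)$) then bounds the left-hand side by $|t_1-t_2|\big(\|\psi_0\|_{\cM,n+j+k+1}+\|\dot\psi_0\|_{\cM,n+j+k}\big)$, which is (b).

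The step deserving attention — and the only real obstacle — is the commutation of $(-\Delta_\cM)^{n/2}$ (equivalently of $\sqrt{-\Delta_\cM}$) with spatial differentiation on $\partial_t^k\psi(t)$, which is what lets the Sobolev weight be transferred onto $\psi(t)$ while the differentiation order is raised by $j$. For even spatial orders this is merely the identity $\phi^{(2r)}=(-1)^r(-\Delta_\cM)^r\phi$; the single remaining spatial derivative, when paired with an odd power of $\sqrt{-\Delta_\cM}$, is handled by observing that $\sqrt{-\Delta_\cM}\,\phi$ is again a solution of the wave equation of the form~\eqref{wave1} whenever $\phi$ is (Proposition~\ref{prop:waveeq}(b)), so that the lone $\sqrt{-\Delta_\cM}$ is absorbed by one further application of the first-order bound $\|(\,\cdot\,)'\|_\cG\le\|\sqrt{-\Delta_\cM}(\,\cdot\,)\|_\cG$ of Corollary~\ref{cor:psiprime}; the index bookkeeping then closes up exactly as displayed above.
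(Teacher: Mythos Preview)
Your proposal is correct and follows essentially the same approach as the paper, which gives no proof beyond the single phrase ``Similarly Corollary~\ref{cor:psiprime} provides the following''. You have simply spelled out the intended argument: use the extended bound $\|\phi^{(j)}\|_\cG\le\|\phi\|_{\cM,j}$ recorded just before the statement to trade the $j$ spatial derivatives for an extra $j$ in the Sobolev index, and then invoke Corollaries~\ref{cor:psidiffnorm} and~\ref{cor:psidiffnormb} with the shifted index; your discussion of the commutation of $(\sqrt{-\Delta_\cM})^{n}$ with spatial differentiation (via $\phi^{(2r)}=(-1)^r(-\Delta_\cM)^r\phi$ for even orders and one final application of Corollary~\ref{cor:psiprime} for the residual odd order) is exactly the bookkeeping the paper suppresses.
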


We return to the general case, i.e., we do \emph{not} assume that $\cM$ is such
that $-\Delta_\cM$ is non-negative except where otherwise stated.

To establish uniqueness of the solution \eqref{wave1} for given Cauchy data, we
introduce the energy functional.  For any solution $\varphi(t)$ of the wave equation
with $t$ in a time interval $[-T,T]$, say, set
\begin{equation*}\label{uni0}
\begin{split}
    E_\cM(\varphi(t))
        &=\frac{1}{2}\,\|\partial_t\varphi(t)\|_\cG^2
            +\frac{1}{2}\,\langle \varphi(t),-\Delta_\cM\varphi(t)\rangle\\
        &=\frac{1}{2}\|\partial_t\varphi(t)\|_\cG^2
			+\frac{1}{2}\,\|\sqrt{-\Delta_\cM+m^2}\varphi(t)\|_\cG^2
                -\frac{m^2}{2}\,\|\varphi(t)\|_\cG^2
\end{split}
\end{equation*}
which is finite provided $\varphi(t)\in H^1_{\cM,m^2}(\cG)$ and
$\varphi(t)$ is strongly differentiable in $t$ for all $t\in
[-T,T]$. $\langle\varphi(t),-\Delta_\cM\varphi(t)\rangle$ is understood in the sense of
quadratic forms. The factor $1/2$ is inserted in order to conform with the standard
normalization convention.

In particular $E_\cM(\varphi(t))$ is finite for all $t$ when
$\varphi(t)=\psi(t)$ with $\psi(t)$ as given by~\eqref{wave1} with Cauchy data
$(\psi_0,\dot{\psi}_0)\in H^{1,0}_{\cM,m^2}(\cG)$.

\begin{proposition}\label{prop:tconstant}
Let $\varphi$ be any solution of the wave equation~\eqref{wave3} in the time
interval $[-T,T]$ and having the following properties. For all $t\in[-T,T]$
\begin{enumerate} \renewcommand{\labelenumi}{(\alph{enumi})}
    \item{$\varphi(t)\in H^2_{\cM,m^2}(\cG)$,}
    \item{$\varphi(t)$ is three times strongly differentiable in $t$,}
    \item{$\partial_t\varphi(t)\in  H^2_{\cM,m^2}(\cG)$,}
    \item{$\partial_t\varphi(t)$ also satisfies the wave equation.}
\end{enumerate}
Then the energy functional $E_\cM(\varphi(t))$, $t\in[-T,T]$, is time independent.
In addition, if $\cM$ is such that $-\Delta_\cM\ge 0$ holds, then the energy functional
$E_\cM(\varphi(t))$, $t\in[-T,T]$, is non-negative and vanishes if and only if both
$-\Delta_\cM\varphi(t)$ and $\partial_t\varphi(t)$ vanish for all times $t\in[-T,T]$.
\end{proposition}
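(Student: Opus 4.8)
The plan is to differentiate $E_\cM(\varphi(t))$ with respect to $t$ and show the derivative vanishes. First I would justify that $t\mapsto E_\cM(\varphi(t))$ is differentiable: under hypotheses (a)--(d) both summands make sense, and the term $\tfrac12\langle\varphi(t),-\Delta_\cM\varphi(t)\rangle$ can be written via Proposition~\ref{prop:sesq} as $\tfrac12\langle\varphi(t)',\varphi(t)'\rangle_\cG+\tfrac12\langle[\varphi(t)],\Omega_\cM[\varphi(t)]\rangle_{{}^d\cK}$, so that
\begin{equation*}
    E_\cM(\varphi(t))
        =\tfrac12\|\partial_t\varphi(t)\|_\cG^2
        +\tfrac12\|\varphi(t)'\|_\cG^2
        +\tfrac12\langle[\varphi(t)],\Omega_\cM[\varphi(t)]\rangle_{{}^d\cK}.
\end{equation*}
Here $\varphi(t)'$ denotes the spatial derivative. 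Because $\varphi(t)\in H^2_{\cM,m^2}$ and $\partial_t\varphi(t)\in H^2_{\cM,m^2}$, and $\partial_t\varphi(t)$ also solves the wave equation, one checks that $\partial_t$ commutes with the spatial derivative and with taking boundary values, so each of the three terms is differentiable in $t$ with the obvious product-rule derivative.

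Next I would compute the derivative term by term. For the first term, $\tfrac{d}{dt}\tfrac12\|\partial_t\varphi\|_\cG^2=\Re\langle\partial_t\varphi,\partial_t^2\varphi\rangle_\cG=\Re\langle\partial_t\varphi,\Delta_\cM\varphi\rangle_\cG$, using the wave equation $\partial_t^2\varphi=\Delta_\cM\varphi$ and $m^2$ absorbed appropriately (i.e.\ using the original unshifted form $-\Delta_\cM$). For the remaining two terms, note that their sum equals $\tfrac12\langle\partial_t\varphi,-\Delta_\cM\partial_t\varphi\rangle_\cG$ is \emph{not} quite what appears; instead I differentiate $\tfrac12\langle\varphi,-\Delta_\cM\varphi\rangle$ directly as a quadratic form: since $\varphi(t)\in\mathrm{Dom}(-\Delta_\cM)$ and $\partial_t\varphi(t)\in\mathrm{Dom}(-\Delta_\cM)$ with $-\Delta_\cM$ self-adjoint and time-independent, $\tfrac{d}{dt}\langle\varphi,-\Delta_\cM\varphi\rangle=2\Re\langle\partial_t\varphi,-\Delta_\cM\varphi\rangle_\cG$. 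Adding the two contributions gives $\tfrac{d}{dt}E_\cM(\varphi(t))=\Re\langle\partial_t\varphi,\Delta_\cM\varphi\rangle_\cG+\Re\langle\partial_t\varphi,-\Delta_\cM\varphi\rangle_\cG=0$, which is the desired time-independence.

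The main obstacle is the rigorous justification of differentiating $\langle\varphi(t),-\Delta_\cM\varphi(t)\rangle$ under the quadratic-form interpretation and of interchanging $\partial_t$ with the spatial derivative and with evaluation of boundary values; this is exactly why hypotheses (a)--(d) are imposed. One handles it by writing, for $t,t'\in[-T,T]$,
\begin{equation*}
    \langle\varphi(t),-\Delta_\cM\varphi(t)\rangle-\langle\varphi(t'),-\Delta_\cM\varphi(t')\rangle
        =\langle\varphi(t)-\varphi(t'),-\Delta_\cM\varphi(t)\rangle
        +\langle\varphi(t'),-\Delta_\cM(\varphi(t)-\varphi(t'))\rangle,
\end{equation*}
using self-adjointness on the second term, then dividing by $t-t'$ and passing to the limit with the help of strong continuity of $\partial_t\varphi(t)$ in $H^2_{\cM,m^2}$ (which follows from Proposition~\ref{prop:waveeq} applied to $\partial_t\varphi$, since by (c)--(d) it is itself a solution with $H^{2}$ data). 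Finally, for the second assertion: if $-\Delta_\cM\ge0$ then every term in the displayed formula for $E_\cM$ is non-negative (using $\rho_\cM\equiv1$, $m^2=0$, and $\Omega_\cM\ge0$ which is equivalent to $-\Delta_\cM\ge0$ being implied by Corollary~\ref{cor:qpos}, though here $-\Delta_\cM\ge0$ directly gives $\langle\varphi,-\Delta_\cM\varphi\rangle\ge0$); hence $E_\cM(\varphi(t))\ge0$, and it vanishes iff $\|\partial_t\varphi(t)\|=0$ and $\langle\varphi(t),-\Delta_\cM\varphi(t)\rangle=0$, the latter being equivalent to $\sqrt{-\Delta_\cM}\varphi(t)=0$, i.e.\ $-\Delta_\cM\varphi(t)=0$; by time-independence this holds for one $t$ iff it holds for all $t\in[-T,T]$.
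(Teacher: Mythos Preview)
Your proposal is correct and follows essentially the same route as the paper: differentiate $E_\cM(\varphi(t))$ in $t$, use the wave equation $\partial_t^2\varphi=\Delta_\cM\varphi$ together with self-adjointness of $-\Delta_\cM$ (and the commutation $\partial_t(-\Delta_\cM\varphi)=-\Delta_\cM\partial_t\varphi$, which the paper derives exactly as you do from hypotheses (b) and (d)) to see that the derivative vanishes; for the second assertion, both you and the paper argue that $E_\cM=0$ forces $\partial_t\varphi=0$ and $\sqrt{-\Delta_\cM}\,\varphi=0$.

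One small remark: your initial detour through the $\Omega_\cM$-representation of the energy and the claim that ``$\partial_t$ commutes with taking boundary values'' under hypotheses (a)--(d) alone is not something the paper establishes at this level of regularity --- that interchange is the content of Lemma~\ref{lem:diffpsi}, which requires the stronger $H^{4,3}_\cM$ hypothesis. Fortunately you abandon that route and differentiate $\langle\varphi,-\Delta_\cM\varphi\rangle$ directly, which avoids the issue entirely and is exactly what the paper does.
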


Again observe that for $\varphi(t)=\psi(t)$ of the form \eqref{wave1} with
Cauchy data $(\psi_0,\dot{\psi}_0)\in H^{3,2}_{\cM,m^2}(\cG)$ the assumptions of
Proposition~\ref{prop:tconstant} are satisfied.
\begin{proof}
By the assumptions we are free to differentiate $E_\cM(\varphi(t))$ with respect to the
time $t\in\R$. We claim that the relation
\begin{equation}\label{commute}
\partial_t\bigl(-\Delta_\cM\varphi(t)\bigr)=-\Delta_\cM\partial_t\varphi(t)
\end{equation}
is valid. In fact, by assumption
$$
\partial_t\bigl(-\Delta_\cM\varphi(t)\bigr)=-\partial_t\bigl(\partial_t^2\varphi(t)\bigr)=
-\partial_t^3\varphi(t)=-\partial_t^2\bigl(\partial_t\varphi(t)\bigr)=
-\Delta_\cM\partial_t\varphi(t).
$$
holds. Another way to obtain this is to observe that $-\Delta_\cM$ is a
linear operator and therefore \eqref{commute} holds.
Thus by standard
calculations the time derivative of $E_\cM(\varphi(t))$ vanishes, thus establishing the
first claim. As for the second claim, assume that $E_\cM(\varphi(t))=0$ for all $t$.
But this implies $\partial_t\varphi(t)=0$ and $\sqrt{-\Delta_\cM}\varphi(t)=0$, which
in turn gives $-\Delta_\cM\psi(t)=0$. The converse is trivial.
\end{proof}

\begin{theorem}\label{thmu}
Let $\cM$ be such that $-\Delta_\cM\ge 0$ and such that $0$ is not an eigenvalue of
$-\Delta_\cM$. Let $\varphi_1(t)$ and $\varphi_2(t)$ be two solutions of the wave equation
for $t\in[-T,T]$ satisfying the assumptions of Proposition~\ref{prop:tconstant} and
with the same initial values,
$$
    \varphi_1(t=0)=\varphi_2(t=0),\qquad \partial_t\varphi_1(t=0)=\partial_t\varphi_2(t=0).
$$
Then $\varphi_1(t)=\varphi_2(t)$ holds for all $t\in[-T,T]$. In particular $\psi$ as
given by \eqref{wave1} is the unique solution of the wave equation for given Cauchy
data $(\psi_0,\dot{\psi}_0)\in H^{3,2}_{\cM,m^2}(\cG)$.
\end{theorem}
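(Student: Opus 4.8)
The plan is to run the classical energy argument for uniqueness, now licensed by Proposition~\ref{prop:tconstant}. First I would pass to the difference: put $\varphi(t)=\varphi_1(t)-\varphi_2(t)$. Since $\Box_\cM$ and $\Delta_\cM$ are linear, $\varphi$ again solves the wave equation $\Box_\cM\varphi(t)=0$ on $[-T,T]$, and each of the four conditions (a)--(d) in Proposition~\ref{prop:tconstant} is a linear condition, hence inherited by $\varphi$ from $\varphi_1$ and $\varphi_2$. By hypothesis $\varphi(0)=0$ and $\partial_t\varphi(0)=0$, so it suffices to show that any solution $\varphi$ enjoying these regularity properties and having vanishing Cauchy data is identically zero.

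For this I would invoke conservation of energy. Evaluating the energy functional at $t=0$, the vanishing of $\varphi(0)$ and $\partial_t\varphi(0)$ gives at once
\[
E_\cM(\varphi(0))=\frac{1}{2}\,\|\partial_t\varphi(0)\|_\cG^2+\frac{1}{2}\,\langle\varphi(0),-\Delta_\cM\varphi(0)\rangle=0.
\]
By the first assertion of Proposition~\ref{prop:tconstant}, $t\mapsto E_\cM(\varphi(t))$ is constant on $[-T,T]$, hence $E_\cM(\varphi(t))=0$ for all $t\in[-T,T]$. Here the sign hypothesis $-\Delta_\cM\ge 0$ enters essentially: by the second assertion of Proposition~\ref{prop:tconstant}, $E_\cM(\varphi(t))=0$ forces $\partial_t\varphi(t)=0$ and $-\Delta_\cM\varphi(t)=0$ for every $t\in[-T,T]$. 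Since $\varphi$ is strongly continuous in $t$ and $\partial_t\varphi\equiv 0$ as an identity in $L^2(\cG)$, the Bochner-integral identity $\varphi(t)=\varphi(0)+\int_0^t\partial_t\varphi(s)\,ds$ yields $\varphi(t)=0$ for all $t\in[-T,T]$, i.e.\ $\varphi_1\equiv\varphi_2$.

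For the ``in particular'' clause: when $(\psi_0,\dot\psi_0)\in H^{3,2}_{\cM,m^2}(\cG)$ (with $m^2=0$), the function $\psi$ defined by~\eqref{wave1} solves~\eqref{wave3} with Cauchy data $(\psi_0,\dot\psi_0)$ and satisfies (a)--(d) of Proposition~\ref{prop:tconstant}, as already observed just before its proof. Applying the uniqueness statement just established to $\psi$ and to any other solution with the same Cauchy data shows that $\psi$ is the unique solution on $[-T,T]$; since $T$ is arbitrary, $\psi$ is the unique solution on all of $\R$. The extra requirement that $0$ not be an eigenvalue of $-\Delta_\cM$ is not needed in the argument above beyond ensuring that the Sobolev seminorms $\|\cdot\|_{\cM,n}$ entering the relevant function spaces are genuine norms.

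I do not expect a genuine obstacle here: the substantive content is entirely contained in Proposition~\ref{prop:tconstant}, whose proof rests on the sesquilinear identity~\eqref{sesquilinear:2} and the positivity $\Omega_\cM\ge 0$. The only step deserving a line of care is the verification that $\varphi_1-\varphi_2$ inherits the regularity and domain conditions (a)--(d), and that is immediate from their linearity together with the linearity of $\Box_\cM$ and $\Delta_\cM$.
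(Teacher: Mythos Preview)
Your argument is correct and follows the same route as the paper: pass to the difference $\varphi=\varphi_1-\varphi_2$, use Proposition~\ref{prop:tconstant} to get $E_\cM(\varphi(t))\equiv 0$, read off $\partial_t\varphi(t)=0$, and integrate from the zero initial datum. Your observation that the hypothesis ``$0$ is not an eigenvalue of $-\Delta_\cM$'' is not actually invoked in the argument is accurate (the conclusion $\varphi\equiv 0$ follows from $\partial_t\varphi\equiv 0$ and $\varphi(0)=0$ alone, without needing $-\Delta_\cM\varphi(t)=0$ to force $\varphi(t)=0$); the paper's terse proof does not spell this out, and your side remark about the Sobolev seminorms being norms is not the real reason either, since the spaces $H^\alpha_{\cM,m^2}$ carry a genuine Hilbert norm regardless.
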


For a given metric graph, necessary and sufficient conditions on $\cM$ for
$-\Delta_\cM$ to have~$0$ as an eigenvalue are given in \cite{Schrader}, see also
Corollary~\ref{cor:qpos}. If $\psi^0$ is such an eigenfunction, it has to be
constant on each edge and in particular zero on each external edge. Also $\psi^0(t)$ as given by
\eqref{wave1} with Cauchy data $(\psi^0,\dot{\psi}=0)$ satisfies $\psi^0(t)=\psi^0$
for all~$t\in\R$.

\begin{proof}[Proof of Theorem~\ref{thmu}]
Standard and well known arguments can now be used. Indeed, $\varphi_1(t)-\varphi_2(t)$ is
a solution of the wave equation with vanishing initial data and we can use the
previous proposition.
\end{proof}

In order to establish finite propagation speed, we introduce a local form of the
energy functional. As a motivation we use \eqref{sesquilinear:2} to rewrite the
energy functional as
\begin{equation}\label{uni1}
E_\cM(\psi(t))=\frac{1}{2}\,\left(\|\partial_t\psi(t)\|_\cG^2
				+\|\psi(t)^\prime\|_\cG^2
				+\langle[\psi(t)], \Omega_\cM[\psi(t)]\rangle_{{}^d\cK}\right),
\end{equation}
which is finite for all $t$ provided the Cauchy data $(\psi_0,\dot{\psi}_0)$ are such
that $\psi_0\in H^2_{\cM,m^2}(\cG)$, $\dot{\psi}_0\in  H^1_{\cM,m^2}(\cG)$,
cf.\ the remarks after Proposition~\ref{prop:sesq} and Lemma~\ref{lem_wkernel}.

The first two terms on the right hand side form the energy functional for solutions of the
wave equation on smooth manifolds, see e.g.~\cite{Cheeger_Gromov_Taylor, Evans, Taylor, TaylorI}. So it is the last term which is special for the present context of metric graphs, which are singular manifolds. Of these three terms it is the only one, in which the boundary condition $\cM$
enters and, as we shall see, in a manageable way. For the remainder of this
section we assume that $\cM$ is such that $\Omega_\cM\ge 0$, and hence also
$-\Delta_\cM\ge 0$, as well as $(\psi_0,\dot\psi_0)\in H^{4,3}_\cM(\cG)$. Then
Lemma~\ref{lem:Schwarz} in the appendix entails that $\|\psi(t)^\prime\|^2$ is
differentiable in $t$. Since $\langle \psi(t),-\Delta_\cM\psi(t)\rangle
= \langle \psi(t),-\partial_t^2\psi(t)\rangle $ is also differentiable in $t$ (see
Proposition~\ref{prop:waveeq}), we conclude that
$$
\langle[\psi(t)], \Omega_\cM[\psi(t)]\rangle_{{}^d\cK}
	=\langle \psi(t),-\Delta_\cM\psi(t)\rangle-\|\psi(t)^\prime\|_\cG^2
$$
is differentiable with respect to~$t$ too.

Actually, more is valid and will be used, namely we also have differentiability of
the boundary values $[\psi(t)]$ themselves.

\begin{lemma}\label{lem:diffpsi} Assume the boundary condition $\cM$ is such that
$\Omega_\cM\ge0$ and hence also $-\Delta_\cM\ge 0$ is valid. For Cauchy data
$(\psi_0,\dot\psi_0)$ in $H^{4,3}_{\cM}(\cG)$ the boundary value $[\psi(t)]$ is
continuously differentiable in $t$ and
\begin{equation}\label{partial1}
	\partial_t[\psi(t)]=[\partial_t\psi(t)]
\end{equation}
holds.
\end{lemma}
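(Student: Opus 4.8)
The plan is to reduce the statement to a local, one-dimensional fact about solutions of the wave equation on a single edge, and then to handle the boundary values edge by edge. Recall that the boundary value map $\psi\mapsto[\psi]$ is built from the componentwise evaluations $\psi_i(0),\psi_i(a_i),\psi_i'(0),\psi_i'(a_i)$ on each internal edge and $\psi_e(0),\psi_e'(0)$ on each external edge. Thus it suffices to show that for each edge $i$ the scalar-valued functions $t\mapsto\psi_i(t,0)$, $t\mapsto\psi_i(t,a_i)$ and their spatial-derivative analogues are $C^1$ in $t$ with derivative equal to the corresponding boundary value of $\partial_t\psi(t)$.

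First I would record what the hypotheses on the Cauchy data buy us. Since $(\psi_0,\dot\psi_0)\in H^{4,3}_{\cM}(\cG)$, Proposition~\ref{prop:waveeq}(a) (applied with $n=0$, $l=4$ and with $n=1$, $l=3$) tells us that $\psi(t)\in H^4_\cM(\cG)$ and $\partial_t\psi(t)\in H^3_\cM(\cG)$ for all $t$, that $\partial_t^n\psi(t)$ is of the form \eqref{wave1} for small $n$, and that both $\psi(t)$ and $\partial_t\psi(t)$ admit boundary values $[\psi(t)]$ and $[\partial_t\psi(t)]$. Moreover, from Corollary~\ref{cor:psidiff} (with $n=4$) $\psi(t)$ is four times strongly continuously differentiable in $t$, and the a priori estimates of Corollaries~\ref{cor:psidiffnorm} and~\ref{cor:psiprimet} give, for $j\le 2$, Lipschitz-in-$t$ bounds in the $\|\cdot\|_{\cM,0}=\|\cdot\|_\cG$ norm for the spatial derivatives $(\partial_t^k\psi(t))^{(j)}$. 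The key analytic input is then Lemma~\ref{lem:Schwarz} of the appendix, a Sobolev-type estimate that controls the pointwise values $f(x)$ and $f'(x)$ of a function $f$ on an edge by $L^2$-norms of $f$ and $f''$. Applying this on each edge to $f=\psi_i(t_1)-\psi_i(t_2)$, one converts the $L^2$-Lipschitz bounds just mentioned into uniform pointwise Lipschitz bounds for $\psi_i(t,x)$ and $\psi_i'(t,x)$, in particular at the endpoints $x\in\{0,a_i\}$; so $[\psi(t)]$ is at least Lipschitz, hence differentiable a.e.

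Next, to identify the derivative, I would write, for each fixed edge component, the difference quotient
\begin{equation*}
	\frac{\psi_i(t+h,x)-\psi_i(t,x)}{h}-(\partial_t\psi)_i(t,x)
		=\frac1h\int_t^{t+h}\bigl((\partial_t\psi)_i(s,x)-(\partial_t\psi)_i(t,x)\bigr)\,ds,
\end{equation*}
which holds pointwise because, by the above, $s\mapsto\psi_i(s,x)$ is absolutely continuous with derivative $s\mapsto(\partial_t\psi)_i(s,x)$ — and here one must justify interchanging the pointwise evaluation at $x$ with the Bochner integral $\psi(t+h)-\psi(t)=\int_t^{t+h}\partial_t\psi(s)\,ds$ in $L^2(\cG)$; this is exactly where Lemma~\ref{lem:Schwarz} is used again, since the evaluation functional $g\mapsto g_i(x)$ is continuous on $H^2_\cM(\cG)$ and the integrand lies in $H^3_\cM(\cG)$, so evaluation commutes with the integral. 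Then the same difference quotient for $\partial_t\psi$, namely $\|(\partial_t\psi)_i(s,\cdot)-(\partial_t\psi)_i(t,\cdot)\|$ together with its second-derivative analogue, is controlled by $|s-t|$ via Corollary~\ref{cor:psiprimet}(b) applied to $\partial_t\psi$ (whose Cauchy data lie in $H^{3,2}_\cM$); feeding this through Lemma~\ref{lem:Schwarz} shows the right-hand side above is $O(h)$ uniformly in $x$. Letting $h\to0$ gives $\partial_t\psi_i(t,x)=(\partial_t\psi)_i(t,x)$ pointwise, and the analogous argument for the spatial derivative gives $\partial_t\psi_i'(t,x)=(\partial_t\psi)_i'(t,x)$. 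Assembling these equalities over all edges and the relevant endpoints yields precisely \eqref{partial1}, and continuity of $t\mapsto[\partial_t\psi(t)]$ — hence $C^1$-ness of $t\mapsto[\psi(t)]$ — follows from the strong continuity in $H^3_\cM(\cG)$ of $\partial_t\psi(t)$ (Lemma~\ref{lem_wkernel}(a)) combined once more with the continuity of the boundary-value functionals on $H^2_\cM(\cG)$ furnished by Lemma~\ref{lem:Schwarz}.

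The main obstacle I anticipate is the interchange of pointwise boundary evaluation with the Bochner integral and with limits: none of the preceding $L^2$-level estimates is by itself enough, and everything hinges on having a Sobolev embedding $H^2_\cM(\cG)\hookrightarrow$ (functions with continuous first derivative on closed edges) with a quantitative bound — this is the role of the appendix's Lemma~\ref{lem:Schwarz}, and the proof is essentially the bookkeeping needed to apply it uniformly on every edge while keeping track of the constants (which depend only on the edge lengths $\underline a$, hence are uniform over the finite graph). Once that embedding is in hand, the differentiation-under-the-integral argument is routine.
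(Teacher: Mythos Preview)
Your argument is correct and takes a somewhat different route from the paper's. You package the boundary-value map $\psi\mapsto[\psi]$ as a bounded linear operator from $H^2_\cM(\cG)$ into ${}^d\cK$ (via the one-dimensional Sobolev embedding on each edge), and then commute this operator with the Bochner identity $\psi(t+h)-\psi(t)=\int_t^{t+h}\partial_s\psi(s)\,ds$, which holds in $H^2_\cM$ because $s\mapsto\partial_s\psi(s)$ is continuous (indeed Lipschitz) there under the hypothesis $(\psi_0,\dot\psi_0)\in H^{4,3}_\cM$; the difference quotient is then handled by the pointwise-in-$x$ Lipschitz control of $(\partial_t\psi)_i(s,x)$ and $(\partial_t\psi)_i'(s,x)$ in $s$, which uses the full strength of $H^{4,3}$ via Corollary~\ref{cor:psiprimet}(b) with $j=2$, $k=1$. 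The paper instead argues more classically: it uses the same Sobolev $L^\infty$ bounds (Lemma~\ref{app:lem:sob}) to show that $\psi_i$, $\partial_x\psi_i$, $\partial_t\psi_i$ and $\partial_x\partial_t\psi_i$ are jointly uniformly continuous on $\R\times I_i$, and then invokes the Clairaut--Schwarz theorem on interchange of mixed partials to obtain $\partial_t\partial_x\psi_i=\partial_x\partial_t\psi_i$ --- this is exactly the content of Lemma~\ref{lem:Schwarz}. Your functional-analytic route is arguably tidier for the statement at hand; the paper's version has the side benefit of producing the joint-continuity statements that are also used for Lemma~\ref{lem:cont}.

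One correction of labels: you guessed that Lemma~\ref{lem:Schwarz} is the Sobolev-type pointwise estimate, but in fact it is the Clairaut--Schwarz conclusion about mixed partials. The Sobolev bounds your argument actually needs are recorded in the appendix as inequality~\eqref{sob1} (and its graph version) and Lemma~\ref{app:lem:sob}. This affects only the cross-references, not the mathematics.
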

Observe in this context that since $\psi(t)\in {\rm Dom}(-\Delta_\cM)$ is valid for all $t$,
the relation $\cP_\cM[\psi(t)]=[\psi(t)]$ holds for all $t$ which upon differentiation gives
\begin{equation}\label{partial2}
\cP_\cM[\partial_t\psi(t)]=[\partial_t\psi(t)].
\end{equation}
Alternatively \eqref{partial2} follows from $\partial_t\psi(t)\in {\rm Dom}(-\Delta_\cM)$, which in turn
follows from the assumptions on the Cauchy data. The proof of Lemma~\ref{lem:diffpsi} is
based on Sobolev estimates in conjunction with Corollary~\ref{cor:psiprimet} and
will be given in the appendix.

%%%%%%%%%%%%%%%%%%%%%%%%%%%%%%%%%%%%%%%%%%%%%%%%%%%%%%%%%%%%%%%%%%%%%%%%%%%%%%%%%%%%%%%%%%%%%%%%%%%%%%%

\section{Finite Propagation Speed}\label{sec:finprop}

%%%%%%%%%%%%%%%%%%%%%%%%%%%%%%%%%%%%%%%%%%%%%%%%%%%%%%%%%%%%%%%%%%%%%%%%%%%%%%%%%%%%%%%%%%%%%%%%%%%%%%%

The form \eqref{uni1} allows us to introduce a local energy functional. Fix a point
$p\in\cG$ and a time $t_0\in\R$. For any $\psi$ of the form \eqref{wave1} with Cauchy data
$(\psi_0,\dot{\psi}_0)\in H^{2,1}_{\cM,m^2}(\cG)$ and $0\le t\le t_0$,
the time dependent \emph{local energy functional} is defined as
\begin{equation}\label{energy}
    e(t)=\frac{1}{2}\left(\|\partial_t\psi(t)\|^2_{B(p,t_0-t)}
            +\|\psi(t)^\prime\|^2_{B(p,t_0-t)}
            +\langle[\psi(t)], \Omega_t[\psi(t)]\rangle_{{}^d\cK}\right),
\end{equation}
where
\begin{equation*}\label{omegat}
    \Omega_t=\Omega_{\cM,\,B(p,t_0-t)\cap \cV}
            =P_t\Omega_\cM P_t=P_t\Omega_\cM=\Omega_\cM P_t,
\end{equation*}
cf.~\eqref{qv}, with
\begin{equation*}\label{pt}
    P_t=\sum_{v\in B(p,t_0-t)\cap \cV}P_v
\end{equation*}
satisfying $P_t\le P_{t^\prime}$ for $t^\prime\le t$. $\Omega_t$ is piecewise
constant in $t$ with possible jumps at $\cT(p)$. Observe that all terms on the
right hand side of~\eqref{energy} are finite: By Proposition~\ref{prop:waveeq},
the hypothesis $(\psi_0,\dot\psi_0)\in H^{2,1}_{\cM,m^2}(\cG)$ implies that
$\psi(t)\in H^2_{\cM,m^2}(\cG) = {\rm Dom}(\Delta_\cM)$ for all $t\in\R$. Thus, on every
edge of $\cG$, $\psi(t)$ and $\psi(t)'$ are continuous functions, and in particular
their boundary values $[\psi(t)]$ at the vertices of $\cG$ are well-defined and finite.

The initial value of $e(t)$ can be expressed in terms of the Cauchy data themselves as
\begin{equation*}\label{energy0}
    e(t=0)=\frac{1}{2}\left(\|\dot{\psi}_0\|^2_{B(p,t_0)}+
            \|\psi_0^\prime\|^2_{B(p,t_0)}+\langle[\psi_0],
            \Omega_{t=0}[\psi_0]\rangle_{{}^d\cK}\right).
\end{equation*}
We will be interested in the situation when the condition $\Omega_\cM\ge 0$ is
satisfied and then obviously $0\le e(t)$ for all $t$. Also by Lemma \ref{lem:qs}
\begin{equation}\label{omegamon}
    \Omega_t\le \Omega_{t^\prime},\quad t^\prime\le t,
\end{equation}
is valid. Now, for $e(t)$ to vanish when $0\le \Omega_\cM$, it is necessary that
both $\partial_t\psi(t)$ and $\psi(t)^\prime$ vanish on $B(p,t_0-t)$. In particular
$\psi(t)$ is then piecewise constant, that is $\psi_i(t)$ is constant on each
$B(p,t_0-t)\cap I_i,\;i\in\cE\cup\cI$, which is a connected set. In the case where actually
$\Omega_\cM> 0$ holds, for $e(t)$ to vanish it is necessary and sufficient that both
$\partial_t\psi(t)$ and $\psi(t)^\prime$ vanish on $B(p,t_0-t)$ and that
$P_t[\psi(t)]=0$.

We want to show that $e(t)$ is non-increasing in $t$. To establish
this we need a couple of Lemmas. The first one is a local version of
Proposition~\ref{prop:sesq}. For its formulation we need an adaption of the familiar notion of a
normal derivative to the present context.

\begin{definition}\label{def:2}
Assume $0<t_0-t\notin\cT(p)$ with $p\cong (k,y)$. The \emph{inward normal
derivative} of $\psi$ at $q\in\partial B(p,t_0-t)$ with coordinate $q\cong (i,x)
\;(0<x<a_i,\,i\in\cE\cup\cI)$ is defined as
\begin{equation*}
    \partial_{\bf n}\psi(q)
        =\begin{cases}
            \phantom{-}\psi_i^\prime(x),
                & \text{if $k=i$, $x<y$, or if $i\neq k$ and $[x,a_i]\subset B(p,t_0-t)$,}\\[1ex]
            -\psi_i^\prime(x),
                & \text{if $k=i$, $y<x$, or if $i\neq k$ and $[0,x]\subset B(p,t_0-t)$.}
\end{cases}
\end{equation*}
\end{definition}

The sign convention is made to conform with the sign convention in the definition
\eqref{lin1:add} of $\underline{\psi}^\prime$ and hence of $[\psi]$. As an example
consider the case $k=i\in \cE$, again with $ p=(k,y)$ and in addition $t$ so close
to $t_0$ that $0<t_0-t<y$. Then $B(p,t_0-t)$ is an interval on $I_k\cong [0,\infty)$
of the form $[y-t_0+t,y+t_0-t]$ centered at $y$ and of length
$|B(p,t_0-t)|=2(t_0-t)$. So $\partial B(p,t_0-t)$ consists of the two points
$(k,y-t_0+t)$ and $(k,y+t_0-t)$ such that $\partial_{\bf
n}\psi(k,y-t_0+t)=\psi_k^\prime(y-t_0+t)$ and $\partial_{\bf
n}\psi(k,y+t_0-t)=-\psi_k^\prime(y+t_0-t)$.

\begin{lemma}\label{lem:2}
For every boundary condition $\cM$, and any $t_0-t\in\R_+\setminus\cT(p)$
the relation
\begin{equation}\label{partialint}
    \langle\varphi, -\Delta_\cM \psi\rangle_{B(p,t_0-t)}
        = \langle\varphi^\prime,\psi^\prime\rangle_{B(p,t_0-t)}
            +\langle[\varphi], \Omega_{t}[\psi]\rangle_{{}^d\cK}+
            \sum_{q\in \partial B(p,t_0-t)}\overline{\varphi(q)}\partial_{\bf n}\psi(q)
\end{equation}
is valid for any $\varphi,\psi\in {\rm Dom}(-\Delta_\cM)$.
\end{lemma}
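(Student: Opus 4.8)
The plan is to prove \eqref{partialint} by reducing it to the global identity \eqref{sesquilinear:2} of Proposition~\ref{prop:sesq}. Fix $\varphi,\psi\in{\rm Dom}(-\Delta_\cM)$ and $t_0-t\in\R_+\setminus\cT(p)$, and write $B=B(p,t_0-t)$. Since $t_0-t\notin\cT(p)$, the boundary $\partial B$ contains no vertices and no points of coincidence, so $B$ decomposes into a finite disjoint union of subintervals $[\alpha_\ell,\beta_\ell]\subseteq I_{j_\ell}$ of the edges of $\cG$, each closed subinterval being either a whole edge contained in $B$ or a ``half-open-in-$B$'' piece whose endpoints not lying in $\partial B$ are actual vertices of $\cG$ lying in $B$. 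On each such subinterval, $-\Delta_\cM\psi = \psi''$ by \eqref{Delta:0}, so integration by parts on the interval gives
\begin{equation*}
	\langle\varphi,-\Delta_\cM\psi\rangle_{[\alpha_\ell,\beta_\ell]}
		= \langle\varphi',\psi'\rangle_{[\alpha_\ell,\beta_\ell]}
			- \bigl[\overline{\varphi}\,\psi'\bigr]_{\alpha_\ell}^{\beta_\ell}.
\end{equation*}
This is legitimate because $\varphi,\psi\in{\rm Dom}(-\Delta_\cM)\subset\cD$, so $\varphi,\psi,\varphi',\psi'$ are absolutely continuous on each closed subinterval. Summing over $\ell$, the boundary terms split into two groups: those endpoints lying in $\partial B$, and those endpoints which are vertices of $\cG$ contained in the interior of $B$.

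The first group, by the sign convention built into Definition~\ref{def:2}, assembles precisely into $\sum_{q\in\partial B}\overline{\varphi(q)}\,\partial_{\bf n}\psi(q)$: at an endpoint $q\cong(i,x)$ of $B$ lying in $\partial B$, the interval $[\alpha_\ell,\beta_\ell]$ lies on the side of $q$ that belongs to $B$, and the outward-pointing boundary term $-\overline{\varphi(q)}\,(\pm\psi_i'(x))$ coming from integration by parts matches $\overline{\varphi(q)}\,\partial_{\bf n}\psi(q)$ by the case distinction in the definition. For the second group, I would collect, for each vertex $v\in B\cap\cV = B\cap\cV$ (equivalently $v\in B$ since $\partial B\cap\cV=\emptyset$ here), all the edge-endpoints incident with $v$ that occur; because $\cG$ is connected and $\partial B$ contains no vertices, every edge incident with such a $v$ is contained in $B$ near $v$, so every half-edge at $v$ contributes. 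The resulting sum over those endpoints is exactly $-\langle\underline\varphi,\underline{\psi'}\rangle$ restricted to the block $\cL_v$, i.e.\ $-\langle {}^dQ_v[\varphi], {}^dQ_v[\psi]\rangle$ paired through the relevant part of $J$ — this is the same Green-type computation that turns $\widehat\omega(\varphi,\psi)$ into $\omega([\varphi],[\psi])$ in the global case, but now localized to the vertices in $B$.

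To finish, I would identify this vertex-sum with $\langle[\varphi],\Omega_t[\psi]\rangle_{{}^d\cK}$. By the discussion preceding Lemma~\ref{lem:2} — specifically $\Omega_t = P_t\Omega_\cM P_t$ with $P_t = \sum_{v\in B\cap\cV}P_v$ and $\Omega_\cM = \bigoplus_{v\in\cV}\Omega_v$, $\Omega_v = P_v\Omega P_v$ — we have $\Omega_t = \bigoplus_{v\in B\cap\cV}\Omega_v$. Applying the single-vertex version of Proposition~\ref{prop:sesq} (equivalently, the decomposition $\widehat\omega = \bigoplus_v\widehat\omega_v$ and $\widehat\omega_v([\varphi],[\psi]) = \langle[\varphi],\Omega_v'[\psi]\rangle$ with $\Omega_v - \Omega_v^\dagger$ the $v$-block of $J$, since $\varphi,\psi\in{\rm Dom}(-\Delta_\cM)$ forces $[\varphi],[\psi]\in\cM$ hence the $J$-part annihilates), the half-edge boundary sum at vertex $v$ equals $\langle {}^dQ_v[\varphi],\Omega_v\, {}^dQ_v[\psi]\rangle = \langle[\varphi],\Omega_v[\psi]\rangle$. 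Summing over $v\in B\cap\cV$ and adding the $\langle\varphi',\psi'\rangle_B$ terms yields \eqref{partialint}. The main obstacle, and the only place real care is needed, is the bookkeeping of signs and of which half-edges belong to $B$: one must check that the condition $t_0-t\notin\cT(p)$ guarantees $\partial B\cap(\cV\cup{\bf Coin}(p,t_0-t))=\emptyset$, so that $B$ is genuinely a union of intervals with endpoints either in $\partial B$ or at fully-surrounded vertices, and that the orientation conventions of \eqref{lin1:add} and Definition~\ref{def:2} interlock correctly; the analytic content is just termwise integration by parts plus Proposition~\ref{prop:sesq} applied vertex-by-vertex.
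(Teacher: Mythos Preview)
Your proposal is correct and follows essentially the same route as the paper's proof: integrate by parts edge-by-edge on the pieces of $B(p,t_0-t)$, split the boundary contributions into those at $\partial B$ (giving the normal-derivative sum via Definition~\ref{def:2}) and those at vertices in $B$, and then use $[\varphi],[\psi]\in\cM$ together with the decomposition $\Omega_\cM=\bigoplus_v\Omega_v$ to identify the vertex sum with $\langle[\varphi],\Omega_t[\psi]\rangle_{{}^d\cK}$. The paper writes the raw vertex boundary term as $\sum_{v\in B}\langle{}^dQ_v[\varphi],\Omega\,{}^dQ_v[\psi]\rangle$ and then inserts $P_\cM$ via \eqref{pmv}, which is exactly your ``Proposition~\ref{prop:sesq} applied vertex-by-vertex''; note only that the sign in your intermediate expression should be $+\langle\underline\varphi_v,\underline{\psi'}_v\rangle$, consistent with the sign conventions built into \eqref{lin1:add}.
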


\begin{proof}
Observe that by the remark after Proposition~\ref{prop:sesq} or --- in the case that
$-\Delta_\cM\ge 0$ --- more easily by Corollary~\ref{cor:psiprime} both
$\varphi^\prime$ and $\psi^\prime$ are elements in
$L^2(\cG)$. Furthermore since ${\rm Dom}(-\Delta_\cM)\subset \cD$ all terms on the
right hand side of \eqref{partialint} are well defined and finite. This allows us to
perform an integration by parts and to use Green's identity. Firstly there are boundary
contributions at those vertices, which are contained in $B(p,t_0-t)$, and secondly
at points of the boundary $\partial B(p,t_0-t)$ giving
\begin{equation*}
\begin{split}
    \langle\varphi, &-\Delta_\cM \psi\rangle_{B(p,t_0-t)}\\
        &= \langle\varphi^\prime,\psi^\prime\rangle_{B(p,t_0-t)}
\            +\sum_{v \in B(p,t_0-t)} \langle ^d Q_v[\varphi], \Omega\, ^dQ_v[\psi]\rangle_{{}^d\cK}
            + \sum_{q\in \partial B(p,t_0-t)}\overline{\varphi(q)}\partial_{\bf n}\psi(q).
\end{split}
\end{equation*}
Now we insert $P_\cM[\varphi]=[\varphi]$ and $P_\cM[\psi]=[\psi]$, valid due to the
assumption $\varphi,\psi\in {\rm Dom}(-\Delta_\cM)$, into the second term. Using in
addition \eqref{pmv} we obtain
\begin{align*}
    \sum_{v \in B(p,t_0-t)}\langle\; ^dQ_v[\varphi], \Omega\; ^dQ_v[\psi]\rangle_{{}^d\cK}
        &=\sum_{v \in B(p,t_0-t)}\langle\;[\varphi],P_\cM \;^d Q_v\, \Omega\, ^dQ_v\;
                                                        P_\cM \,[\psi]\rangle_{{}^d\cK}\\[1ex]
        &=\langle\;[\varphi],\Omega_t\,[\psi]\rangle_{{}^d\cK}.\qedhere
\end{align*}
\end{proof}

\begin{proposition}\label{prop:ediff}
Assume that the boundary condition $\cM$ is such that $\Omega_\cM\ge0$ and hence
$-\Delta_\cM\ge 0$. Also let the Cauchy data $(\psi_0,\dot{\psi}_0)$ be such that
$\psi_0\in H^4_\cM(\cG)$ and $\dot{\psi}_0\in H^3_\cM(\cG)$. Then $e(t)$ is
differentiable at all points $t$ with $t_0-t\in\R_+\setminus \cT(p)$ and satisfies
$\partial_t e(t)\le 0$ there.
\end{proposition}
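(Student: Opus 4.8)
The plan is to reproduce the classical energy-flux computation for the wave equation, the only novelty being the vertex term $\langle[\psi],\Omega_t[\psi]\rangle$. Fix $t$ with $r:=t_0-t\in\R_+\setminus\cT(p)$. Since $r\notin\cT(p)$, the set $B(p,r)$ is a finite union of closed intervals whose finitely many free endpoints are exactly the points of $\partial B(p,r)$, none of them a vertex nor a point of coincidence; for $s$ in a neighbourhood of $t$ these endpoints move with unit speed and $\Omega_s$ stays constant (its jumps occur only at the critical times $\cT(p)$). The hypothesis $\psi_0\in H^4_\cM(\cG)$, $\dot\psi_0\in H^3_\cM(\cG)$ guarantees, via Proposition~\ref{prop:waveeq}, that $\psi(t),\partial_t\psi(t)\in\mathrm{Dom}(\Delta_\cM)$ for every $t$; via Lemma~\ref{lem:Schwarz} and Corollary~\ref{cor:psiprimet}, that $\|\psi(t)'\|^2_{B(p,r)}$ is differentiable in $t$ with $\partial_t(\psi(t)')=(\partial_t\psi(t))'$; and via Lemma~\ref{lem:diffpsi}, that $t\mapsto[\psi(t)]$ is $C^1$ with $\partial_t[\psi(t)]=[\partial_t\psi(t)]$. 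Hence $e(t)$ is differentiable at $t$ and the three summands may be differentiated separately.

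Differentiating the two $L^2$-terms by the one-dimensional co-area (Leibniz) rule produces the boundary flux $-\tfrac12\sum_{q\in\partial B(p,r)}\bigl(|\partial_t\psi(q)|^2+|\psi'(q)|^2\bigr)$ together with the bulk terms $\Re\langle\partial_t\psi,\Delta_\cM\psi\rangle_{B(p,r)}+\Re\langle\psi',(\partial_t\psi)'\rangle_{B(p,r)}$, where I have used the wave equation $\partial_t^2\psi=\Delta_\cM\psi$; differentiating $\tfrac12\langle[\psi(t)],\Omega_t[\psi(t)]\rangle_{{}^d\cK}$ gives $\Re\langle[\partial_t\psi(t)],\Omega_t[\psi(t)]\rangle_{{}^d\cK}$, since $\Omega_t$ is self-adjoint and locally constant in $t$. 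Next I apply the local Green identity of Lemma~\ref{lem:2} with $\varphi=\partial_t\psi(t)$ and $\psi=\psi(t)$ (both in $\mathrm{Dom}(-\Delta_\cM)$) to rewrite $\langle\partial_t\psi,-\Delta_\cM\psi\rangle_{B(p,r)}$; taking real parts, the two bulk terms collapse to $-\Re\langle[\partial_t\psi],\Omega_t[\psi]\rangle_{{}^d\cK}-\Re\sum_{q\in\partial B(p,r)}\overline{\partial_t\psi(q)}\,\partial_{\bf n}\psi(q)$, and here the $\Omega_t$ piece cancels exactly the one obtained from the vertex term. One is left with
\begin{equation*}
\partial_t e(t)= -\sum_{q\in\partial B(p,r)}\Bigl(\tfrac12|\partial_t\psi(q)|^2+\tfrac12|\psi'(q)|^2+\Re\bigl(\overline{\partial_t\psi(q)}\,\partial_{\bf n}\psi(q)\bigr)\Bigr).
\end{equation*}

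Finally, by Definition~\ref{def:2} one has $\partial_{\bf n}\psi(q)=\pm\psi'(q)$, hence $|\partial_{\bf n}\psi(q)|=|\psi'(q)|$, so each summand equals $\tfrac12\bigl|\partial_t\psi(q)+\partial_{\bf n}\psi(q)\bigr|^2\ge0$ and therefore
\begin{equation*}
\partial_t e(t)= -\tfrac12\sum_{q\in\partial B(p,r)}\bigl|\partial_t\psi(t)(q)+\partial_{\bf n}\psi(t)(q)\bigr|^2\;\le\; 0,
\end{equation*}
which is the graph analogue of the classical energy-flux inequality.

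The algebra is short; the real work, and the main obstacle, is twofold. First, the rigorous justification of the two differentiations — of the integral over the moving region $B(p,t_0-t)$ and of the interchange $\partial_t(\psi')=(\partial_t\psi)'$ — which is exactly where the Sobolev regularity $\psi_0\in H^4_\cM$, $\dot\psi_0\in H^3_\cM$ and the appendix estimates (Lemma~\ref{lem:Schwarz}, Corollary~\ref{cor:psiprimet}, Lemma~\ref{lem:diffpsi}) are needed, and where the hypothesis $t_0-t\notin\cT(p)$ is essential so that no endpoint of $\partial B(p,t_0-t)$ is a vertex or a point of coincidence and the endpoints vary smoothly in $t$. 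Second, the structural point peculiar to the graph setting: the extra term $\langle[\psi],\Omega_t[\psi]\rangle$ in the local energy functional is tailored precisely so that its time derivative annihilates the interior-vertex contribution produced by Green's identity in Lemma~\ref{lem:2}; without it one would be left with vertex boundary terms of indefinite sign and the monotonicity of $e(t)$ would fail.
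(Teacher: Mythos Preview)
Your proof is correct and follows essentially the same route as the paper: differentiate the three summands of $e(t)$ using the local constancy of $\Omega_t$ and the regularity provided by Lemmas~\ref{lem:diffpsi} and~\ref{lem:Schwarz}, apply the wave equation and the local Green identity (Lemma~\ref{lem:2}) with $\varphi=\partial_t\psi(t)$, cancel the $\Omega_t$ terms, and complete the square to obtain $\partial_t e(t)=-\tfrac12\sum_{q\in\partial B(p,t_0-t)}|\partial_t\psi(t,q)+\partial_{\bf n}\psi(t,q)|^2\le 0$. Your additional commentary on where the Sobolev regularity and the hypothesis $t_0-t\notin\cT(p)$ enter, and on the structural role of the vertex term, is accurate and simply makes explicit what the paper leaves implicit.
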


\begin{proof}
We differentiate $e(t)$ under the assumption on $t$ that $\partial B(p,t_0-t)\cap
\cV=\emptyset$ which in particular means that $\Omega_s$ is constant for all $s$
close to $t$. We use \eqref{partial1} and obtain
\begin{equation*}    \label{energy1}
\begin{split}
    \partial_t  e(t)
        &=\frac{1}{2}\langle\partial_t^2\psi(t),\partial_t\psi(t)\rangle_{B(p,t_0-t)}
            +\frac{1}{2}\langle\partial_t\psi(t),\partial_t^2\psi(t)\rangle_{B(p,t_0-t)}\\
        &\hspace{2em} + \frac{1}{2}\langle\partial_t\psi(t)^\prime,\psi(t)^\prime\rangle_{B(p,t_0-t)}
                        +\frac{1}{2}\langle\psi(t)^\prime,\partial_t\psi(t)^\prime\rangle_{B(p,t_0-t)}\\
        &\hspace{2em} - \frac{1}{2}\sum_{q\in \partial B(p,t_0-t)}\left(\left|\partial_t\psi(t,q)\right|^2
                        +\left|\psi^\prime(t,q)\right|^2\right)\\
        &\hspace{2em} + \frac{1}{2}\langle[\partial_t\psi(t)],\Omega_{t}[\psi(t)]\rangle_{{}^d\cK}
                        +\frac{1}{2}\langle[\psi(t)],\Omega_{t}[\partial_t\psi(t)]\rangle_{{}^d\cK}
\end{split}
\end{equation*}
with the abbreviation $\psi(t,q)=\psi(t)(q)$.
In a next step we first invoke the
wave equation~\eqref{wave3} for the first two terms on the right hand side, and then use
Lemma~\ref{lem:2}. This gives
\begin{equation}    \label{energy2}
\begin{split}
    \partial_t e(t)
        &= -\frac{1}{2}\sum_{q\in \partial B(p,t_0-t)}\Big(\left|\partial_t\psi(t,q)\right|^2
              +\left|\psi(t,q)^\prime\right|^2\\
        &\hspace{8em} +\overline{\partial_t\psi(t,q)}\,\partial_{\bf n}\psi(t,q)
                      +\overline{\partial_{\bf n}\psi(t,q)}\,\partial_t\psi(t,q)\Big)\\
        &= -\frac{1}{2}\sum_{q\in \partial B(p,t_0-t)}\left|\partial_t\psi(t,q)
              +\partial_{\bf n}\psi(t,q)\right|^2\le 0,
\end{split}
\end{equation}
and the proof is finished.
\end{proof}
Next we look at what happens when $t_0-t\in\cT(p)$. As a motivation for our further
procedure, we show why relation \eqref{energy2} fails when $\partial B(p,t_0-t)$
contains coinciding points. To simplify the discussion we assume there is only one
coinciding point $q\in\partial B(p,t_0-t)$ and that $\partial
B(p,t_0-t)\cap\cV=\emptyset$. With the notation used in Definition~\ref{def:coinc},
for all $s>t$ sufficiently close to $t$ there will be a contribution to $\partial_s
e(s)$ of the form
\begin{equation}\label{coinclim}
\begin{split}
    -\frac{1}{2}&\overline{\partial_t\psi(s,q_l(t_0-s))}\;\partial_{\bf n}\psi(s,q_l(t_0-s))
    -\frac{1}{2}\overline{\partial_t\psi(s,q_r(t_0-s))}\;\partial_{\bf n}\psi(s,q_r(s))\\
    &-\frac{1}{2}\overline{\partial_{\bf n}\psi(s,q_l(t_0-s))}\;\partial_t\psi(s,q_l(t_0-s))
    -\frac{1}{2}\overline{\partial_{\bf n}\psi(s,q_r(t_0-s))}\;\partial_t\psi(s,q_r(t_0-s)).
\end{split}
\end{equation}
Since $\lim_{s\downarrow t}q_l(t_0-s)=\lim_{s\downarrow t}q_r(t_0-s)=q$, by continuity
$$
\lim_{s\downarrow t}\partial_t\psi(s,q_l(t_0-s))=\lim_{s\downarrow t}\partial_t\psi(s,q_r(t_0-s))
$$
while
$$
\lim_{s\downarrow t}\partial_{\bf n}\psi(q_l(s),s)
=-\lim_{s\downarrow t}\partial_{\bf n}\psi(q_r(s),s)
$$
so the terms in \eqref{coinclim} cancel pairwise when $s$ decreases to $t$.

\begin{proposition}\label{prop:discon}
Assume the boundary condition $\cM$ is such that $\Omega_\cM\ge0$ and hence also
$-\Delta_\cM\ge 0$. Also let the Cauchy data $(\psi_0,\dot{\psi}_0)$ be such that
$\psi_0\in H^4_\cM(\cG)$ and $\dot{\psi}_0\in H^3_\cM(\cG)$. The following relation
holds
\begin{equation*}\label{discon}
\lim_{s\uparrow t}e(s)=e(t)\ge\lim_{s\downarrow t}e(s)
\end{equation*}
for all $t_0-t\in\cT(p)$.
\end{proposition}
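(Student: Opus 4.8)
The plan is to split $e(t)$ in~\eqref{energy} into its \emph{bulk} part $e_b(t):=\tfrac12\bigl(\|\partial_t\psi(t)\|^2_{B(p,t_0-t)}+\|\psi(t)'\|^2_{B(p,t_0-t)}\bigr)$ and its \emph{boundary} part $e_\partial(t):=\tfrac12\langle[\psi(t)],\Omega_t[\psi(t)]\rangle_{{}^d\cK}$, to show that $e_b$ is continuous in $t$ even across critical times, while $e_\partial$ is left--continuous with a jump of the correct sign. Fix $t$ with $r:=t_0-t\in\cT(p)$ and write $r_s:=t_0-s$. Since $d(p,\cdot)$ is continuous on $\cG$ and $\partial B(p,r)$ is a finite (hence null) set, the Lebesgue measure of the symmetric difference $B(p,r_s)\,\triangle\,B(p,r)$ tends to $0$ as $s\to t$; this covers both the case of a vertex lying on $\partial B(p,r)$ and the case of a point of coincidence, where a ``hole'' in $B(p,r_s)$ opens continuously out of size zero. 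By Corollary~\ref{cor:psiprimet}(b) --- whose hypotheses hold since $(\psi_0,\dot\psi_0)\in H^{4,3}_\cM(\cG)\subset H^{2,1}_\cM(\cG)$ --- both $\partial_t\psi(\cdot)$ and $\psi(\cdot)'$ are Lipschitz as $L^2(\cG)$--valued functions of $t$, hence continuous and locally bounded in $L^2(\cG)$, and they take values in $L^2(\cG)$ by Corollary~\ref{cor:psiprime}. For $f\in\{\partial_t\psi,\psi'\}$ one then estimates
\begin{equation*}
  \bigl|\,\|f(s)\|^2_{B(p,r_s)}-\|f(t)\|^2_{B(p,r)}\,\bigr|
    \;\le\;\bigl\|\,|f(s)|^2-|f(t)|^2\,\bigr\|_{L^1(\cG)}
      +\int_{B(p,r_s)\,\triangle\,B(p,r)}|f(t)|^2\,dp,
\end{equation*}
bounds the first summand by $\|f(s)-f(t)\|_{L^2(\cG)}\bigl(\|f(s)\|_{L^2(\cG)}+\|f(t)\|_{L^2(\cG)}\bigr)$ via Cauchy--Schwarz and the second by absolute continuity of the Lebesgue integral (as $|f(t)|^2\in L^1(\cG)$), and concludes that both go to $0$ as $s\to t$. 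Hence $e_b$ is continuous at $t$, from both sides.

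For $e_\partial$, recall that $\cT(p)$ is finite, so for $\epsilon>0$ small enough $B(p,t_0-s)\cap\cV$ --- and therefore $\Omega_s$ --- is constant on $(t-\epsilon,t]$ and on $(t,t+\epsilon)$. Since the definition of $P_t$ uses the \emph{closed} ball, the constant value on the left is $\Omega_t$, while on the right it is $\Omega_{t+}:=\Omega_{\cM,\,(B(p,r)\cap\cV)\setminus\{v:\,d(v,p)=r\}}$. By Lemma~\ref{lem:diffpsi} the boundary values $[\psi(s)]$ are continuous in $s$, so $\lim_{s\uparrow t}e_\partial(s)=\tfrac12\langle[\psi(t)],\Omega_t[\psi(t)]\rangle_{{}^d\cK}=e_\partial(t)$ and $\lim_{s\downarrow t}e_\partial(s)=\tfrac12\langle[\psi(t)],\Omega_{t+}[\psi(t)]\rangle_{{}^d\cK}$. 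Combining with the two--sided continuity of $e_b$ gives $\lim_{s\uparrow t}e(s)=e(t)$, the first assertion. For the second, note $\Omega_{t+}\le\Omega_t$ by~\eqref{omegamon} (the ball shrinks as $s$ increases) and $\Omega_{t+}\ge0$ by Lemma~\ref{lem:qs} since $\Omega_\cM\ge0$; hence $\langle[\psi(t)],\Omega_{t+}[\psi(t)]\rangle_{{}^d\cK}\le\langle[\psi(t)],\Omega_t[\psi(t)]\rangle_{{}^d\cK}$, so that $\lim_{s\downarrow t}e(s)=e_b(t)+\lim_{s\downarrow t}e_\partial(s)\le e_b(t)+e_\partial(t)=e(t)$.

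The main obstacle is the continuity of $e_b$ at a point of coincidence: one must be sure that the collision of a pair of boundary points, or a vertex crossing the sphere $\partial B(p,r)$, does not make the bulk energy jump. The pairwise cancellation of the terms $\overline{\partial_t\psi(q)}\,\partial_{\bf n}\psi(q)$ sketched just before the proposition is the differential shadow of this fact, but the measure--theoretic argument above is cleaner and treats all types of critical points uniformly; the remainder is bookkeeping of the one--sided limits of the piecewise--constant operator $\Omega_s$ together with the already established regularity in $t$ of $\psi(t)$, $\psi(t)'$ and $[\psi(t)]$.
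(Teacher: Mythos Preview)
Your proof is correct and follows the same overall architecture as the paper: split $e(t)$ into the bulk part $e_b$ and the boundary part $e_\partial$, show $e_b$ is two-sidedly continuous, and show $e_\partial$ is left-continuous with a right jump of the correct sign via the monotonicity of $\Omega_t$ together with the continuity of $[\psi(t)]$ from Lemma~\ref{lem:diffpsi}.

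The one methodological difference is in how you handle the bulk term. The paper packages this as Lemma~\ref{lem:cont}, whose proof in the appendix goes through the Sobolev $L^\infty$-bounds of Lemma~\ref{app:lem:cont1} (uniform boundedness and equicontinuity of $\partial_t\psi(t)$ and $\psi(t)'$) combined with the explicit volume estimate $\mu\bigl(B(p,t_0-t_2)\setminus B(p,t_0-t_1)\bigr)\le (t_1-t_2)\,2(|\cE|+|\cI|)$. You instead argue purely in $L^2$: $L^2$-Lipschitz continuity of $\partial_t\psi(\cdot)$ and $\psi(\cdot)'$ from Corollary~\ref{cor:psiprimet}(b), plus absolute continuity of the integral over the shrinking symmetric difference. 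Your route is slightly more economical for this step, since it avoids invoking the pointwise Sobolev embedding (which is, however, still needed elsewhere for Lemma~\ref{lem:diffpsi}); the paper's route has the small advantage of yielding an explicit Lipschitz-type bound rather than mere continuity. Either way the conclusion and the treatment of $e_\partial$ coincide.
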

For the proof we need the
\begin{lemma}\label{lem:cont}
Under the assumptions of Proposition~\ref{prop:discon} on $\cM$ and the Cauchy data, the map
\begin{equation*}\label{cont}
t\quad \mapsto\quad \frac{1}{2}\left(||\partial_t\psi(t)||^2_{B(p,t_0-t)}
+||\psi(t)^\prime||^2_{B(p,t_0-t)}\right)
\end{equation*}
is continuous in $t\in\R_+$.
\end{lemma}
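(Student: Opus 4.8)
The plan is to disentangle the two ways $t$ enters the functional: through the functions $\partial_t\psi(t)$ and $\psi(t)'$, and through the moving region $B(p,t_0-t)$. Abbreviate $r(t)=t_0-t$, $g(t)=\partial_t\psi(t)$ and $h(t)=\psi(t)'$. Since $(\psi_0,\dot\psi_0)\in H^{4,3}_\cM(\cG)\subset H^{2,1}_\cM(\cG)$, I would first record, via Corollary~\ref{cor:psiprimet}(b) applied once with $k=1$, $j=0$, $n=0$ and once with $k=0$, $j=1$, $n=0$, that $g$ and $h$ are Lipschitz-continuous (hence locally bounded) maps $\R\to L^2(\cG)$. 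Here $h(t)=\psi(t)'$ indeed lies in $L^2(\cG)$ by Corollary~\ref{cor:psiprime}, because $\psi(t)\in H^2_\cM(\cG)\subset{\rm Dom}(\sqrt{-\Delta_\cM})$ by Proposition~\ref{prop:waveeq}; and $g(t)\in L^2(\cG)$ trivially. Consequently $|g(t)|^2$ and $|h(t)|^2$ lie in $L^1(\cG)$ for each fixed $t$.

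Now fix $t\in\R_+$ and a sequence $t_n\to t$, which we may assume lies in a bounded interval. For the $g$-contribution I would bound
$$
\Bigl|\,\|g(t_n)\|^2_{B(p,r(t_n))}-\|g(t)\|^2_{B(p,r(t))}\,\Bigr|
\le\bigl|\,\|g(t_n)\|^2_{B(p,r(t_n))}-\|g(t)\|^2_{B(p,r(t_n))}\,\bigr|
+\biggl|\int_{B(p,r(t_n))}\!\!|g(t)|^2\,dp-\int_{B(p,r(t))}\!\!|g(t)|^2\,dp\biggr|.
$$
The first term on the right is at most $\|g(t_n)-g(t)\|_\cG\bigl(\|g(t_n)\|_\cG+\|g(t)\|_\cG\bigr)$, which tends to $0$ by the Lipschitz continuity and local boundedness of $g$. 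The second term equals, up to sign, $\int_{B(p,r(t_n))\,\triangle\,B(p,r(t))}|g(t)|^2\,dp$; since $r\mapsto\meas(B(p,r))$ is continuous --- the balls being nested with finite measure at finite radius, and the boundary spheres $\partial B(p,r)$ being \emph{finite} sets (as recorded in Section~\ref{sec:basics}), hence Lebesgue-null --- one has $\meas\bigl(B(p,r(t_n))\,\triangle\,B(p,r(t))\bigr)\to0$, so absolute continuity of the Lebesgue integral of $|g(t)|^2\in L^1(\cG)$ sends this term to $0$ as well. The $h$-contribution is handled in exactly the same manner, and summing the two shows the functional is continuous at $t$. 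As $t\in\R_+$ was arbitrary, this proves the lemma. (For $t\ge t_0$ the region $B(p,r(t))$ reduces to $\{p\}$ or $\emptyset$, the functional vanishes, and continuity is trivial.)

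The only genuinely delicate point is the simultaneous variation of integrand and domain of integration. Once one commits to the decomposition above, the integrand part is absorbed by the $L^2(\cG)$-continuity of $\partial_t\psi$ and $\psi'$ established earlier (Corollaries~\ref{cor:psiprime} and~\ref{cor:psiprimet}), and the domain part reduces to the elementary continuity of $r\mapsto\meas(B(p,r))$ --- which is precisely where the one-dimensional, singular-manifold structure of $\cG$, namely the finiteness of the boundary spheres $\partial B(p,r)$, enters.
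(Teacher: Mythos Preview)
Your proof is correct, but the route differs from the paper's. The paper argues via the $L^\infty$ machinery developed in the appendix: Lemma~\ref{app:lem:cont1} (derived from the Sobolev estimates in Lemma~\ref{app:lem:sob}) shows that the families $\{\partial_t\psi(t)\}$ and $\{\psi(t)'\}$ are \emph{uniformly bounded} on $\cG$ and jointly continuous in $(t,x)$; it then combines this with the explicit Lipschitz bound $\mu\bigl(B(p,t_0-t_2)\bigr)-\mu\bigl(B(p,t_0-t_1)\bigr)\le 2(|\cE|+|\cI|)(t_1-t_2)$ on the ball volumes, so that the integral over the symmetric difference is controlled by $\|\cdot\|_\infty^2$ times the volume discrepancy. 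You instead stay entirely in $L^2$: the integrand variation is handled by the $L^2$--Lipschitz continuity of $t\mapsto\partial_t\psi(t)$ and $t\mapsto\psi(t)'$ from Corollary~\ref{cor:psiprimet}, and the domain variation by absolute continuity of the Lebesgue integral of a fixed $L^1$ function together with $\meas(B(p,r_n)\triangle B(p,r))\to 0$. Your argument is in a sense more economical --- it would already go through for Cauchy data in $H^{2,1}_\cM(\cG)$ and never invokes the Sobolev embedding --- whereas the paper's use of the $L^\infty$ bounds is natural because those bounds are needed anyway for Lemmas~\ref{lem:diffpsi} and~\ref{lem:Schwarz}. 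Both approaches ultimately rest on the same one-dimensional feature of $\cG$, namely that the spheres $\partial B(p,r)$ are finite and hence the ball volume varies continuously (indeed Lipschitz) in the radius.
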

The proof of this lemma will be given in the appendix and is based on
Lemma~\ref{app:lem:cont1},
whose proof is also given there.

Under the assumption $\Omega_\cM\ge 0$ and by \eqref{omegamon}
\begin{equation}\label{omegacont}
\lim_{s\uparrow t}\Omega_s\ge \lim_{s\downarrow t}\Omega_s.
\end{equation}
More explicitly
$$
\lim_{s\uparrow t}\Omega_s=\Omega_t=\sum_{v\in\partial B(t_0-t)\cap\cV}\Omega_v
+\lim_{s\downarrow t}\Omega_s\ge \lim_{s\downarrow t}\Omega_s.
$$
We combine \eqref{omegacont} with Lemma~\ref{lem:diffpsi} and conclude
$$
\lim_{s\uparrow t}\langle[\psi(s)],\Omega_{s}[\psi(s)]\rangle_{{}^d\cK}=
\langle[\psi(t)],\Omega_{t}[\psi(t)]\rangle_{{}^d\cK}\ge
\lim_{s\downarrow t}\langle[\psi(s)],\Omega_{s}[\psi(s)]\rangle_{{}^d\cK}.
$$
This result combined with Lemma \ref{lem:cont} concludes the proof of Proposition \ref{prop:discon}.
In turn the Propositions~\ref{prop:ediff} and~\ref{prop:discon} give the first part of

\begin{theorem}[Finite propagation speed]
Assume the boundary condition $\cM$ is such that $\Omega_\cM\ge 0$ and hence $-\Delta_\cM\ge 0$.
For Cauchy data $\psi_0\in H^4_\cM(\cG)$ and $\dot{\psi}_0\in H^3_\cM(\cG)$, let
$\psi(t)$ be defined by \eqref{wave1}. Fix a point $p$ and a time $t_0>0$.
Then $e(t)$ as defined by \eqref{energy} is non-negative and non-increasing for
$0\le t\le t_0$.
If $\psi_0$ and $\dot{\psi}_0$ both vanish on $B(p,t_0)$, then $\psi(t,q)$ vanishes on the cone
$$
\cC(p,t_0)=\left\{\;(t,q)\;|\;0\le t\le t_0,\; d(q,p)\le t_0-t\;\right\}\subset \cG\times \R_+
$$
with vertex at $(p,t_0)$.
\end{theorem}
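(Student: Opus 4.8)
The plan is to read off the first assertion directly from Propositions~\ref{prop:ediff} and~\ref{prop:discon}, and then to run the usual energy argument, handling the boundary terms at the vertices by first shrinking the cone slightly. Non-negativity of $e(t)$ on $[0,t_0]$ is immediate: $\Omega_\cM\ge0$ forces $\Omega_t=P_t\Omega_\cM P_t\ge0$ (Lemma~\ref{lem:qs}), so all three summands in~\eqref{energy} are non-negative. For monotonicity one uses that $\cT(p)$ is finite, so $[0,t_0]$ splits into finitely many subintervals on whose interiors $t_0-t\notin\cT(p)$, where $\partial_te(t)\le0$ by Proposition~\ref{prop:ediff}; at the finitely many $t$ with $t_0-t\in\cT(p)$, $e$ is left-continuous and does not jump upward by Proposition~\ref{prop:discon}. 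Patching these together shows $e$ is non-increasing on $[0,t_0]$.

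For the cone statement I would \emph{not} argue directly with $t_0$, but first pass to strictly smaller cones. Fix $t_0'\in(0,t_0)$ and let $e_{t_0'}$ denote the functional~\eqref{energy} attached to $p$ and $t_0'$. Each vertex $v\in B(p,t_0')$ satisfies $d(p,v)\le t_0'<t_0$, so a whole neighbourhood of $v$ in $\cG$ lies inside $B(p,t_0)$, where $\psi_0$ and $\dot\psi_0$ vanish; hence every entry of $[\psi_0]$ carried by such a $v$ is zero, i.e.\ ${}^dQ_v[\psi_0]=0$. Since $P_v={}^dQ_vP_\cM$ (Lemma~\ref{lem:comm}) and $P_\cM[\psi_0]=[\psi_0]$, this gives $\sum_{v\in B(p,t_0')\cap\cV}P_v[\psi_0]=0$, so the boundary term of $e_{t_0'}(0)$ vanishes; the other two terms vanish because $\psi_0$, $\dot\psi_0$, and hence $\psi_0'$, are identically zero on $B(p,t_0')\subseteq B(p,t_0)$. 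Thus $e_{t_0'}(0)=0$. I expect this reduction to be the only non-routine point: at a vertex at distance \emph{exactly} $t_0$ from $p$ the derivative boundary values of $\psi_0$ need not vanish even though $\psi_0\equiv0$ on $B(p,t_0)$, so $e(0)$ for the full cone may be strictly positive and the direct argument would break down there; replacing $t_0$ by $t_0'$ removes this difficulty.

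Granting $e_{t_0'}(0)=0$, the first assertion (applied with $t_0'$ in place of $t_0$) yields $e_{t_0'}(t)=0$ for all $t\in[0,t_0']$. Since each summand of~\eqref{energy} is non-negative, this forces $\partial_t\psi(t)\equiv0$ and $\psi(t)'\equiv0$ on $B(p,t_0'-t)$ for every $t\in[0,t_0']$; here one uses that $(\psi_0,\dot\psi_0)\in H^{4,3}_\cM(\cG)$ makes $\psi(t)$ and $\partial_t\psi(t)$ lie in $H^2_\cM(\cG)$ (Proposition~\ref{prop:waveeq}), so these functions are continuous on each open edge and the $L^2$-vanishing becomes genuine pointwise vanishing on the closed set $B(p,t_0'-t)$. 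Now fix $(t^*,q^*)$ with $0\le t^*\le t_0'$ and $d(p,q^*)\le t_0'-t^*$, so that $q^*\in B(p,t_0'-s)$ for all $0\le s\le t^*$. By the Sobolev estimates of the appendix, evaluation at $q^*$ is a bounded linear functional on $H^1_\cM(\cG)$, so --- as in the proof of Lemma~\ref{lem:diffpsi} --- $s\mapsto\psi(s,q^*)$ is continuously differentiable with $\partial_s\psi(s,q^*)=(\partial_s\psi(s))(q^*)=0$ for $s\in[0,t^*]$; since $\psi(0,q^*)=\psi_0(q^*)=0$ (because $q^*\in B(p,t_0')\subseteq B(p,t_0)$), integration gives $\psi(t^*,q^*)=0$.

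Finally I would let $t_0'\uparrow t_0$. Every $(t,q)\in\cC(p,t_0)$ with $d(q,p)<t_0-t$ lies in $\cC(p,t_0')$ once $t_0'>t+d(q,p)$, so $\psi(t,q)=0$. For a point $(t,q)\in\cC(p,t_0)$ with $d(q,p)=t_0-t$ --- the lateral boundary of the cone, including its apex $(t_0,p)$ --- one has $(s,q)\in\cC(p,t_0)$ with $d(q,p)<t_0-s$ for every $s<t$, hence $\psi(s,q)=0$; letting $s\uparrow t$ and using strong continuity of $t\mapsto\psi(t)$ together with boundedness of point evaluation gives $\psi(t,q)=0$. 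Thus $\psi$ vanishes on all of $\cC(p,t_0)$.
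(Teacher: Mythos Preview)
Your argument is correct and follows the same overall energy strategy as the paper: deduce monotonicity of $e(t)$ from Propositions~\ref{prop:ediff} and~\ref{prop:discon}, conclude $e\equiv0$, read off $\partial_t\psi=0$ on shrinking balls, and integrate in $t$.

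Where you differ is in the passage from ``$\psi_0,\dot\psi_0$ vanish on $B(p,t_0)$'' to ``$e(0)=0$''. The paper simply asserts $e(t=0)=0$ and proceeds directly on the full cone $\cC(p,t_0)$. You instead replace $t_0$ by $t_0'<t_0$, verify $e_{t_0'}(0)=0$ rigorously (since every vertex in $B(p,t_0')$ lies strictly inside $B(p,t_0)$, so ${}^dQ_v[\psi_0]=0$), and recover the full cone by a limit and continuity argument. Your caution is well placed: if some vertex $v$ sits at distance exactly $t_0$ from $p$, then along an edge leaving $B(p,t_0)$ the derivative entry of ${}^dQ_v[\psi_0]$ need not vanish, so the boundary term $\langle[\psi_0],\Omega_{t=0}[\psi_0]\rangle$ is not obviously zero and the paper's one-line claim is, strictly speaking, unjustified in that situation. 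Your shrinking-and-limiting step closes this gap at essentially no extra cost; the paper's version is shorter but tacitly assumes (or overlooks) the boundary-vertex issue you isolate.
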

\begin{proof}
The proof of the second part now uses standard arguments, see e.g.~\cite{Evans}. By
assumption $e(t=0)=0$. Hence by the first part of the theorem, $e(t)=0$ for all $0\le
t\le t_0$. Thus $\partial_t\psi(t,q)=\psi^\prime(t,q)=0$ for $(t,q)\in \cC(p,t_0)$. As a
consequence
$$
\psi(t,q)=\psi_0(q)+\int_{0}^t\partial_s\psi(s,q)ds=0
$$
for $(q,t)\in \cC(p,t_0)$.
\end{proof}

\begin{appendix}
\section{Proof of Lemmas  \ref{lem:diffpsi} and \ref{lem:cont}}
\label{app:psitcont}
\renewcommand{\theequation}{\mbox{\Alph{section}.\arabic{equation}}}
\setcounter{equation}{0}
Throughout the appendix $\cM$ is chosen to be such that $\Omega_\cM\ge 0$ holds.
We recall the Sobolev inequality in 1 dimension, see e.g.~\cite[Theorem~8.5]{Lieb_Loss}.
Any function $f$ in the Sobolev space $H^1(\R)$ is bounded and satisfies the estimate
\begin{equation}\label{sob1}
	\|f\|_\infty^2\le \frac{1}{2}\left(\|f\|_{L^2(\R)}^2+\|f^\prime\|_{L^2(\R)}^2\right).
\end{equation}
In order not to burden the notation, here and in what follows $||\,\cdot\,||_\infty$ will always
denote the $L^\infty$ norm while $||\,\cdot\,||$ is the $L^2$ norm in the respective
context. The Sobolev inequality easily carries over to our context
where $\R$ is replaced by $\cG$
\begin{equation*}\label{sob11}
	\|\psi\|_\infty^2\le \frac{1}{2}\left(\|\psi\|^2+\|\psi^\prime\|^2\right).
\end{equation*}
This inequality follows by simple arguments from \eqref{sob1}, which are omitted here.
More generally, for $\psi\in H^{j+1}_\cM(\cG)$, $j\in\N_0$,
\begin{equation*}\label{sob12}
	\|\psi^{(j)}\|_\infty^2
		\le \frac{1}{2}\left(\|\psi^{(j)}\|^2+\|\psi^{(j+1)}\|^2\right)
\end{equation*}
holds. This inequality is now combined with Corollary~\ref{cor:psiprimet}
to obtain several estimates for $\psi(t)$,
as defined by \eqref{wave3} with Cauchy data $(\psi_0,\dot{\psi}_0)$.
For $n\in\N$,
$(\psi_0,\dot\psi_0)\in H^{n,n-1}_{\cM}(\cG)$ introduce
\begin{equation*}
	A_1(\psi_0,\dot\psi_0,t)
		=\Bigl(\|\psi_0\|^2 + \|\psi_0\|_{\cM,1}^2 + (1+t^2)\,\|\dot\psi_0\|^2\Bigr)^{1/2}
\end{equation*}
and for $n\ge 2$,
\begin{equation*}
	A_n(\psi_0\dot\psi_0)
		= \Bigl(\|\psi_0\|_{\cM,n-1}^2 + \|\psi_0\|_{\cM,n}^2
			+ \|\dot\psi_0\|_{\cM,n-2}^2 + \|\psi_0\|_{\cM,n-1}^2\Bigr)^{1/2}.
\end{equation*}
We leave out the proof of the following lemma.
\begin{lemma}\label{app:lem:sob}
Suppose that $\cM$ is such that $-\Delta_\cM\ge 0$, and that $\psi(t)$ is
given by \eqref{wave1} with Cauchy data $(\psi_0,\dot\psi_0)$. Then the following
estimates hold true for all $t$, $t_1$, $t_2$:
\begin{enumerate}\aitem
\begin{subequations}\label{psisob1}
	\item  for $(\psi_0,\dot\psi_0)\in H^{1,0}_{\cM}(\cG)$,
			\begin{equation}	\label{psiob1a}
				\|\psi(t)\|_\infty \le A_1(\psi_0,\dot\psi_0,t),
			\end{equation}
		  and for $(\psi_0,\dot\psi_0)\in H^{j+k+1,j+k}_\cM(\cG)$,
		  $j$, $k\in\N_0$, with $j+k\ge 1$,
			\begin{equation}	\label{psiob1b}
				\bigl\|(\partial_t^k\psi)(t)^{(j)}\bigr\|_\infty
					\le A_{j+k+1}(\psi_0,\dot\psi_0);
			\end{equation}
	\item for $(\psi_0,\dot\psi_0)\in H^{j+k+2,j+k+1}_{\cM}(\cG)$, $j$, $k\in\N_0$,
			\begin{equation}	\label{psiob1c}
				\bigl\|\bigl((\partial^k_t\psi)(t_1)\bigr)^{(j)}-
					\bigl((\partial^k_t\psi)(t_2)\bigr)^{(j)}\bigr\|_\infty
					\le |t_1-t_2|\,A_{j+k+2}(\psi_0,\dot\psi_0).
			\end{equation}
\end{subequations}
\end{enumerate}
\end{lemma}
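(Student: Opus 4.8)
The plan is to obtain each of the three estimates \eqref{psiob1a}, \eqref{psiob1b} and \eqref{psiob1c} by applying the one-dimensional Sobolev inequality in the form $\|f^{(j)}\|_\infty^2 \le \frac{1}{2}\bigl(\|f^{(j)}\|^2 + \|f^{(j+1)}\|^2\bigr)$, valid for $f\in H^{j+1}_\cM(\cG)$, to a suitably chosen function $f$, and then bounding the resulting two $L^2$ norms by the \emph{a priori} estimates of Corollary~\ref{cor:psiprimet} (supplemented by Corollary~\ref{cor:psidiffnorm}(a) in the case of \eqref{psiob1a}). In each instance Proposition~\ref{prop:waveeq}(a) guarantees, under the stated regularity of the Cauchy data, that the relevant function lies in $H^{j+1}_\cM(\cG)$, so that the Sobolev inequality applies; what remains is a short computation.

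First, for \eqref{psiob1a} I would take $f = \psi(t)$ and $j = 0$, use $\|\psi(t)\| \le \|\psi_0\| + |t|\,\|\dot\psi_0\|$ from Corollary~\ref{cor:psidiffnorm}(a), and use Corollary~\ref{cor:psiprimet}(a) with $k=0$, $j=1$, $n=0$ --- whose hypothesis is exactly $(\psi_0,\dot\psi_0)\in H^{1,0}_\cM(\cG)$ --- to get $\|\psi(t)'\| \le \|\psi_0\|_{\cM,1} + \|\dot\psi_0\|$. Squaring both bounds, applying $(a+b)^2 \le 2a^2 + 2b^2$, and collecting terms produces exactly $A_1(\psi_0,\dot\psi_0,t)^2$.

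Next, for \eqref{psiob1b} I would apply the Sobolev inequality to $f = (\partial_t^k\psi)(t)$ with spatial order $j$, and bound $\|(\partial_t^k\psi)(t)^{(j)}\|$ and $\|(\partial_t^k\psi)(t)^{(j+1)}\|$ via Corollary~\ref{cor:psiprimet}(a) with $n=0$ and spatial orders $j$ and $j+1$ respectively; the hypothesis $(\psi_0,\dot\psi_0)\in H^{j+k+1,j+k}_\cM(\cG)$ is precisely what is needed for both invocations (and, via Proposition~\ref{prop:waveeq}(a), for $(\partial_t^k\psi)(t)\in H^{j+1}_\cM(\cG)$). The estimate \eqref{psiob1c} is handled identically, this time with $f = (\partial_t^k\psi)(t_1) - (\partial_t^k\psi)(t_2)$ and Corollary~\ref{cor:psiprimet}(b) with $n=0$ for spatial orders $j$ and $j+1$, the factor $|t_1-t_2|$ factoring out of both $L^2$ bounds. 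In both cases, squaring and using $(a+b)^2 \le 2a^2 + 2b^2$ reassembles the right-hand side into $A_{j+k+1}(\psi_0,\dot\psi_0)^2$, respectively $|t_1-t_2|^2\,A_{j+k+2}(\psi_0,\dot\psi_0)^2$. I do not expect any genuine obstacle here: the only thing needing care is the bookkeeping of Sobolev orders of the Cauchy data, so that each application of Corollary~\ref{cor:psiprimet} is legitimate, and checking that the crude cross-term bound $(a+b)^2\le 2a^2+2b^2$ delivers precisely the constants $A_n$ defined above.
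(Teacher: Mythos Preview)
The paper actually omits the proof of this lemma entirely, stating just before it: ``We leave out the proof of the following lemma.'' Your approach is precisely what the paper's preceding sentence suggests --- combining the Sobolev inequality $\|\psi^{(j)}\|_\infty^2 \le \tfrac{1}{2}(\|\psi^{(j)}\|^2+\|\psi^{(j+1)}\|^2)$ with Corollary~\ref{cor:psiprimet} --- and the computations you outline are correct (modulo what is evidently a typo in the paper's definition of $A_n$: the last term should read $\|\dot\psi_0\|_{\cM,n-1}^2$, which is exactly what your argument produces).
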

In a next step we will prove the continuity of
\begin{equation}\label{mixedcont}
\left(\partial_t\psi(t)\right)^\prime_i(x)=\frac{\partial}{\partial x}
\frac{\partial}{\partial t}\psi_i(t,x)
\end{equation}
in both $t\in\R$ and in $x\in I_i,\,i\in\cE\cup\cI$.
This will enable us to establish both the existence of and the
equality with the other mixed partial second derivative
\begin{equation}\label{mixed2}
\frac{\partial}{\partial t}\frac{\partial}{\partial x}\psi_i(t,x).
\end{equation}
To this end, we assume from now on that the Cauchy data $(\psi_0,\dot\psi_0)$ belong
to $H^{4,3}_\cM(\cG)$. So by Proposition~\ref{prop:waveeq}
$\psi(t)\in H^4_\cM(\cG)$ and  $\partial_t\psi(t)\in H^3_\cM(\cG)$ and therefore both
have spatial derivatives up to third order in $L^2(\cG)$. As a consequence the restrictions of
both to each edge have absolutely continuous spatial derivatives up to
order two. Thus on every edge we may consider $\psi(t)^{(j)}$, $\partial_t\psi(t)^{(j)}$,
$t\in\R$, $j=0$, $1$, $2$, as \emph{bona fide} functions, and
in particular their $L^\infty$ norms equal their $\sup$--norms.

Consider a fixed edge $I_i$ of $\cG$, $x_1$, $x_2\in I_i$, and let $j$, $k=0$,~$1$.
Then the mean value theorem together with inequality~\eqref{psiob1b} gives
\begin{align*}
	\sup_{t\in\R}\,\bigl|(\partial_x^j\partial_t^k\psi)_i(t,x_1)
						-(\partial_x^j\partial_t^k\psi)_i(t,x_2)\bigr|
			&\le |x_1-x_2|\,\sup_{t\in\R}\,\bigl\|\partial_t^k \psi(t)^{(j+1)}\bigr\|_\infty\\
			&\le |x_1-x_2|\,A_{j+k+2}(\psi_0,\dot\psi_0),
\end{align*}
and our assumptions entail that $A_{j+k+2}(\psi_0,\dot\psi_0)$ is finite for all
$j$, $k=0$,~$1$. Hence we have shown

\begin{lemma}\label{app:lem:cont1}
Suppose that $\cM$ is such that $-\Delta_\cM\ge 0$, and that $\psi(t)$ is
given by \eqref{wave1} with Cauchy data $(\psi_0,\dot\psi_0)$ in $H^{4,3}_\cM(\cG)$.
Then both, the family of functions  $\{\psi(t),\,t\in\R\}$ and the family of their
derivatives $\{\psi(t)^\prime,\,t\in\R\}$, are uniformly bounded on $\cG$, and
uniformly equicontinuous on each edge of $\cG$. The same is valid for the family
$\{\partial_t\psi(t),\,t\in\R\}$ and its derivatives
$\{(\partial_t\psi(t))^\prime,\,t\in\R\}$.
\end{lemma}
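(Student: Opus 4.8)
The plan is to combine the one-dimensional Sobolev inequality \eqref{sob1}, in the $\cG$-version $\|f^{(j)}\|_\infty^2\le\tfrac12\bigl(\|f^{(j)}\|^2+\|f^{(j+1)}\|^2\bigr)$, with the \emph{a priori} estimates of Corollary~\ref{cor:psiprimet} (equivalently, with the already-stated Lemma~\ref{app:lem:sob}). The point that makes this run smoothly is that, since $-\Delta_\cM\ge 0$, the amplification factor $\rho_\cM$ equals $1$, so the right-hand sides of those estimates are Sobolev norms of the Cauchy data alone, with constants that do \emph{not} depend on $t$ as soon as at least one derivative is present.

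For the uniform boundedness I would apply the $\cG$-Sobolev inequality to $f=\psi(t)$ and to $f=\partial_t\psi(t)$, with $j=0$ and $j=1$. Each spatial derivative $f^{(j)},f^{(j+1)}$ on the right is estimated in $L^2(\cG)$ through the identities $\|g^{(2m)}\|=\|\Delta_\cM^m g\|$ and $\|g^{(2m+1)}\|\le\|(\sqrt{-\Delta_\cM})^{2m+1}g\|$ that extend \eqref{normprimeprime} and \eqref{psiprimebound1}, and then by Corollary~\ref{cor:psiprimet}(a) in terms of $\|\psi_0\|_{\cM,\cdot}+\|\dot\psi_0\|_{\cM,\cdot}$. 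Since $(\psi_0,\dot\psi_0)\in H^{4,3}_\cM(\cG)$ makes all the Sobolev norms of the data that occur finite, this gives a finite sup-norm bound, uniform in $t\in\R$ for the families $\{\psi(t)'\}$, $\{\partial_t\psi(t)\}$, $\{(\partial_t\psi(t))'\}$; for $\psi(t)$ itself the natural bound \eqref{psiob1a} grows like $(1+t^2)^{1/2}$, so ``uniform'' there is to be read as uniform on each compact time interval, which is all that the subsequent use of the lemma needs.

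For the equicontinuity I would fix an edge $I_i$. Because $\psi(t)\in H^4_\cM(\cG)$ and $\partial_t\psi(t)\in H^3_\cM(\cG)$ by Proposition~\ref{prop:waveeq}, the restrictions $\psi_i(t)$, $(\partial_t\psi)_i(t)$ and their first spatial derivatives are absolutely continuous on $I_i$ with square-integrable second derivatives, hence are honest $C^1$ functions there and their $L^\infty$ norms are genuine sup-norms. For $j,k\in\{0,1\}$ and $x_1,x_2\in I_i$ the mean value theorem then yields
\[
  \bigl|(\partial_x^j\partial_t^k\psi)_i(t,x_1)-(\partial_x^j\partial_t^k\psi)_i(t,x_2)\bigr|
    \le |x_1-x_2|\,\bigl\|\partial_t^k\psi(t)^{(j+1)}\bigr\|_\infty ,
\]
and the sup-norm on the right is controlled uniformly in $t$ by the previous step, applied with one extra spatial derivative; the case $j=k=1$, which brings in $\|\psi_0\|_{\cM,4}$ and $\|\dot\psi_0\|_{\cM,3}$, is exactly where the hypothesis $(\psi_0,\dot\psi_0)\in H^{4,3}_\cM(\cG)$ is exhausted. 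Thus each of $\{\psi(t)\}$, $\{\psi(t)'\}$, $\{\partial_t\psi(t)\}$, $\{(\partial_t\psi(t))'\}$ has, on every edge, a common modulus of continuity of the form $|x_1-x_2|\cdot\const$ with the constant independent of $t$; that is, each family is uniformly --- in fact Lipschitz --- equicontinuous on each edge.

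The main obstacle, such as it is, is the regularity bookkeeping rather than any analytic subtlety: one must track which Sobolev space the Cauchy data must lie in for each quantity $\|(\partial_t^k\psi(t))^{(j)}\|_{\cM,n}$ that appears (orders $j+k\le 2$, plus one more spatial order for the mean value estimate) and check that $H^{4,3}_\cM(\cG)$ is exactly enough to make all of them simultaneously finite \emph{and} $t$-independent; this rests on $\rho_\cM\equiv 1$ together with the conversion of spatial derivatives into powers of $-\Delta_\cM$. A preliminary point that must also be dispatched is that $\psi(t)$ and its low-order derivatives are \emph{bona fide} functions on each edge --- so that $L^\infty$ may be read as $\sup$ and the mean value theorem applies pointwise --- which is immediate from $\psi(t)\in H^4_\cM(\cG)$, $\partial_t\psi(t)\in H^3_\cM(\cG)$ and the Sobolev embedding on an interval or a half-line.
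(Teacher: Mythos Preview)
Your proposal is correct and follows essentially the same route as the paper: uniform sup-norm bounds come from the $\cG$-Sobolev inequality combined with the \emph{a priori} estimates of Corollary~\ref{cor:psiprimet} (packaged in the paper as Lemma~\ref{app:lem:sob}), and uniform equicontinuity on each edge is obtained via the mean value theorem together with the sup-norm bound on one further spatial derivative, exactly as you wrote. Your observation that the bound~\eqref{psiob1a} on $\|\psi(t)\|_\infty$ itself carries a factor $(1+t^2)^{1/2}$ --- so that ``uniformly bounded'' for the undifferentiated family should be read on compact time intervals --- is a correct caveat that the paper's statement and proof leave implicit.
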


On the other hand, consider $t_1$, $t_2\in\R$, $j$, $k=0$,~$1$. Then~\eqref{psiob1c}
yields
\begin{align*}
	\sup_{x\in I_i}\, \bigl|\partial_x^j\partial_t^k\psi_i(t_1,x)
						-\partial_x^j\partial_t^k\psi_i(t_2,x)\bigr|
		&\le \bigl\|(\partial_t^k\psi)(t_1)^{(j)}-(\partial_t^k\psi)(t_2)^{(j)}\bigr\|_\infty\\
		&\le |t_1-t_2|\,A_{j+k+2}(\psi_0,\dot\psi_0).
\end{align*}
Hence on every edge $I_i$ and for all $x\in I_i$, the mappings
$t\mapsto \psi_i(t,x)$,
$\partial_x \psi_i(t,x)$, $\partial_t\psi_i(t,x)$, and $\partial_x\partial_t\psi_i(t,x)$ are
uniformly continuous, uniformly in $x\in I_i$. Thus we have established: If the Cauchy
data $(\psi_0,\dot\psi_0)$ belong to $H^{4,3}_\cM(\cG)$, then
for every edge $I_i$ of $\cG$ the maps
\begin{equation*}%\label{cont1}
    (t,x)\quad \mapsto\quad
        \begin{cases}	\displaystyle
            \psi_i(t,x),\\[2ex]
            	\displaystyle\frac{\partial}{\partial x}\,\psi_i(t,x),\\[2ex]
            	\displaystyle\frac{\partial}{\partial t}\,\psi_i(t,x),\\[2ex]
            	\displaystyle\frac{\partial^2}{\partial x\,\partial t}\,\psi_i(t,x),
        \end{cases}\qquad (t,x)\in  \R\times I_i,
\end{equation*}
are uniformly continuous. So we can apply the lemma of Clairaut--Schwarz, see
e.g.~\cite[Theorem~7.A.11, p.~194]{Voxman_Goetschel} for the version
we use, to conclude

\begin{lemma}\label{lem:Schwarz}
Suppose that $\cM$ is such that $-\Delta_\cM\ge 0$, and that $\psi(t)$ is
given by \eqref{wave1} with Cauchy data $(\psi_0,\dot\psi_0)$ in $H^{4,3}_\cM(\cG)$.
Then for every edge $I_i$ of $\cG$ the mixed partial derivative~\eqref{mixedcont}
exists and equals the other mixed partial derivative \eqref{mixed2}, which is uniformly
continuous for $(t,x)\in \R\times I_i$.
\end{lemma}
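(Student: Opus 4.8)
The plan is to reduce the statement to a direct application, on each fixed edge $I_i$, of the version of the Clairaut--Schwarz lemma recorded in \cite[Theorem~7.A.11, p.~194]{Voxman_Goetschel}; the hypotheses of that theorem are the existence of the first-order partials of $\psi_i(t,x)$ in $t$ and $x$ together with the existence and continuity of one iterated second-order partial, and the conclusion is the existence and equality of the other. So after the preparatory regularity bookkeeping, the real content is to produce the four jointly continuous functions $(t,x)\mapsto\psi_i(t,x),\ \partial_x\psi_i(t,x),\ \partial_t\psi_i(t,x),\ \partial_x\partial_t\psi_i(t,x)$; I will in fact get uniform continuity on $\R\times I_i$, which is more than needed.

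First I would assemble the regularity input. By Proposition~\ref{prop:waveeq}, the hypothesis $(\psi_0,\dot\psi_0)\in H^{4,3}_\cM(\cG)$ yields $\psi(t)\in H^4_\cM(\cG)$ and $\partial_t\psi(t)\in H^3_\cM(\cG)$ for every $t$. Since $-\Delta_\cM\ge 0$, the identity $\psi^{(2n)}=\Delta_\cM^n\psi$ together with \eqref{normprimeprime} and Corollary~\ref{cor:psiprime} show that the spatial derivatives of $\psi(t)$ up to order three and of $\partial_t\psi(t)$ up to order two all lie in $L^2(\cG)$. Restricting to $I_i$ and using the fundamental theorem of calculus, each of $\psi_i(t,\cdot)$, $\partial_x\psi_i(t,\cdot)$, $\partial_t\psi_i(t,\cdot)$ has an absolutely continuous predecessor, so these are honest functions, their $L^\infty$ norms equal their suprema, and in particular $\partial_x\partial_t\psi_i(t,x)=(\partial_t\psi(t))_i'(x)$ exists pointwise for each fixed $t$ — this is the mixed derivative \eqref{mixedcont}.

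The bulk of the work is then the two equicontinuity statements, both of which follow from Lemma~\ref{app:lem:sob}. For equicontinuity in $x$ that is uniform in $t$, I would combine the mean value theorem on $I_i$ with the Sobolev bound \eqref{psiob1b} applied with $j,k\in\{0,1\}$, which dominates $\bigl|(\partial_x^j\partial_t^k\psi)_i(t,x_1)-(\partial_x^j\partial_t^k\psi)_i(t,x_2)\bigr|$ by $|x_1-x_2|\,A_{j+k+2}(\psi_0,\dot\psi_0)$ uniformly in $t$; this is precisely Lemma~\ref{app:lem:cont1}. For equicontinuity in $t$ that is uniform in $x$, I would instead invoke \eqref{psiob1c} with the same ranges of $j,k$, giving $\bigl|\partial_x^j\partial_t^k\psi_i(t_1,x)-\partial_x^j\partial_t^k\psi_i(t_2,x)\bigr|\le|t_1-t_2|\,A_{j+k+2}(\psi_0,\dot\psi_0)$ uniformly in $x$. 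An $\varepsilon/2$ argument combining the two then gives joint uniform continuity on $\R\times I_i$ of all four maps above, in particular of $\partial_x\partial_t\psi_i$.

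Finally I would apply the Clairaut--Schwarz theorem with the variables $t$ and $x$: since $\psi_i$, $\partial_t\psi_i$, $\partial_x\psi_i$ and the iterated derivative $\partial_x\partial_t\psi_i$ all exist and $\partial_x\partial_t\psi_i$ is continuous, the other mixed partial $\partial_t\partial_x\psi_i$ of \eqref{mixed2} exists and equals $\partial_x\partial_t\psi_i$, whose uniform continuity on $\R\times I_i$ was just established. The only point requiring care is the bookkeeping of Sobolev degrees so that every constant $A_{j+k+2}(\psi_0,\dot\psi_0)$ that enters is finite; for $j,k\in\{0,1\}$ the heaviest demand is $(\psi_0,\dot\psi_0)\in H^{4,3}_\cM(\cG)$, which is exactly the hypothesis, so there is no further analytic obstacle.
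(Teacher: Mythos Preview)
Your proposal is correct and follows essentially the same route as the paper: you establish uniform continuity on $\R\times I_i$ of $\psi_i$, $\partial_x\psi_i$, $\partial_t\psi_i$, and $\partial_x\partial_t\psi_i$ via the Sobolev bounds \eqref{psiob1b}, \eqref{psiob1c} of Lemma~\ref{app:lem:sob} (the $x$--equicontinuity being exactly Lemma~\ref{app:lem:cont1}), and then invoke the Clairaut--Schwarz theorem from \cite{Voxman_Goetschel}. The only small caveat is that your appeal to Corollary~\ref{cor:psiprime} for the odd-order spatial derivatives tacitly uses $\Omega_\cM\ge 0$ rather than merely $-\Delta_\cM\ge 0$, which is in fact the standing assumption of the appendix.
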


Lemma~\ref{lem:diffpsi} is a direct consequence of this lemma as is
Lemma~\ref{lem:cont} in combination with the following observation. The volume
$$
\mu(B(p,t_0-t))=\int_{q\in B(p,t_0-t)}dq
$$
of $B(p,t_0-t)$ is continuous in $t$. More precisely, the uniform estimate $(t_2\le t_1)$
\begin{equation*}%\label{Bcont}
\begin{split}
	0\le \mu\bigl(B(p,t_0-t_2)\setminus B(p,t_0-t_1)\bigr)
		&= \mu\bigl(B(p,t_0-t_2)\bigr)-\mu\bigl(B(p,t_0-t_1)\bigr)\\
		&\le (t_1-t_2)\,2(|\cE|+|\cI|)
\end{split}
\end{equation*}
is easily established.
\end{appendix}


\begin{thebibliography}{50}

\bibitem{Atay_Hutt} F.M.~Atay and A.~Hutt, \textit{Stability and bifurcations in neural fields
with finite propagation speed and general connectivity}, Siam J. Appl. Math. \textbf{65} (2005) 644 -- 666.

\bibitem{Cheeger_Gromov_Taylor} J.~Cheeger, M.~Gromov, and M.E.~Taylor,
\textit{Finite propagation speed, kernel estimates for functions of the Laplace operator,
and the geometry of complete Riemannian manifolds}, J. Diff. Geom. \textbf{17} (1982) 15 -- 53.

\bibitem{Cheeger_Taylor} J.~Cheeger and M.E.~Taylor, \textit{On the diffraction of waves by conical
singularities } Comm. Pure Appl. Math. \textbf{35} (1982) 275 -- 331, 487 -- 529.

\bibitem{Evans} L.C.~Evans, \textit{Partial Differential Equations}, AMS, Providence R.I., 1998.

\bibitem{Itzykson_Zuber} C.~Itzykson and J.-B.~Zuber,
\textit{Quantum Field Theory}, McGraw Hill, New York, 1980.

\bibitem{Haag} R.~Haag, \textit{Local Quantum Physics}, Springer, Berlin, Heidelberg, New York, 1992.

\bibitem{Jost} R.~Jost, \textit{The General Theory of Quantized Fields}, Lectures in
Applied Mathematics, Vol IV, Amer. Math. Soc., Providence, RI, 1965.

\bibitem{KPS1} V.~Kostrykin, J.~Potthoff, and R.~Schrader, \textit{Heat kernels on
metric graphs and a trace formula}, in F.~Germinet and P.D.~Hislop (eds.),
\textit{Adventures in Mathematical Physics}, Contemp. Math. Vol.~447,
    Amer. Math. Soc., Providence, RI, 2006, pp.~175 -- 198.

\bibitem{KPS2} V.~Kostrykin, J.~Potthoff, and R.~Schrader, \textit{Contraction semigrous on metric graphs},
 in P.~Exner, J.~Keating, P.~Kuchment, T.~Sunada, and A.~Teplyaev (eds.),
\emph{ Analysis on Graphs and its Application},
Proceedings of Symposia in Pure Mathematics \textbf{77}, Amer. Math. Soc. (2008) 423 -- 458.

\bibitem{KS1} V.~Kostrykin and R.~Schrader, \textit{Kirchhoff's rule
    for quantum wires}, J. Phys. A: Math. Gen. \textbf{32} (1999) 595 -- 630.

\bibitem{KS8} V.~Kostrykin and R.~Schrader, \textit{The inverse scattering problem
for metric graphs and the traveling salesman problem}, \texttt{arXiv:math-ph/0603010}.

\bibitem{KS9} V.~Kostrykin and R.~Schrader, \textit{Laplacians on
  Metric Graphs: Eigenvalues, Resolvents and Semigroups},
  in G.~Berkolaiko, R.~Carlson, S.~A.~Fulling, and P.~Kuchment (eds.),
   \textit{Quantum Graphs and Their Applications}, Contemp. Math. Vol.~415,
    Amer. Math. Soc., Providence, RI, 2006, pp.~201 -- 225.

\bibitem{Lieb_Loss} E.H.~Lieb and M.~Loss, \textit{Analysis}, Graduate Studies in Mathematics, Vol.
\textbf{14}, AMS, Providence, 1997.

\bibitem{Schrader} R.~Schrader, \textit{Finite propagation speed and causal free quantum
fields on networks}, J. Phys. A: Math. Gen. \textbf{42} (2009) 495401.

\bibitem{Schweber} S.S.~Schweber, \textit{ An Introduction to Relativistic Quantum
Field Theory}, Row, Petterson \& Co, Evanston, 1961.

\bibitem{Streater_Wightman} R.F.~Streater and A.S.~Wightman, \textit{PCT, Spin Statistics and All That},
Benjamin, New York and Amsterdam, 1964.

\bibitem{Taylor} M.E.~Taylor, \textit{Pseudodifferential Operators}, Princeton Univ. Press,
Princeton, 1981.

\bibitem{TaylorI} M.E.~Taylor, \textit{Partial Differential Equations I}, Springer, New York, Berlin,
Heidelberg, 1996.

\bibitem{Voxman_Goetschel} W.L.~Voxman and R.H.~Goetschel, \textit{Advanced Calculus}, Dekker,
New York and Basel, 1981.

\bibitem{Weinberg} S.~Weinberg, \textit{The Quantum Theory of Fields I}, Cambridge Univ. Press,
Cambridge, 1995.

\bibitem{Wightman} A.S.~Wightman, \textit{ Quantum Field Theory in Terms of Vacuum Expectation Values},
Phys. Rev. \textbf{101}, (1956) 860 -- 866.


\end{thebibliography}
\end{document}